\theoremstyle{definition}
\newtheorem{theorem}{Theorem}
\newtheorem{definition}[theorem]{Definition}
\newtheorem{lemma}[theorem]{Lemma}
\newcommand{\ie}{\textit{i.e.}}
\newcommand{\eg}{\textit{e.g.}}
\newcommand{\argmin}[1]{\mathop{\mathrm{argmin}}_{#1}}
\newcommand{\argmax}[1]{\mathop{\mathrm{argmax}}_{#1}}
\newcommand{\qspspace}[1]{\mathcal{C}_{#1}}
\newcommand{\qspobj}[2]{\Re\left[\left\langle0\left|U_{#1}\left({#2}\right)\right|0\right\rangle\right]}
\newcommand{\qspdeg}{d}
\renewcommand{\Re}{\mathrm{Re}}
\renewcommand{\Im}{\mathrm{Im}}
\newcommand{\I}{\mathrm{i}}
\newcommand{\polylog}{\mathrm{poly}\,\log}
\newcommand{\mc}[1]{\mathcal{#1}}
\newcommand{\wt}[1]{\widetilde{#1}}
\newcommand{\abs}[1]{\left\lvert#1\right\rvert}
\newcommand{\norm}[1]{\left\lVert#1\right\rVert}
\newcommand{\ud}{\,\mathrm{d}}
\newcommand{\Or}{\mathcal{O}}
\newcommand{\NN}{\mathbb{N}}
\newcommand{\RR}{\mathbb{R}}
\newcommand{\CC}{\mathbb{C}}
\newcommand{\seccheby}{R}
\newcommand{\BQIC}{Berkeley Center for Quantum Information and Computation, Berkeley, California 94720 USA}
\newcommand{\DeptMath}{Department of Mathematics, University of California, Berkeley, California 94720 USA}
\newcommand{\LBLMath}{Computational Research Division, Lawrence Berkeley National Laboratory, Berkeley, CA 94720, USA}
\newcommand{\DeptChem}{Department of Chemistry, University of California, Berkeley, California 94720 USA}
\begin{document}
        
\title{Efficient phase-factor evaluation in quantum signal processing}

\author{Yulong Dong$^{1,2}$} 
\author{Xiang Meng$^{3}$}
\author{K. Birgitta Whaley$^{1,2}$}
\author{Lin Lin$^{3,4}$}

\affiliation{$^1$\BQIC}
\affiliation{$^2$\DeptChem}
\affiliation{$^3$\DeptMath}
\affiliation{$^4$\LBLMath}

\date{\today}

\begin{abstract}
  Quantum signal processing (QSP) is a powerful quantum algorithm to exactly implement matrix polynomials on quantum computers.  Asymptotic analysis of quantum algorithms based on QSP has shown that asymptotically optimal results can in principle be obtained for a range of tasks, such as Hamiltonian simulation and the quantum linear system problem.  A further benefit of QSP is that it uses a minimal number of ancilla qubits, which facilitates its implementation on near-to-intermediate term quantum architectures. However, there is so far no classically stable algorithm allowing computation of the phase factors that are needed to build QSP circuits. Existing methods require the usage of variable precision arithmetic and can only be applied to polynomials of relatively low degree. We present here an optimization based method that can accurately compute the phase factors using standard double precision arithmetic operations. We demonstrate the performance of this approach with applications to Hamiltonian simulation, eigenvalue filtering, and the quantum linear system problems. Our numerical results show that the optimization algorithm can find phase factors to accurately approximate polynomials of degree larger than $10,000$ with error below $10^{-12}$. 
\end{abstract}

\maketitle

\section{Introduction}
Recent progress in quantum algorithms has enabled construction of efficient quantum circuit representations for a large class of non-unitary matrices, which significantly expands the potential range of applications of quantum computers beyond the original goal of efficient simulation of unitary dynamics envisaged by Benioff \cite{Benioff1980} and Feynman \cite{Feynman1982}. The basic tool for representation of non-unitary matrices and hence of non-unitary quantum operators is called block-encoding \cite{GilyenSuLowEtAl2019}. It describes the process in which one embeds a non-unitary matrix $A$ into the upper-left block of a larger unitary matrix $U_A$, and then expresses the quantum circuit in terms of $U_A$. 

Computation of matrix functions, \ie,  evaluation of $F(A)$, where $F(x)$ is a smooth (real-valued or complex-valued) function,  is a central task in numerical linear algebra \cite{Higham2008}.  Numerous computational tasks can be performed by generating approximations to matrix functions.  These include application of a broad range of operators to quantum states: \eg, $e^{-\I t A}$ for the Hamiltonian simulation problem; $e^{-\beta A}$ for the thermal state preparation problem; $A^{-1}$ for the matrix inverse (also called the quantum linear system problem, QLSP); and the spectral projector of $A$ for the principal component analysis, to name a few.

Several routes to 
construct a quantum circuit for $f(A)$ have been developed.  These include methods using phase estimation (\eg, the HHL algorithm \cite{HarrowHassidimLloyd2009} for the matrix inverse), the method of linear combination of unitaries (LCU) \cite{ChildsKothariSomma2017,BerryChildsKothari2015}, and  the method of quantum signal processing (QSP)  \cite{LowYoderChuang2016,LowChuang2017,GilyenSuLowEtAl2019}. 
Among these methods, QSP stands out as so far the most general approach capable of representing a broad class of matrix functions via the eigenvalue or singular value transformations of $A$, while using a minimal number of ancilla qubits.
The basic idea of QSP is to approximate the desired function $F(x)$ by a polynomial function $f(x)$, and then find a circuit to encode $f(A)$ \textit{exactly} (assuming an exact block-encoding $U_A$). 
Treating the block-encoding $U_A$ as an oracle, the application of QSP has given rise to asymptotically optimal Hamiltonian simulation algorithms \cite{ChildsMaslovNamEtAl2018,HaahHastingsKothariEtAl2018}. 
Applications have also been made to solving QLSP  \cite{GilyenSuLowEtAl2019,Haah2019}, and to eigenvalue filtering \cite{LinTong2019}. In particular, the eigenvalue filtering approach of Ref. \cite{LinTong2019} does not directly approximate $A^{-1}$, but approximates a spectral projection operator, leading also to a quantum algorithm for solving QLSP with near-optimal complexity without the need of involving complex procedures such as variable time amplitude amplification \cite{Ambainis2012}.

Despite these fast growing successes, practical application of QSP on quantum computers, whether these are near- or long-term machines, still faces a significant challenge. A QSP circuit is defined using a series of adjustable phase factors. Once these phase factors are known, the QSP circuit can be directly implemented using $U_A$ together with a set of multi-qubit control gates and single qubit phase rotation gates. However, the inverse problem, \ie, finding the phase factors associated with a given polynomial function $f(x)$ is extremely difficult, to the extent that in practice very few applications of QSP have been made to date.
The original work of Low and Chuang \cite{LowChuang2017} demonstrated the existence of the phase factors but was not constructive. Initial efforts to find constructive procedures were not encouraging.  Thus it was reported in  \cite{ChildsMaslovNamEtAl2018} that it was prohibitive to obtain a QSP circuit of length that is larger than $30$ for the Jacobi-Anger expansion \cite{LowChuang2017} of the Hamiltonian simulation problem, and concluded ``the difficulty of computing the angles needed to perform the QSP algorithm prevents us from taking full advantage of the algorithm in practice, so it would be useful to develop a more efficient classical procedure for specifying these angles''. 

The first constructive procedure to find phase factors was given in \cite{GilyenSuLowEtAl2019}, with a procedure which requires a recursive solution of roots of high degree polynomials to high precision, counting multiplicities  of the roots. Therefore this procedure is not stable for representing high degree polynomials using QSP. 
Significant improvement has recently been made by Haah \cite{Haah2019},  who proposed a numerical algorithm to compute phase factors up to order $\sim 1000$, provided that all arithmetic operations can be computed with sufficiently high precision.  Specifically, the number of classical bits needed for this scales as $\Or(d \log (d/\epsilon))$, where $d$ is the degree of the polynomial $f$, and $\epsilon$ is the target accuracy. Therefore the algorithm is still not \textit{classically} numerically stable (a numerically stable algorithm should use no more than $\Or(\polylog (d/\epsilon))$ classical bits) \cite{Higham2002}. 
Haah's algorithm was implemented in Ref. \cite{Haah2019} using $\textsf{Mathematica}$ and employing the variable precision arithmetic capability of this.  The running time is observed to be  $\Or(d^3)$.

In this paper, we demonstrate that the phase factors can be accurately determined with standard double precision arithmetic operations, even when the degree of the polynomial $f(x)$ is very high ($\gtrsim 10,000$) and when a very high precision ($L^\infty$ error of function approximation $\lesssim 10^{-12}$) is required. We achieve this with a standard optimization approach that only minimizes a loss function, rather than recursively determining the phase terms.  This minimization involves the multiplication of matrices in SU(2) and is thus  numerically stable. We iteratively refine the phase factors to minimize the loss functions.  However, since the optimization of the phase factors is a very nonlinear procedure, the initial guess must be carefully chosen. 
Indeed, if we randomly select the initial guess, the accuracy of the resulting phase factors is usually very low.
We also find that under proper conditions, the QSP phase factors exhibit an inversion symmetry structure with respect to the center. This should be respected in the initial guess and preserved throughout the optimization procedure. We combine these two features to provide a simple, and yet highly effective choice of the initial guess. 

We demonstrate here the performance of our optimization based approach to determine the phases for QSP algorithms with examples for Hamiltonian simulation, eigenstate filtering, and matrix inversion. We show that our algorithm can significantly outperform existing approaches using variable precision arithmetic operations \cite{GilyenSuLowEtAl2018,Haah2019}. Numerical observation indicates that the computational cost of our method scales only quadratically as $\Or(d^2)$, while the number of classical bits used remains constant (using the standard double precision, \ie, 64 bits, arithmetic operations) as $d$ increases. 

We note that the previous algorithms for finding the phase factors require an analytic expansion of the smooth function $F(x)$ into polynomials. For instance, the Jacobi-Anger expansion is used for Hamiltonian simulation \cite{LowChuang2017,Haah2019}. When $F(x)$ is defined only on a sub-interval of $[-1,1]$, as for, \eg,  matrix inversion, where $F(x)=1/x$ is not well defined at $x=0$, one must first find an approximate smooth function and then perform expansion with respect to this approximate smooth function. Both steps introduce additional approximations and lead to inefficiencies in implementation. As an alternative, we propose here to use the Remez exchange algorithm \cite{Remez1934} to directly find the minimax approximation to $F(x)$ on $[-1,1]$ or a given sub-interval. Our numerical evidence shows that this not only streamlines the process of finding QSP factors, but that the use of the Remez algorithm can also lead to polynomials of significantly lower degree. 

Besides the inversion symmetry, we also find that the phase factors used for approximating smooth functions can decay rapidly away from the center. We find that the decay of the phase factors is directly linked to the decay of the coefficients in the Chebyshev expansion of the target function. This enables us to design a ``phase padding'' procedure, which identifies an initial guess of the QSP phase factors for a high degree polynomial, given the corresponding phase factors for a relatively low degree polynomial.

Throughout this paper we shall use the following notation:  $N=2^{n},M=2^m$, and $[N]=\Set{0,1,\ldots,N-1}$, with $n$ the number of logical qubits (also called system qubits), and $m$ the number of qubits added to construct the unitary $U_A$. We shall refer to the latter as the ``ancilla qubits for block-encoding'', which is to be distinguished with additional ancilla qubits needed for quantum signal processing.
$T_{d}$ and $\seccheby_{d}$ are Chebyshev polynomials of degree $d$ of the first and second kind respectively.
For a matrix $A$, the transpose, Hermitian conjugate and complex conjugate are denoted by $A^{\top}$, $A^{\dag}$, $A^*$, respectively. 

\section{Review of quantum signal processing}
\subsection{Block-encoding and qubitization}\label{sec:blockencode}

Block-encoding is a general technique to encode a non-unitary matrix on a quantum computer. Let $A\in \CC^{N\times N}$ be an $n$-qubit Hermitian matrix. If we can find an $(m+n)$-qubit unitary matrix $U\in\CC^{MN\times MN}$ such that
\begin{equation}
U_A=\left(\begin{array}{cc}
{A} & {\cdot} \\
{\cdot} & {\cdot}
\end{array}\right)
\label{eqn:block_encode_exact_matrix}
\end{equation}
holds, \ie, $A$ is the upper-left matrix block of $U_A$, then we may get access to $A$ via the unitary matrix $U_A$. In particular,
\begin{equation}
A=\left(\langle 0^m | \otimes I_n\right) U_A \left( | 0^m \rangle \otimes I_n \right).
\label{eqn:block_encode_exact}
\end{equation}
In general, the representation \eqref{eqn:block_encode_exact} may not
exist, \eg, when the operator norm $\norm{A}_2$ is larger than $1$.  So
the definition of block-encoding should be relaxed as follows
\cite{LowChuang2017,GilyenSuLowEtAl2019}: if we can find $\alpha, \epsilon \in \mathbb{R}_+$, a state $|G\rangle \in \mathbb{C}^M$, and an $(m+n)$-qubit matrix $U_A$ such that
\begin{equation}
\Vert A - \alpha \left(\langle G | \otimes I_n\right) U_A \left( | G \rangle \otimes I_n \right) \Vert \leq \epsilon,
\label{eqn:block_encoding}
\end{equation}
then $U_A$ is called an $(\alpha, m, \epsilon)$-block-encoding of $A$. Here $\ket{G}$ is referred to as the signal state (for block-encoding). Then \cref{eqn:block_encode_exact} gives a $(1,m,0)$-block-encoding of $A$ with $\ket{G}=\ket{0^m}$. If $U_A$ is Hermitian, it is called a Hermitian block-encoding. In particular, all the eigenvalues of a Hermitian block-encoding $U_A$ are $\pm 1$. 
For simplicity of presentation, in the following we present the explicit construction of block-encoding and qubitization for Hermitian $U_A$. We shall then briefly discuss the generalization to non-Hermitian $U_A$ and refer the reader to \cref{app:qsp_general_block_encode} for full details of this.

As an example, assume that $A$ is written as the linear combination of Pauli operators \cite{BerryChildsKothari2015,ChildsKothariSomma2017} with real coefficients, as
\begin{equation}
A=\sum_{l\in [M]} c_l U_l, \quad c_{l}\ge 0.
\label{eqn:lc_pauli}
\end{equation}
Here $U_{l}$ is a multi-qubit Pauli operator, which is unitary and Hermitian. We assume the availability of two oracles. The first one is the $(m+n)$-qubit select oracle:
\begin{equation}
U_{\mathrm{SEL}}=\sum_{l\in [M]} \ket{l}\bra{l}\otimes U_{l}.
\label{eqn:prepare_oracle}
\end{equation}
$U_{\mathrm{SEL}}$ implements the selection of the unitary $U_{l}$ on conditioned on the state of the $m$-qubit signal register. 
The second is the $m$-qubit prepare oracle that generates a specific superposition of the $m$-qubit signal states (note that $\ket{l=0}\equiv \ket{0^m}$):
\begin{equation}
U_{\mathrm{PREP}}\ket{0^m}=\frac{1}{\sqrt{\norm{c}_1}}\sum_{l\in[M]} \sqrt{c_l}\ket{l},
\label{eqn:select_oracle}
\end{equation}
where the $1$-norm is $\norm{c}_1=\sum_{l\in[M]}\abs{c_l}$. 
Then defining
\begin{equation}
U_A=(U_{\mathrm{PREP}}^{\dagger}\otimes I_n) U_{\mathrm{SEL}}(U_{\mathrm{PREP}}\otimes I_n),
\label{eqn:UA_Pauli}
\end{equation}
we may verify that $U_A$ is a $(\norm{c}_1,m,0)$-Hermitian block encoding of $A$. 

We also define
\begin{equation}
U_{\Pi}=2\ket{0^m}\bra{0^m}\otimes I_n-I_m\otimes I_n.
\label{eqn:U_Pi}
\end{equation}
Both $U_{\Pi}$ and  $U_A$ are unitary and Hermitian. Then Jordan's lemma \cite{Jordan1875} states that the entire Hilbert space $\mc{H}=\CC^{MN}$ can be decomposed into orthogonal subspaces $\mc{H}_{j}$ invariant under $U_{\Pi}$ and $U_A$,
where each $\mc{H}_{j}$ has dimension 1 or 2. Restricted to each irreducible two-dimensional subspace $\mc{H}_j$, with a properly chosen basis denoted by $\mc{B}_j$, the matrix representations of $U_{\Pi}$ and $U_A$ are
\begin{equation}
[U_{\Pi}]_{\mc{B}_j}=\left(\begin{array}{cc}{1} & {0} \\ {0} & {-1}\end{array}\right), 
[U_A]_{\mc{B}_j}=\left(\begin{array}{cc}{\lambda_j} & {-\sqrt{1-\lambda_j^{2}}} \\ {- \sqrt{1-\lambda_j^{2}}} & {-\lambda_j}\end{array}\right).
\label{eqn:twoblock_representation}
\end{equation}
Here $\lambda\in[-1,1]$, and a potential phase factor in the off diagonal elements of $[U_A]_{\mc{B}_j}$ can be absorbed into the choice of the basis. It is worth noting that we can always choose  $[U_{\Pi}]_{\mc{B}_j}$ to be a $\sigma_z$ matrix. Given the eigendecomposition $\alpha^{-1}A=\sum_{j\in[N]} \lambda_j \ket{\psi_j}\bra{\psi_j}$, there are exactly $N$ such two-dimensional subspaces $\mc{H}_j$ of the full Hilbert space $\mc{H}$. Each subspace is associated with a vector $\ket{0^m}\ket{\psi_j}$ in the $(m+n)$-qubit space and \cref{eqn:twoblock_representation} gives
\begin{equation}
(\bra{0^m}\otimes I_n)U_A\ket{0^m}\ket{\psi_j}=\lambda_j\ket{\psi_j}=\alpha^{-1}A\ket{\psi_j}.
\label{eqn:eigen_UA}
\end{equation}
Each subspace $\mc{H}_j$ is also the invariant subspace of the operator $\Omega := U_{\Pi} U_A$, which is referred to as the \textit{iterate} \cite{low2019hamiltonian}. Furthermore, when restricted to $\mc{H}_j$, the iterate $\Omega$ is a rotation matrix with eigenvalues $e^{\pm \I \arccos(\lambda_j)}$. 
Then the combined space $\oplus_{j\in[N]}\mc{H}_j$ forms a $2N$-dimensional subspace of $\mc{H}$. This introduces an additional ancillary qubit, so that the total number of qubits is now $n+m+1$. Each eigenvalue $\lambda_j$ is associated with two branches and hence with an $\text{SU}(2)$ matrix via the mapping $\lambda_j=\cos \theta_j\mapsto e^{\pm \I \theta_j}$. This technique is called qubitization \cite{low2019hamiltonian}. 

Although the decomposition in \cref{eqn:twoblock_representation} formally involves the eigenvalue $\lambda_j$ of $A$ and the proper basis $\mc{B}_j$, it is important that we do not necessarily need the eigendecomposition of $A$ explicitly. In fact, the key advantage of qubitization is that one can perform the eigenvalue transformations for all eigenvalues simultaneously by means of the quantum signal processing approach.

\subsection{Quantum signal processing}\label{sec:qsp}

Given the above constructions of block-encoding and qubitization, quantum signal processing (QSP) then considers the following parameterized circuit consisting of $\qspdeg$ iterates and $\qspdeg+1$ rotations that are interleaved in alternating sequence:
\begin{equation}
U_{\wt{\Phi}}=\left[\prod_{i=0}^{\qspdeg-1}(e^{\I \tilde{\phi}_i U_{\Pi}} U_{\Pi} U_A)\right] e^{\I \tilde{\phi}_\qspdeg U_{\Pi}}.
\label{eqn:QSP_representation_1}
\end{equation}
Here $\tilde{\phi}_i\in\RR$, and $\wt{\Phi}=(\tilde{\phi}_0,\ldots,\tilde{\phi}_\qspdeg)$ is the vector of phase factors that will specify the polynomial $f(x)$ approximating the desired function $F(x)$. The use of the notation $\tilde{\phi}$ here is due to the fact that there are multiple sets of phase factors, which can be deduced from each other. In this section we use different notations such as $\tilde{\phi},\phi,\varphi$ to distinguish these phase factors, and record their relation explicitly.

We now summarize the construction of these phase factors for a non-unitary but Hermitian operator $A$, according to the approach of Ref. \cite{GilyenSuLowEtAl2019}.
For any $\tilde{\phi}\in\RR$ and $n$-qubit state $\ket{\psi}$, we have 
$$
e^{\I \tilde{\phi} U_{\Pi}}  \ket{0^m}\ket{\psi}=e^{\I \tilde{\phi}} \ket{0^m}\ket{\psi}.
$$
For any $m$-qubit state $\ket{\perp^m}$ satisfying $\braket{0^m|\perp^m}=0$, we have
$$
e^{\I \tilde{\phi} U_{\Pi}} \ket{\perp^m}\ket{\psi}=e^{-\I \tilde{\phi}}\ket{\perp^m}\ket{\psi}.
$$
Therefore
\begin{displaymath}
U_{\Pi}=-\I e^{\I \frac{\pi}{2} U_{\Pi}}.
\end{displaymath}
We may then absorb $U_{\Pi}$ into the rotation matrix as
\begin{equation}
U_{\wt{\Phi}}=(-\I)^\qspdeg\left[\prod_{i=0}^{\qspdeg-1}(e^{\I \varphi_i U_{\Pi}}  U_A) \right]e^{\I \varphi_\qspdeg U_{\Pi}}.
\label{eqn:QSP_representation_2}
\end{equation}
Here we have redefined the phase factors as $\varphi_i=\tilde{\phi}_i+\frac{\pi}{2}$ for $i=0,\ldots,\qspdeg-1$, and $\varphi_\qspdeg=\tilde{\phi}_\qspdeg$.
The global phase factor $(-\I)^\qspdeg$ can be optionally discarded and we shall do so below.

Then we may readily check that the matrix $e^{\I \varphi U_{\Pi}} $ has a $(1,1,0)$-block-encoding as illustrated in \cref{fig:phase_circuit}.
Here the control gate represents an $(m+1)$-qubit Toffoli gate (with the usual convention that open circles represent the target qubit being flipped when the control bits are zero).  

\begin{figure}[htb]
\begin{center}
\[
\Qcircuit @C=0.7em @R=1.0em {
\lstick{\ket{0}} & \targ & \gate{e^{-\I \varphi \sigma_z}} & \targ & \qw\\
& \ctrlo{-1} & \qw & \ctrlo{-1} & \qw\\
{\inputgroupv{2}{4}{.8em}{1.5em}{\ket{0^m}}}& \ctrlo{-1} & \qw & \ctrlo{-1} & \qw\\
& \ctrlo{-1} & \qw & \ctrlo{-1} & \qw\\
&\qw&\qw&\qw&\qw\\
{\inputgroupv{5}{7}{.8em}{1.em}{\ket{\psi}}}&\qw&\qw&\qw&\qw\\
&\qw&\qw&\qw&\qw\\
}
\]
\end{center}
\caption{Quantum circuit for block-encoding $e^{\I \varphi U_{\Pi}}$. The three distinct groups of lines represent $1$,$m$,$n$ qubits, respectively. }
\label{fig:phase_circuit}
\end{figure}
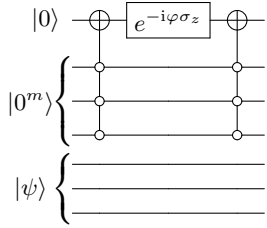

Using the circuit in \cref{fig:phase_circuit}, we may then implement the $(n+m)$-qubit unitary operator $U_{\wt{\Phi}}$ of  \cref{eqn:QSP_representation_2} using only one additional ancilla qubit and the circuit in \cref{fig:qsp_circuit}  \cite{GilyenSuLowEtAl2018}.

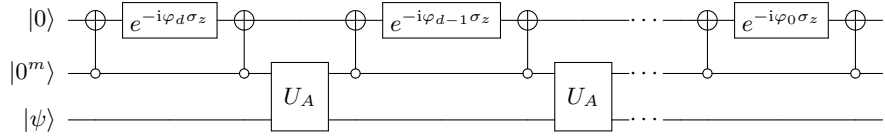
\begin{figure*}[htb]
\begin{center}
\[
\Qcircuit @C=0.7em @R=1.0em {
 \lstick{\ket{0}}& \targ & \gate{e^{-\I \varphi_{\qspdeg} \sigma_z}} & \targ & \qw & \targ & \gate{e^{-\I \varphi_{\qspdeg-1} \sigma_z}} & \targ & \qw & \qw &\raisebox{0em}{$\cdots$}&&\qw  &\targ & \gate{e^{-\I \varphi_0 \sigma_z}} & \targ & \qw  \\
\lstick{\ket{0^m}}& \ctrlo{-1} & \qw & \ctrlo{-1} & \multigate{1}{U_A} & \ctrlo{-1} & \qw & \ctrlo{-1} & \multigate{1}{U_A} &\qw &\raisebox{0em}{$\cdots$} &&\qw & \ctrlo{-1} & \qw & \ctrlo{-1} & \qw \\
\lstick{\ket{\psi}}&\qw&\qw&\qw& \ghost{U_A} &\qw&\qw&\qw& \ghost{U_A}&\qw &\raisebox{0em}{$\cdots$} &&\qw&\qw&\qw& \qw&\qw\\
}
\]
\end{center}
\caption{Quantum circuit for quantum signal processing of a general matrix polynomial with a Hermitian block-encoding $U_A$. }
\label{fig:qsp_circuit}
\end{figure*}

Ref. \cite{GilyenSuLowEtAl2018} investigated the general question as to which class of functions can be block-encoded by $U_{\wt{\Phi}}$ for some choice of phase factors. First, each $\mc{H}_j$ is an invariant subspace of $U_{\wt{\Phi}}$. So the upper-left element of $U_{\wt{\Phi}}$ acting on $\mc{H}_j$ is a function of the eigenvalue $\lambda_j$. Thus we see that qubitization reduces the problem of  representing a matrix function on an $n$-qubit system to a representation problem in $\mathrm{SU}(2)$, which can be carried out on classical computers.  We now state main theorem of QSP from Ref. \cite{GilyenSuLowEtAl2018} below in \cref{thm:qsp}. 
\begin{theorem}{(\textbf{Quantum Signal Processing in SU(2)} \cite[Theorem 3]{GilyenSuLowEtAl2018})}\label{thm:qsp}
    For any $P, Q \in \mathbb{C}[x]$ and a positive integer $\qspdeg$ such that \textit{(1)} $\deg(P) \leq \qspdeg, \deg(Q) \leq \qspdeg-1$, \textit{(2)} $P$ has parity $(\qspdeg\mod2)$ and $Q$ has parity $(\qspdeg-1 \mod 2)$, \textit{(3)} $|P(x)|^2 + (1-x^2) |Q(x)|^2 = 1, \forall x \in [-1, 1]$. Then, there exists a set of phase factors $\Phi := (\phi_0, \cdots, \phi_\qspdeg) \in [-\pi, \pi)^{\qspdeg+1}$ such that
\begin{equation}
\label{eq:qsp-gslw}
\begin{aligned}
        &U_\Phi(x) = e^{\I \phi_0 \sigma_z} \prod_{j=1}^{\qspdeg} \left[ W(x) e^{\I \phi_j \sigma_z} \right]\\
        &= \left( \begin{array}{cc}
        P(x) & \I Q(x) \sqrt{1 - x^2}\\
        \I Q^*(x) \sqrt{1 - x^2} & P^*(x)
        \end{array} \right)
\end{aligned}
\end{equation}
where 
\begin{displaymath}
W(x) = e^{\I \arccos(x) \sigma_x}=\left(\begin{array}{cc}{x} & {\I \sqrt{1-x^{2}}} \\ {\I \sqrt{1-x^{2}}} & {x}\end{array}\right).
\end{displaymath}
\end{theorem}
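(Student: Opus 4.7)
The plan is to prove existence of $\Phi$ by induction on $\qspdeg$, stripping one factor $W(x)e^{\I\phi_j\sigma_z}$ off the right end of the product at each step. The base case $\qspdeg=0$ is essentially trivial: conditions (1) and (2) force $Q\equiv 0$ and $P$ to be a constant, and condition (3) then gives $|P|=1$, so taking $\phi_0:=\arg P\in[-\pi,\pi)$ makes $U_{(\phi_0)}(x)=e^{\I\phi_0\sigma_z}$ equal to the desired diagonal matrix.

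For the inductive step, suppose the claim holds at degree $\qspdeg-1$. Given $(P,Q)$ satisfying the three hypotheses at degree $\qspdeg$, the idea is to choose $\phi_\qspdeg$ and right-multiply the target matrix by $(W(x)e^{\I\phi_\qspdeg\sigma_z})^{-1}$. A direct $2\times 2$ computation shows that the product is again of the SU(2) form appearing in the theorem, with upper row controlled by
\[
\tilde P = xe^{-\I\phi_\qspdeg}P + (1-x^2)e^{\I\phi_\qspdeg}Q, \quad \tilde Q = -e^{-\I\phi_\qspdeg}P + xe^{\I\phi_\qspdeg}Q.
\]
Once I verify that $(\tilde P,\tilde Q)$ satisfies all three hypotheses at degree $\qspdeg-1$, the inductive hypothesis supplies $(\phi_0,\ldots,\phi_{\qspdeg-1})$ representing this reduced matrix, and appending $\phi_\qspdeg$ on the right recovers the original factorization.

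I expect the main technical hurdle to be the degree reduction of $\tilde P$ and $\tilde Q$. The parity hypothesis (2) propagates cleanly: $\tilde P$ inherits parity $(\qspdeg-1)\bmod 2$ and $\tilde Q$ inherits parity $(\qspdeg-2)\bmod 2$, which automatically annihilates the would-be $x^\qspdeg$ coefficient of $\tilde P$ and the $x^{\qspdeg-1}$ coefficient of $\tilde Q$. What remains is the $x^{\qspdeg+1}$ coefficient of $\tilde P$, namely $e^{-\I\phi_\qspdeg}p_\qspdeg-e^{\I\phi_\qspdeg}q_{\qspdeg-1}$, and the $x^\qspdeg$ coefficient of $\tilde Q$, namely $-e^{-\I\phi_\qspdeg}p_\qspdeg+e^{\I\phi_\qspdeg}q_{\qspdeg-1}$, where $p_\qspdeg$ and $q_{\qspdeg-1}$ are the leading coefficients of $P$ and $Q$. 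Both vanish simultaneously under the single condition $e^{2\I\phi_\qspdeg}=p_\qspdeg/q_{\qspdeg-1}$; such a real $\phi_\qspdeg$ exists because comparing the $x^{2\qspdeg}$ coefficient in the polynomial identity $PP^*+(1-x^2)QQ^*=1$ (equivalent to condition (3)) forces $|p_\qspdeg|=|q_{\qspdeg-1}|$, with the degenerate case $p_\qspdeg=q_{\qspdeg-1}=0$ permitting an arbitrary choice. Finally, the normalization $|\tilde P|^2+(1-x^2)|\tilde Q|^2=1$ follows as a polynomial identity from $\det(MR^{-1})=\det M\cdot\det R^{-1}=1$, and the induction closes.
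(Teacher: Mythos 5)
Your proposal is correct and follows essentially the same route as the constructive reduction the paper relies on (the GSLW argument, summarized in \cref{app:implement}, \cref{alg:GSLW}, Step~2): the same choice $e^{2\I\phi_\qspdeg}=p_\qspdeg/q_{\qspdeg-1}$, the same updated pair $(\tilde P,\tilde Q)$, and the same use of parity plus the $x^{2\qspdeg}$ coefficient of condition~(3) to get $|p_\qspdeg|=|q_{\qspdeg-1}|$. No gaps.
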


The proof of \cref{thm:qsp}  is constructive and, as shown explicitly in  Ref. \cite{GilyenSuLowEtAl2018}, it yields an algorithm to compute the phase factor vector $\Phi$ once the polynomials $P,Q\in \CC[x]$ are given. The algorithm of Ref. \cite{GilyenSuLowEtAl2018} is summarized in \cref{app:implement} (\cref{alg:GSLW}, with modifications to enhance the numerical stability). We note that these phase factors are unique, modulo certain trivial equivalence relations (\cref{app:qsp_unique}).

In order to connect \cref{thm:qsp} with the representation of $U_{\tilde{\Phi}}$ in \cref{eqn:QSP_representation_1}, we consider the matrix representation of $U_{\wt{\Phi}}$ restricted to $\mc{H}_j$, let $x=\lambda_j$, and use the following identity
\begin{equation}
e^{\I \arccos(x) \sigma_x}=e^{-\I \frac{\pi}{4} \sigma_z} \left(\begin{array}{cc}{x} & {- \sqrt{1-x^{2}}} \\ { \sqrt{1-x^{2}}} & {x}\end{array}\right)e^{\I \frac{\pi}{4} \sigma_z}.
\label{eqn:Omega_W}
\end{equation}
Hence to connect \cref{eq:qsp-gslw} with \cref{eqn:QSP_representation_1}, we have $\tilde{\phi}_0=\phi_0-\frac{\pi}{4}$, $\tilde{\phi}_\qspdeg=\phi_\qspdeg+\frac{\pi}{4}$, and $\tilde{\phi}_i=\phi_i$ for $1\le i\le \qspdeg-1$. Therefore, the relation between the phase factors $\{\phi_i\}_{i=0}^\qspdeg$ in \cref{thm:qsp} (\cref{eq:qsp-gslw}) and the phase factors $\{\varphi_i\}_{i=0}^\qspdeg$ appearing in 
$U_{\tilde{\Phi}}$ of \cref{eqn:QSP_representation_1} and in the implementation of the QSP circuit in \cref{fig:qsp_circuit}, is given by
\begin{equation}
\varphi_i=\begin{cases}
\phi_0+\frac{\pi}{4}, & i=0,\\
\phi_i+\frac{\pi}{2}, & 1\le i\le \qspdeg-1,\\
\phi_n+\frac{\pi}{4}, & i=\qspdeg.\\
\end{cases}
\label{eqn:phi_varphi_relation}
\end{equation}

\subsection{Representing general matrix polynomials}\label{sec:matrixpolynomial}

Now given a degree $\qspdeg$ polynomial $P(x)\in\CC[x]$ satisfying the requirement of \cref{thm:qsp}, for any $(\alpha,m,0)$ Hermitian-block-encoding of $A$, the circuit in \cref{fig:qsp_circuit} 
yields a $(1,m+1,0)$-block-encoding of $P(A/\alpha)$. With some abuse of notation, we shall denote both this block-encoding of the polynomial function of $A$ and the associated QSP circuit by $U_{\Phi}$. The QSP circuit uses $\qspdeg$ queries of $U_A$ and $\Or((m+1)\qspdeg)$ other primitive quantum gates. 

We should remark that the condition (3) in \cref{thm:qsp} imposes very strong constraints on $P,Q$ that are nontrivial to satisfy. Therefore we consider the following cases separately on how to construct QSP circuits in practice. 

\textit{Case 1}. In many applications, we are interested in computing $f(A/\alpha)$, where $f(x)$ is a real polynomial. It is stated in \cite[Theorem 5]{GilyenSuLowEtAl2018} that for $f \in \mathbb{R}[x]$ satisfying \textit{(1), (2)} and $|f(x)| \leq 1, \forall x \in [-1, 1]$, there exists $P\in \mathbb{C}[x], Q \in \mathbb{R}[x]$ such that $\Re [P(x)] = f(x)$. The choice of $P,Q$ may not  be unique. This only gives the block-encoding of $P(A/\alpha)$. In order to obtain the block-encoding of $f(A/\alpha)$, we can use the linear combination of unitaries (LCU) technique to separate the real and imaginary parts of $P(x)$ as follows. Note that
\begin{equation}
f(x)=\frac12(P(x)+P^*(x)).
\end{equation}
If the upper-left entry of $U_{\Phi}(x)$ is $P(x)$ as in \cref{eq:qsp-gslw}, then
\[
\begin{aligned}
        &U^*_\Phi(x) = e^{-\I \phi_0 \sigma_z} \prod_{j=1}^{\qspdeg} \left[ W^*(x) e^{-\I \phi_j \sigma_z} \right]\\
        &= \left( \begin{array}{cc}
        P^*(x) & -\I Q^*(x) \sqrt{1 - x^2}\\
        -\I Q(x) \sqrt{1 - x^2} & P(x)
        \end{array} \right)
\end{aligned}
\]
Here $U_{\Phi}^*(x)$ is the complex conjugation of $U_{\Phi}(x)$, and hence its  upper-left entry of is $P^*(x)$.  From
\begin{displaymath}
W^*(x)=e^{\I \frac{\pi}{2} \sigma_z} W(x) e^{-\I \frac{\pi}{2} \sigma_z},
\end{displaymath}
we find that $U_{\Phi}^*(x)=U_{-\Phi}(x)$, where the negative phase factors are defined by
\begin{equation}
-\Phi:=\left(-\phi_0+\frac{\pi}{2},-\phi_1,\cdots,-\phi_{\qspdeg-1},-\phi_\qspdeg-\frac{\pi}{2}\right),
\label{eqn:negative_qsp}
\end{equation}
which simply negates each phase factor except for $\phi_0$ and $\phi_d$.
In order to find a block-encoding of $\frac12(U_{\Phi}+U_{-\Phi})$, we can introduce one additional ancilla qubit to the signal register. The prepare oracle $U_{\mathrm{PREP}}$ is simply the Hadamard gate $H$.  \cref{fig:lcu_real} gives the circuit for the $(1,m+2,0)$-block-encoding of $f(A/\alpha)$. This technique is also called the addition of block-encodings \cite{GilyenSuLowEtAl2019}. Note that according to \cref{eqn:phi_varphi_relation}, the negative phase factors $-\Phi$ should be implemented using  the  circuit
 in \cref{fig:qsp_circuit} with 
 \begin{equation}
\varphi_i=\begin{cases}
-\phi_0+\frac{3\pi}{4}, & i=0,\\
-\phi_i+\frac{\pi}{2}, & 1\le i\le \qspdeg-1,\\
-\phi_\qspdeg-\frac{\pi}{4}, & i=\qspdeg.\\
\end{cases}
\label{eqn:neg_phi_varphi_relation}
\end{equation}

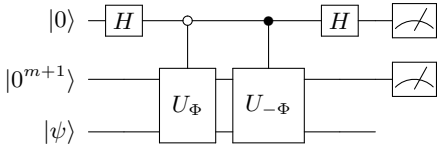
\begin{figure}[htb]
\begin{center}
\[
\Qcircuit @C=0.7em @R=1.0em {
\lstick{\ket{0}} & \gate{H} & \ctrlo{1} & \ctrl{1} & \gate{H} & \qw & \meter \\
\lstick{\ket{0^{m+1}}}&  \qw & \multigate{1}{U_{\Phi}} &
\multigate{1}{U_{-\Phi}}   & \qw&\qw & \meter \\
\lstick{\ket{\psi}}& \qw & \ghost{U_{\Phi}} &\ghost{U_{-\Phi}}&\qw&\qw \\
}
\]
\end{center}
\caption{Quantum circuit for block-encoding of $f(A/\alpha)$ using LCU to separate real and imaginary parts of $f(x)$. The three horizontal lines represent $1$,$m+1$,$n$ qubits, respectively. The circuits $U_{\Phi}$, $U_{-\Phi}$, are shown, after proper transformation of the phase factors, in \cref{fig:qsp_circuit}. }
\label{fig:lcu_real}
\end{figure}

\textit{Case 2}. The real polynomial $f(x)$ in case 1 is assumed to have definite parity. For a general real polynomial without parity constraints, we may use the decomposition
\begin{equation}
f(x)=f_{\mathrm{even}}(x)+f_{\mathrm{odd}}(x),
\label{eqn:evenodd}
\end{equation}
where $f_{\mathrm{even}}(x)=\frac12(f(x)+f(-x)), f_{\mathrm{odd}}(x)=\frac12(f(x)-f(-x))$. If $|f(x)|\le 1$ on $[-1,1]$, then  $|f_{\mathrm{even}}(x)|,|f_{\mathrm{odd}}(x)|\le 1$ on $[-1,1]$, and $f_{\mathrm{even}}(x)$, $f_{\mathrm{odd}}(x)$ can be each constructed using the circuit in \cref{fig:lcu_real}. Introducing another ancilla qubit and using the same form of the LCU circuit in  \cref{fig:lcu_real} (the $U_{\Phi},U_{-\Phi}$ circuits should be replaced by the QSP circuits for even and odd parts, respectively), we find a $(2,m+3,0)$-block-encoding of $f(A/\alpha)$. Equivalently, we have a $(1,m+3,0)$-block-encoding of $\frac12 f(A/\alpha)$.  

\textit{Case 3}. The most general case is that $f(x)\in\CC[x]$ is a complex polynomial. Let $f(x)=g(x)+\I h(x)$ where $g,h\in \RR[x]$ are the real and imaginary parts of $f(x)$, respectively. We remark that even when $h=0$ (\ie, $f(x)$ is a real polynomial), the associated polynomial $P(x)$ might have a non-vanishing imaginary component. Therefore in general we cannot expect to find phase factors that simultaneously encode $g(x)+\I h(x)$, even if $f(x)$ has definite parity. Hence we need to use LCU once again to find the block-encoding of $f$ through the linear combination of block-encodings of $g$ and $\I h$, respectively. Assuming $|g(x)|,|h(x)|\le 1$ on $[-1,1]$, following case 2, we have a $(2,m+3,0)$-block-encoding of $g(A/\alpha)$ denoted by $U_g$. Similarly a circuit of the form in \cref{fig:lcu_imag} gives the $(2,m+3,0)$-block-encoding of $\I h(A/\alpha)$ denoted by $U_{\I h}$.

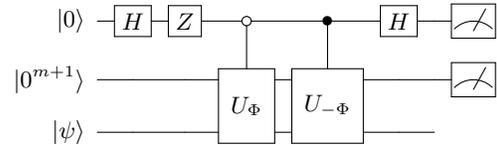
\begin{figure}[htb]
\begin{center}
\[
\Qcircuit @C=0.7em @R=1.0em {
\lstick{\ket{0}} & \gate{H} & \gate{Z} &\ctrlo{1} & \ctrl{1} & \gate{H} & \qw & \meter \\
\lstick{\ket{0^{m+1}}}&  \qw &  \qw& \multigate{1}{U_{\Phi}} &
\multigate{1}{U_{-\Phi}} &  \qw  & \qw & \meter \\
\lstick{\ket{\psi}}& \qw &  \qw& \ghost{U_{\Phi}} &\ghost{U_{-\Phi}}&\qw&\qw \\
}
\]
\end{center}
\caption{Quantum circuit for block-encoding of $\I h(A/\alpha)$ using linear combination of unitaries. The three lines represent $1$,$m+1$,$n$ qubits, respectively. The circuit $U_{\Phi}$, $U_{-\Phi}$, after proper transformation of the phase factors, is given in \cref{fig:qsp_circuit}. }
\label{fig:lcu_imag}
\end{figure}

We can use the LCU circuit of the form in \cref{fig:lcu_real}, with the $U_{\Phi},U_{-\Phi}$ circuits now replaced by $U_g$ and $U_{\I h}$, respectively, to ensure that the prepare oracle is still the Hadamard gate. This gives a $(4,m+4,0)$-block-encoding of $f(A/\alpha)$. 

We now make some general remarks on the block-encoding of matrix polynomials. First, while LCU is a general technique for implementing addition of block-encodings, when block-encoding a real polynomial as in case 1 above, one can actually save an ancilla qubit by taking advantage of the special structure of QSP circuits (see \cref{app:qsp_real_save}). A similar implementation exists for an imaginary polynomial, using a $Z$ gate as in \cref{fig:lcu_imag}. This reduces the number of additional ancilla qubits by $1$ for all cases discussed above, and the number of ancilla qubits then matches the results in \cite{GilyenSuLowEtAl2019}. Second, although the concept of qubitization and QSP were introduced here for Hermitian block-encodings in order to make use of Jordan's lemma, all the constructions shown above can be generalized to non-Hermitian block-encodings. 
One possible procedure to achieve this is described in \cref{app:qsp_general_block_encode}, which requires only use of one additional ancilla qubit. We note here that an alternative procedure is to use the quantum singular value transformation, which removes the need of this ancilla qubit and leads to a slightly simpler circuit, as well as allowing treatment of the case when $A$ is not a Hermitian matrix \cite{GilyenSuLowEtAl2019}. For simplicity all further discussion in this paper assumes that an $(\alpha,m,0)$-block-encoding $U_A$ is available. When the block-encoding itself is not error-free, \ie, $U_A$ is an $(\alpha,m,\epsilon)$-block-encoding of $A$, the cumulative error in the QSP circuit can also be analyzed. We refer readers to \cite{GilyenSuLowEtAl2018,GilyenSuLowEtAl2019} for more details. 

\subsection{Direct methods for finding phase factors}

According to \cref{sec:matrixpolynomial}, case 1 is the most important step, since cases 2 and 3 can simply be obtained from applying case 1 repeatedly and using the LCU technique. In fact, the proof of Theorem 5 in \cite{GilyenSuLowEtAl2018} also provides a constructive method for finding the phase factors, as follows. Given a properly normalized  real polynomial  with definite parity $f(x)$, one may first reconstruct complementing polynomials $B(x),C(x)\in\RR[x]$ to form $P=f+\I B, Q=C$ satisfying the requirement in \cref{thm:qsp}. This can be done by solving all the roots (including multiplicities) of the polynomial $1 - f(x)^2$ \cite[Lemma  6]{GilyenSuLowEtAl2018}. Then one can use a reduction method to find the phase factors. This procedure will be referred to as the GSLW method.  This procedure is exact if all floating point arithmetic operations can be performed with infinite precision, but is numerically unstable with standard double precision arithmetic operations. One disadvantage of the GSLW method is that it is based on the Taylor expansion of high order polynomials, which can be numerically highly unstable when the degree of polynomials becomes large.

To improve the numerical stability of the GSLW method, another algorithm was proposed in \cite{Haah2019}, which we will refer to as the Haah method. In the Haah method, the polynomials defined on $[-1, 1]$ are mapped to the unit circle via the transformation $x\mapsto e^{\pm \I \arccos(x)}$, and then extended to the complex plane. Such treatment is equivalent to a Chebyshev polynomial expansion, which improves the numerical stability over the GSLW method which uses the standard basis $\{1,x,x^2,\ldots\}$. Then, a similar reduction procedure is used to deduce the phase factors. However, one still needs to find the roots of a polynomial of high degree, and the number of classical bits required for this is $\Or(\qspdeg \log \qspdeg)$, where $\qspdeg$ is the degree of polynomial. 

In both the GSLW method and the Haah method, the phase factors are obtained from a single shot calculation. Therefore we refer to them as the direct methods for finding phase factors. This is in contrast to the optimization based method to be introduced below, which finds the phase factors via an iterative procedure.

The performance of the GSLW method has also been improved by a more recent work \cite{chao2020finding} after this paper was posted. The improved method of \cite{chao2020finding} is still based on direct factorization of polynomials. However, it is found that the numerical stability can be empirically improved using a method called ``capitalization'', which adds a small perturbation to the leading order term of the target polynomial. Together with another technique called ``halving'', the method of  \cite{chao2020finding} can find a sequence with more than $3000$ phase factors with double precision arithmetic operations. This result indicates that the sensitivity of the phase factors with respect to perturbation of the target polynomials is still not well understood.  Our optimization-based algorithm below presents a very different approach to determining the phase factors, which can achieve machine precision directly without perturbing the target polynomials and which is thus not limited by stability of such procedures. We show that with the optimization approach up to 10,000 phase factors can be determined with error less than $10^{-12}$.

\section{Optimization based method for finding phase factors}
Both the GSLW and the Haah methods are limited by the usage of root-finding and matrix reduction procedure, which result in the numerical instability when the degree of polynomials becomes large. Here we consider an alternative strategy to find the phase factors, by direct minimization with respect to  a certain distance function,
\begin{equation}\label{eq:opt-dis}
    L(\Phi) := \mathrm{dist}\left\{\qspobj{\Phi}{x},f(x)\right\}.
\end{equation}
In practice, the distance function will be characterized by the mean squared loss over discrete sample points. When $L(\Phi^*)$ is zero, we obtain the desired phase factors through the minimizer $\Phi^*$. This strategy bypasses the difficulty of constructing the complementing polynomials that relies on the high-precision root-finding procedure. Because the computation of the gradient and the Hessian matrix of the objective function only involve the matrix multiplications in SU(2), which is a numerically stable procedure, the optimization scheme is expected to significantly improve the robustness of the algorithm. This will be verified by our numerical tests. It also ensures an efficient optimization.

In the following discussion,  we use  $P, Q$ as the polynomials involved in the QSP unitary matrix in \cref{eq:qsp-gslw}. Let $\qspspace{\qspdeg+1} \subset [-\pi, \pi)^{\qspdeg+1}$ be the irreducible set of phase factors with $\qspdeg+1$ entries. The pair of polynomials $P(x), Q(x)\in \CC[x]$ satisfying conditions in \cref{thm:qsp} determines a unique set of phase factors $\Phi\in\qspspace{\qspdeg+1}$ (see \cref{app:qsp_unique}).

We again only consider a properly normalized  real polynomial  with definite parity $f(x)$ as in case 1 of \cref{sec:matrixpolynomial}. Because the form of $Q(x)$ is not of interest, we may restrict $Q(x)\in \RR[x]$.
 
\subsection{Symmetry property of the phase factors}\label{sec:symmetry}

Given a set of QSP factors $\Phi$, let the inverse phase factors be defined as
\begin{equation}
\Phi^- = (\phi_\qspdeg, \phi_{\qspdeg-1}, \cdots, \phi_0).
\label{eqn:inverse_qsp}
\end{equation}
The inverse phase factors should not be confused with the negative phase factors $-\Phi$ in \cref{eqn:negative_qsp}.

\cref{thm:inv} states that when we choose $Q(x)$ to be a real polynomial, the phase factors are symmetric under inversion. 
\begin{theorem}[\textbf{Inversion Symmetry}]
    \label{thm:inv}
     1) If $\Phi = \Phi^-$, then  $Q \in \mathbb{R}[x]$. 2) If  $Q \in \mathbb{R}[x]$, then we may choose $\Phi\in\mc{C}_{\qspdeg+1}$ such that  $\Phi=\Phi^{-}$.
\end{theorem}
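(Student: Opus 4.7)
The plan is to exploit the symmetry structure of the building blocks of $U_\Phi(x)$. The single-qubit iterate $W(x)$ is symmetric (by direct inspection), and each rotation $e^{\I \phi_j \sigma_z}$ is diagonal and therefore symmetric. Taking the transpose of the product in \cref{eq:qsp-gslw} consequently reverses the order of the factors while leaving each one invariant, which yields the key identity $U_\Phi(x)^\top = U_{\Phi^-}(x)$. This single identity drives both directions of the theorem, and everything else is bookkeeping.

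For part 1, I would assume $\Phi = \Phi^-$ and use the identity to conclude that $U_\Phi(x) = U_\Phi(x)^\top$, i.e., $U_\Phi(x)$ is a symmetric $2\times 2$ matrix. Comparing off-diagonal entries in \cref{eq:qsp-gslw} then gives $\I Q(x)\sqrt{1-x^2} = \I Q^*(x)\sqrt{1-x^2}$ and hence $Q = Q^*$, so $Q \in \RR[x]$.

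For part 2, I would assume $Q \in \RR[x]$, so that $Q^* = Q$ and the block matrix in \cref{eq:qsp-gslw} becomes manifestly symmetric. Applying the identity in the reverse direction, $U_\Phi(x) = U_\Phi(x)^\top = U_{\Phi^-}(x)$, so $\Phi$ and $\Phi^-$ produce the same pair $(P,Q)$. By the uniqueness result for phase factors in the irreducible set $\qspspace{\qspdeg+1}$ (\cref{app:qsp_unique}), the canonical representative in $\qspspace{\qspdeg+1}$ associated to $(P,Q)$ is unique, so we may choose $\Phi \in \qspspace{\qspdeg+1}$ satisfying $\Phi = \Phi^-$.

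The main subtlety lies in the last step: to conclude from $U_\Phi = U_{\Phi^-}$ that the canonical representative in $\qspspace{\qspdeg+1}$ is literally inversion-symmetric, one has to check that the inversion map $\Phi \mapsto \Phi^-$ is compatible with whichever normalization conditions cut out $\qspspace{\qspdeg+1}$ from $[-\pi,\pi)^{\qspdeg+1}$, so that $\Phi^-$ lies in $\qspspace{\qspdeg+1}$ whenever $\Phi$ does. I expect this to be a short verification once the definition of $\qspspace{\qspdeg+1}$ from \cref{app:qsp_unique} is unpacked, but it is the one place where the argument depends on the detailed structure of the equivalence classes rather than on pure linear algebra in SU(2).
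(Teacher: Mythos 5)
Your proposal is correct and follows essentially the same route as the paper: both proofs hinge on the identity $U_{\Phi^-}(x)=U_\Phi(x)^{\top}$ (from the symmetry of $W(x)$ and the diagonal rotations), deduce part 1 by comparing off-diagonal entries, and deduce part 2 by combining $U_\Phi=U_{\Phi^-}$ with the uniqueness of phase factors in $\qspspace{\qspdeg+1}$ from the appendix. The subtlety you flag at the end—that inversion must be compatible with the equivalence relation defining $\qspspace{\qspdeg+1}$—is precisely the point the paper also passes over with ``$\Phi=\Phi^-$ up to equivalence relations,'' so your treatment is at the same level of rigor.
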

\begin{proof}
        1): Obviously,
        \begin{equation}
        \begin{split}
                U_{\Phi^-}(x) &= e^{\I \phi_\qspdeg \sigma_z} \prod_{j = 1}^\qspdeg \left[ W(x) e^{\I \phi_{\qspdeg-j} \sigma_z} \right] = U_\Phi(x)^{\top}\\
                &= \left(\begin{array}{cc}{P(x)} & {\I Q^*(x) \sqrt{1-x^{2}}} \\ {\I Q(x) \sqrt{1-x^{2}}} & {P^{*}(x)}\end{array}\right).
        \end{split}
        \end{equation}
        Then, the statement that $\Phi$ is invariant under inversion implies that $Q(x) = Q^*(x) \in \mathbb{R}[x]$.\\
       2): If $Q \in \mathbb{R}[x]$, then $U_\Phi(x)  =U_\Phi(x)^{\top}= U_{\Phi^-}(x)$. Expand $P, Q$ in terms of Chebyshev polynomials, \ie, $P(x) = \sum_j p_j T_j(x),\ \sqrt{1-x^2} Q(x) = \sum_j q_j \sqrt{1-x^2} \seccheby_{j-1}(x)$. After a change of variable $x = \cos\theta$, $P, Q$ are transformed to Fourier series in terms of $\cos(j\theta)$ and $\sin(j\theta)$ respectively. The continuation $\theta \mapsto 2\pi - \theta$ extends the QSP unitary consisting of $P, Q$ to a $\mathrm{U}(1) \rightarrow \mathrm{SU}(2)$ function, after identifying $\theta$ with $e^{\I \theta}\in U(1)$.  Moreover, the parity constraint implies that this function only has non-zero coefficients $j = -\qspdeg, -\qspdeg+2, \cdots, \qspdeg-2, \qspdeg$ with respect to $e^{\I j \theta}$. \cref{app:qsp_unique} shows that the set of phase factors is unique, up to the equivalence relation for the irreducible set $\qspspace{\qspdeg+1}$. So $\Phi=\Phi^-$ up to equivalence relations. In particular, we may choose the phase factors such that  $\Phi=\Phi^-$. 
\end{proof}
As an example, let $P(x) = T_\qspdeg(x),\ Q(x) = \seccheby_{\qspdeg-1}(x)$, the corresponding QSP phase factors are $\Phi = (\underbrace{0, 0, \cdots, 0}_{\qspdeg+1})$. For $P(x) = \I T_\qspdeg(x),\ Q(x) = \seccheby_{\qspdeg-1}(x)$, the phase factors are $\Phi = (\frac{\pi}{4}, \underbrace{0, \cdots, 0}_{\qspdeg-1}, \frac{\pi}{4})$. In both cases, the polynomial $Q$ is real. Thus, it is evident that the phase factors satisfy the inversion symmetry in \cref{thm:inv}.

The symmetry property allows us to reduce the number of degrees of freedom by a factor of 2, and also motivates the symmetric construction of phase factors in the optimization procedure later. The appearance of two $\pi/4$ factors in the example above can be justified by \cref{lma:rot}, which shows that the action of these phase factors interchanges the real and imaginary parts of the polynomial $P$ up to a sign.

\begin{lemma}
    \label{lma:rot}
    Given a set of QSP phase factors $\Phi$, the following relations hold point-wise for $x \in [-1, 1]$,
    \begin{equation*}
    \begin{split}
        &\Re\left[ \langle 0 | U_\Phi(x) | 0 \rangle \right] = - \Im\left[ \langle 0 | e^{- \I \frac{\pi}{4} \sigma_z} U_\Phi(x) e^{- \I \frac{\pi}{4} \sigma_z} | 0 \rangle \right],\\
        &\Im\left[ \langle 0 | U_\Phi(x) | 0 \rangle \right] = \Re\left[ \langle 0 | e^{- \I \frac{\pi}{4} \sigma_z} U_\Phi(x) e^{- \I \frac{\pi}{4} \sigma_z} | 0 \rangle \right].
    \end{split}
    \end{equation*}
\end{lemma}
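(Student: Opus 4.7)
The plan is to reduce both identities to a single algebraic fact about the complex number $P(x) := \langle 0 | U_\Phi(x) | 0 \rangle$, by exploiting the fact that the extra factors $e^{-\I \pi/4 \sigma_z}$ on either side act as scalar phases when applied to $\ket{0}$.

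First I would invoke \cref{thm:qsp} to identify the upper-left matrix element as $\langle 0 | U_\Phi(x) | 0 \rangle = P(x)$. Next, since $\sigma_z \ket{0} = \ket{0}$, the state $\ket{0}$ is an eigenstate of $e^{-\I \tfrac{\pi}{4} \sigma_z}$ with eigenvalue $e^{-\I \pi/4}$, and likewise $\bra{0} e^{-\I \tfrac{\pi}{4} \sigma_z} = e^{-\I \pi/4} \bra{0}$. Therefore
\begin{equation*}
\bra{0} e^{-\I \tfrac{\pi}{4} \sigma_z} U_\Phi(x) e^{-\I \tfrac{\pi}{4} \sigma_z} \ket{0} = e^{-\I \pi/2} P(x) = -\I\, P(x).
\end{equation*}

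Finally, taking real and imaginary parts of $-\I P(x)$ gives $\Re[-\I P(x)] = \Im[P(x)]$ and $\Im[-\I P(x)] = -\Re[P(x)]$, which are exactly the two identities claimed once rearranged. Since every step is a direct computation using the $\sigma_z$-eigenvalue of $\ket{0}$ together with \cref{thm:qsp}, there is essentially no obstacle; the only conceptual point worth emphasizing is that the conjugation by $e^{-\I \pi/4 \sigma_z}$ amounts to multiplying the amplitude by the global phase $e^{-\I \pi/2} = -\I$, which implements a quarter-turn rotation in the complex plane that swaps $\Re$ and $\Im$ up to the expected sign.
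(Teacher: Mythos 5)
Your proof is correct, and it is a slightly more direct route than the paper's. The paper decomposes $U_\Phi = a_0 I + a_1\sigma_z + a_2\sigma_x + a_3\sigma_y$ and uses Pauli algebra to compute the full conjugated operator $e^{-\I\frac{\pi}{4}\sigma_z}U_\Phi e^{-\I\frac{\pi}{4}\sigma_z} = -\I a_0 I - \I a_1\sigma_z + a_2\sigma_x + a_3\sigma_y$, then reads off the $\ket{0}\!\bra{0}$ component. You instead observe that $\ket{0}$ is an eigenvector of the diagonal conjugating factors, so the amplitude simply acquires the phase $e^{-\I\pi/2}=-\I$; both arguments reduce to the same key fact, namely that the upper-left entry is multiplied by $-\I$, and your sign bookkeeping ($\Re[-\I P]=\Im P$, $\Im[-\I P]=-\Re P$) matches the two stated identities. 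What the paper's version buys is the full transformed Pauli decomposition (showing the $\sigma_x,\sigma_y$ components are untouched), which is not needed for the lemma as stated; what yours buys is brevity. One small remark: the appeal to \cref{thm:qsp} is unnecessary, since you only use $P(x)$ as a name for the matrix element $\langle 0|U_\Phi(x)|0\rangle$, not any property of it as a polynomial.
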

\begin{proof}
    Factorize the QSP unitary as $U_\Phi = a_0 I + a_1 \sigma_z + a_2 \sigma_x + a_3 \sigma_y$. The algebra of Pauli matrices implies that $e^{- \I \frac{\pi}{4} \sigma_z} U_\Phi e^{- \I \frac{\pi}{4} \sigma_z} = a_0 e^{- \I \frac{\pi}{2} \sigma_z} + a_1 e^{- \I \frac{\pi}{2} \sigma_z} \sigma_z + a_2 \sigma_x + a_3 \sigma_y = - \I a_0 I- \I a_1 \sigma_z + a_2 \sigma_x + a_3 \sigma_y$. Then, the conclusion follows.
\end{proof}

\subsection{Choice of objective function}\label{sec:opt-cho}

If the target smooth function $f(x)$ is not a polynomial,  we first  approximate $f(x)$ using a polynomial, and then feed the polynomial into the QSP solver. We would stress that this preprocessing step of polynomial approximation is necessary for the success of the optimization method. If we directly feed a non-polynomial function $f(x)$ into the objective function, then generally the equation $L(\Phi) = 0$ does not have a solution. Numerical evidence indicates that the landscape of the objective function is very complex and the optimization procedure can easily get stuck in one of the many local minima. On the other hand, for any polynomial satisfying conditions in \cref{thm:qsp}, there always exists a set of QSP factors $\Phi^*$ so that $L(\Phi^*) = 0$. Our numerical results indicate that starting from a proper initial guess, the optimization procedure can be very robust.

Since $Q(x)$ is not involved in the distance function, we may require $Q(x)\in \RR[x]$ and impose the inversion symmetry constraint (\cref{thm:inv}) on the phase factors. Under this constraint, the phase factors $\Phi=(\phi_0,\dots,\phi_\qspdeg)$ have $\lceil\frac{\qspdeg+1}{2}\rceil$ degrees of freedom for optimization. As a result, it is reasonable to choose the approximation as a polynomial $f$ of degree $\qspdeg$ with parity $(\qspdeg\mod2)$, which has the same number of adjustable coefficients. \cref{thm:qsp} and \cref{thm:inv} together guarantee the existence of symmetric phase factors $\Phi$ such that $\Re\left[\langle0|U_{ \Phi}(\cdot)|0\rangle\right]=f$. In this case, the optimization over $\Phi$ towards the minimum value of the distance function can be viewed as a polynomial interpolation taking the QSP parameterization. These features suggest that the mean squared loss in terms of $\tilde \qspdeg:=\lceil\frac{\qspdeg+1}{2}\rceil$ sample points   on $(0, 1]$ provides an accurate enough characterization of distance function. Therefore, we can write objective function for optimization as
\begin{equation} \label{eq:alg-prac}
    L(\hat{\Phi}) = \frac{1}{\tilde \qspdeg} \sum_{j=1}^{\tilde \qspdeg} \left|\Re\left[\langle0|U_{ \Phi}(x_j)|0\rangle\right] - f(x_j)\right|^2,
\end{equation}
where for $\hat{\Phi}=(\phi_0,\dots,\phi_{\tilde \qspdeg-1})\in[-\pi,\pi)^{\tilde \qspdeg}$,
\begin{equation}
    \Phi=\left\{\begin{array}{ll}
    (\phi_0,\cdots,\phi_{\tilde \qspdeg-1},\phi_{\tilde \qspdeg-1},\cdots,\phi_0) & {\qspdeg\text{ is odd}}, \\
     (\phi_0,\cdots,\phi_{\tilde \qspdeg-2},\phi_{\tilde \qspdeg-1},\phi_{\tilde \qspdeg-2},\cdots,\phi_0) & {\qspdeg\text{ is even}}.
    \end{array}\right.
\end{equation}
We choose $x_j=\cos\left(\frac{(2j-1)\pi}{4\tilde \qspdeg}\right),\,j=1,\dots,{\tilde \qspdeg}$ as the positive roots of the Chebyshev polynomial $T_{2\tilde \qspdeg}(x)$. \cref{thm:optmodel} shows that using the Chebyshev nodes, the accuracy of the polynomial approximation can be directly measured in terms of the objective function (the proof is given in \cref{app:proof_cheby}). 

\begin{theorem}\label{thm:optmodel}
Suppose we have the following expansions:
$$
f(x)=\sum_{j=0}^\qspdeg \alpha_jT_j(x),\,\,f_\Phi(x)=\sum_{j=0}^\qspdeg \beta_jT_j(x),
$$
where $f_\Phi(x)=\qspobj{\Phi}{x}$. If the discrete samples are chosen to be positive roots of $T_{2\lceil\frac{\qspdeg+1}{2}\rceil}(x)$ and $L(\hat{\Phi})\leq \epsilon$, then we have 
$$
\max_{j=1,\dots,\qspdeg}|\alpha_j-\beta_j|\leq 2\sqrt{\epsilon}.
$$
\end{theorem}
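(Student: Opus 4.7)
The plan is to exploit the discrete orthogonality of Chebyshev polynomials at the roots of $T_{2\tilde d}$, together with the parity constraint inherited from the QSP construction, to translate the bound on the sum of squared pointwise errors into an $\ell^\infty$ bound on the Chebyshev coefficients of the difference $g(x) := f(x) - f_{\Phi}(x) = \sum_{j=0}^{\qspdeg}(\alpha_j - \beta_j) T_j(x)$.

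First I would observe that both $f$ and $f_\Phi$ are polynomials of degree at most $\qspdeg$ with parity $(\qspdeg \bmod 2)$: for $f$ this is by assumption, and for $f_\Phi$ this follows from \cref{thm:qsp} together with the symmetric choice of $\hat\Phi$ enforced by the parametrization in \cref{eq:alg-prac}. Hence only coefficients $\alpha_j - \beta_j$ with $j \equiv \qspdeg \pmod 2$ can be nonzero, leaving at most $\tilde\qspdeg = \lceil (\qspdeg+1)/2\rceil$ unknowns.

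Next I would set up the key identity. Let $x_k = \cos\!\bigl((2k-1)\pi/(4\tilde\qspdeg)\bigr)$ for $k = 1,\dots,2\tilde\qspdeg$ be all roots of $T_{2\tilde\qspdeg}$, with $x_{2\tilde\qspdeg+1-k} = -x_k$ pairing positive and negative roots. Using $T_i(-x) = (-1)^i T_i(x)$ and splitting the standard discrete orthogonality relation
\begin{equation*}
\sum_{k=1}^{2\tilde\qspdeg} T_i(x_k)T_j(x_k) = \begin{cases} 2\tilde\qspdeg, & i=j=0, \\ \tilde\qspdeg, & i=j\neq 0, \\ 0, & i \neq j, \end{cases}
\end{equation*}
(valid for $i,j < 2\tilde\qspdeg$) into positive and negative halves, I would deduce that for $i,j$ of the same parity,
\begin{equation*}
\sum_{k=1}^{\tilde\qspdeg} T_i(x_k)T_j(x_k) = \tfrac{\tilde\qspdeg}{2}\,\delta_{ij} \quad (i\neq 0), \qquad \sum_{k=1}^{\tilde\qspdeg} T_0(x_k)^2 = \tilde\qspdeg,
\end{equation*}
while the cross-parity terms vanish and hence do not interfere.

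Finally, substituting the expansion of $g$ and using the hypothesis $L(\hat\Phi) \le \epsilon$, i.e.\ $\sum_{k=1}^{\tilde\qspdeg} g(x_k)^2 \le \tilde\qspdeg\,\epsilon$, the discrete orthogonality collapses the double sum and gives
\begin{equation*}
\tfrac{\tilde\qspdeg}{2}\sum_{\substack{j\equiv \qspdeg\,(2)\\ j\neq 0}} (\alpha_j-\beta_j)^2 \;+\; \tilde\qspdeg\,(\alpha_0-\beta_0)^2\,[\qspdeg \text{ even}] \;\le\; \tilde\qspdeg\,\epsilon.
\end{equation*}
For each individual $j$ this yields $(\alpha_j-\beta_j)^2 \le 2\epsilon$ (or $\le \epsilon$ when $j=0$ and $\qspdeg$ is even), and in particular $|\alpha_j - \beta_j| \le \sqrt{2\epsilon} \le 2\sqrt{\epsilon}$, completing the proof. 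The only delicate point is the bookkeeping between the two parity cases and the factor-of-two difference for the $T_0$ norm; the stated constant $2$ is in fact slightly loose (the sharp constant being $\sqrt 2$), which gives me confidence that no subtler argument is hidden here.
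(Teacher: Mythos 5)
Your proof is correct, and it rests on the same key fact as the paper's argument in \cref{app:proof_cheby} --- the discrete orthogonality of Chebyshev polynomials, restricted to the positive roots of $T_{2\tilde\qspdeg}$ via the parity symmetry --- but the final extraction step is genuinely different. The paper tests the error $g=f-f_\Phi$ against each individual $T_t$: it first applies Cauchy--Schwarz to convert the mean-squared bound into $\sum_j|g(x_j)|\le\tilde\qspdeg\sqrt{\epsilon}$, then uses $\sum_j g(x_j)T_t(x_j)=\eta_{tt}(\alpha_t-\beta_t)$ together with $|T_t(x_j)|\le 1$ to read off each coefficient, which yields the constant $2$. You instead expand $\sum_k g(x_k)^2$ directly and let orthogonality collapse the double sum into a Parseval-type identity; this buys a bound on the whole vector, $\sum_{j\ne 0}(\alpha_j-\beta_j)^2\le 2\epsilon$, hence the sharper per-coefficient constant $\sqrt{2}$ and strictly more information than the stated $\ell^\infty$ claim. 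You are also more careful than the paper on one point that deserves emphasis: the half-range orthogonality $\sum_{k=1}^{\tilde\qspdeg}T_i(x_k)T_j(x_k)\propto\delta_{ij}$ only holds for $i\equiv j\pmod 2$ (cross-parity sums do not vanish, e.g.\ $\tilde\qspdeg=1$, $i=0$, $j=1$ gives $\cos(\pi/4)\ne 0$), so the definite parity of $f$ and $f_\Phi$ --- implicit in the paper's setup and made explicit by you --- is essential; without it the theorem is in fact false, as one sees already for $\qspdeg=1$ with $g(x)=1-\sqrt{2}\,x$ vanishing at the single sample point.
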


Note that the optimal phase factors are not necessarily unique. This is because the real part of $P$ does not uniquely determine $P,Q$, even when assuming $Q$ is real.  Nonetheless, we only need to find one set of phase factors $\Phi^*$ to accurately encode $f(x)$. 

Our optimization problem can be viewed as variational quantum circuit (more specifically, similar to the quantum approximate optimization algorithm (QAOA) \cite{farhi2014quantum}),  in which one set of quantum gates (those associated with $\sigma_x$) are fixed. Due to the complex energy landscape, a good initial guess is necessary for the performance of the optimizer.

\subsection{Generating approximation polynomials}\label{subsec:approxpolyn}
In order to generate a polynomial to approximate $f$ to a given degree, we consider in this work two efficient approaches: the Fourier-Chebyshev expansion method and the Remez  method.

For a real smooth function $F$ on the interval $[-1, 1]$, we find its polynomial approximation in terms of Chebyshev polynomial of the first kind, \ie, $F(x) \approx f(x) = \sum_{j=0}^d c_j T_j(x)$. The Fourier approach uses the fast Fourier transformation (FFT) to efficiently evaluate the coefficients via a quadrature 
 \begin{equation}
    c_{j} \approx \frac{\left(2-\delta_{j 0}\right)}{2 K}(-1)^{j} \sum_{l=0}^{2 K-1} F\left(-\cos \theta_{l}\right) e^{\I j \theta_{l}} 
\end{equation}
where $\theta_{l}=\pi l / K, 0 \leq l \leq 2 K-1$, and $K$ is the number of quadrature points. 

We may alternatively consider optimization with respect to the $L^\infty$ norm. In fact, we may even restrict the interval of approximation to be a subset $[a,b]\in[-1, 1]$. In this case, an approximation polynomial can be obtained by solving the optimal approximation problem in terms of the $L^\infty$ norm
\begin{equation}\label{eq:optapproxpart}
    f=\argmin{f\in \mathbb{R}[x], \deg(f)\le \qspdeg}{\max_{x\in[a,b]}|F(x)-f(x)|}.
\end{equation}
The Remez algorithm \cite{Remez1934,cheney1966introduction} allows efficient solution of \cref{eq:optapproxpart}. This is an iterative method consisting of two steps. In the first step, we find the coefficients of $f$ from $\qspdeg+2$ points sampled from the interval by solving a set of linear equations. The second step involves adjusting $\qspdeg+2$ samples from coefficients solved in the first step.  We can also use the Remez algorithm to solve for $f$ using parity constraint. Full details are given in \cref{sec:remez}. 

\subsection{Choice of initial point}
The objective function in the optimization model of \cref{eq:alg-prac} is highly non-convex, rendering the global minimum hard-to-find. Numerical tests given in \cref{app:initial} illustrate that the solver can easily get stuck in a local minimum if we initiate it randomly, confirming the complexity of the landscape.
 Another possible choice of the initial phase factors is $\Phi=(0,0,\dots,0,0)$. Then the components of QSP matrix are Chebyshev polynomials $P(x) = T_\qspdeg(x)$ and $Q(x) = \seccheby_{\qspdeg-1}(x)$. However, straightforward computation shows that in this case we have $\nabla L(\hat{\Phi})=0$, \ie, $\hat{\Phi}$ is a stationary point, and obviously $L(\hat{\Phi})\ne 0$. 

Our main observation is that if we slightly modify the initial point as 
\begin{equation}
\Phi=\left(\frac{\pi}{4},0\dots,0,\frac{\pi}{4}\right) \in \RR^{\qspdeg+1},
\label{eqn:phi_initial}
\end{equation}
or correspondingly, the symmetrized version
\begin{equation}
\hat{\Phi}^0=\left(\frac{\pi}{4},0\dots,0\right)\in \RR^{\tilde{\qspdeg}},
\label{eqn:phi_initial_sym}
\end{equation}
then a gradient-based algorithm can reach a global minimum in all cases shown in \cref{sec:num}. According to the discussion in \cref{sec:symmetry}, this corresponds to the initial guess with $P(x)=\I T_\qspdeg(x)$ and $Q(x)=\seccheby_{\qspdeg-1}(x)$. The intuitive reason for choosing such an initial point is that we are interested in the real part of $P(x)$. The choice in \cref{eqn:phi_initial} ensures that $\Re[P(x)]=0$, which is unbiased with respect to the function to be approximated. On the other hand, the seemingly natural choice $\Phi=(0,0,\dots,0,0)$ gives  $P(x)=T_\qspdeg(x)$, which is a heavily biased initial guess of the real component. The theoretical study of the landscape around such an initial guess justifying the effectiveness of such a choice of the initial guess will be the focus of future work.

\subsection{Algorithm}\label{sec:alg}

We use a quasi-Newton method to perform numerical optimization of the phase factors. Compared to the Newton type method, we find that a quasi-Newton method such as the L-BFGS method \cite[Chapter~5]{sun2006optimization} leads to fast convergence without any need to evaluate the Hessian matrix, for which the computational cost would scale as $\Or(\qspdeg^3)$. \cref{sec:l-bfgs} describes the L-BFGS algorithm, which is applied to the symmetry-reduced phase factors according to \cref{eq:alg-prac}. Using the initial phase factors in \cref{eqn:phi_initial_sym}, the Hessian matrix $\text{Hess}\, L(\hat{\Phi}^0)$ is a constant matrix regardless of approximation polynomial $f$. More specifically, we have 
\begin{equation}\label{eq:alg-hess}
 \text{Hess}\, L(\hat{\Phi}^0)=\left\{\begin{array}{ll}
    2I  & \qspdeg\text{ is odd,}  \\
    \text{diag}(2,\dots,2,1)  &  \qspdeg\text{ is even.}
 \end{array}    \right.
\end{equation}
The inverse of this Hessian matrix will be fed into the L-BFGS algorithm.
In \cref{alg:bfgs} below we describe how to compute optimal phase factors corresponding to a given polynomial. The complete procedure to approximate a generic complex-valued function as polynomial components is presented in \cref{alg:sum}. 

\begin{algorithm}[htbp]
\caption{\textbf{Function:} $\hat{\Phi}=\text{QSPBFGS}(\hat{\Phi}^0,f,\epsilon)$} 
\label{alg:bfgs}
\begin{algorithmic} 
\STATE \textbf{Input:} An initial vector $\hat{\Phi}^0$, a real polynomial $f$ of degree $\qspdeg$ and error tolerance $\epsilon$.
\vspace{1em}
\STATE Choose $\tilde \qspdeg=\lceil \frac{\qspdeg+1}{2}\rceil$ points $x_j=\cos(\frac{(2j-1)\pi}{4\tilde \qspdeg})$ as the positive roots of Chebyshev polynomial $T_{2\tilde \qspdeg}$.
\STATE  {Construct objective function $L(\hat{\Phi})$ using  \cref{eq:alg-prac}. \setlength{\belowdisplayskip}{0pt} \setlength{\belowdisplayshortskip}{0pt}
\setlength{\abovedisplayskip}{0pt} \setlength{\abovedisplayshortskip}{0pt}}
\STATE Choose the initial approximation of inverse Hessian $B_0$ using \cref{eq:alg-hess}.
\STATE Set $t=0$
\WHILE{$L(\hat{\Phi})>\epsilon$}
    \STATE Obtain $\hat{\Phi}^{t+1}$ by updating $\hat{\Phi}^{t}$ via L-BFGS algorithm.
    \STATE Set $t=t+1$.
\ENDWHILE
\STATE \textbf{Return:} $\hat{\Phi}^t$
\end{algorithmic}
\end{algorithm}

\begin{algorithm}[htbp] 
\caption{Finding phase factors for the polynomial approximation of a smooth function $f$ over interval $[a,b]$} 
\label{alg:sum}
\begin{algorithmic} 
\STATE \textbf{Input:} A complex-valued function $F\in C^\infty [a,b]$, a non-negative integer $\qspdeg$ and error tolerance $\epsilon$.
\vspace{1em}
\STATE Find polynomial $f\in \mathbb{C}[x]$ of degree at most $n$ which approximates $f$ over the interval $[a,b]$. One can obtain such polynomial via the Fourier-Chebyshev expansion approach or the Remez algorithm \cite{Remez1934,cheney1966introduction}.
\STATE Scale $f$ by a constant factor $\alpha$.
\STATE Denote $f_j, j = 1, 2, 3, 4$ as real/imaginary and even/odd part of $f/\alpha$.
\STATE Set $\hat{\Phi}^0=(\frac{\pi}{4},0,\dots,0)\in \RR^{\tilde \qspdeg}$.
\STATE Solve $\hat{\Phi}_j=\text{QSPBFGS}(\hat{\Phi}^0,f_j,\epsilon)$ for each component.
\STATE \textbf{Return:} $\hat{\Phi}_j, j = 1, 2, 3, 4$ and factor $\alpha$.
\end{algorithmic}
\end{algorithm}

\section{Numerical results}\label{sec:num}

We present a number of tests to examine the effectiveness of the optimization based method compared to the previous direct methods. We implement the direct algorithms designed in \cite{GilyenSuLowEtAl2019} and \cite{Haah2019} (denoted here as the GSLW and Haah methods, respectively). All numerical tests are performed on an Intel Core 4 Quad CPU at 2.30 GHZ with 8 GB of RAM. Our method is implemented in \textsf{MATLAB} R2018b, while the GSLW and the Haah method are written in  \textsf{Julia} 1.2 for its better support for high-precision arithmetic. Our implementation (optimization, GSLW, Haah) can be downloaded from the Github repository\footnote{\url{https://github.com/qsppack/QSPPACK}}.

We utilize the \textsf{BigFloat} type to achieve variable precision arithmetic and internal routines in  \textsf{Julia} for the root-finding procedures. In \cref{app:implement}, we present the details of algorithms used for comparison and state some modifications to enhance the numerical stability. The stopping criterion is
\begin{equation}
    \max_{j=1,\dots,\tilde \qspdeg}\left|\qspobj{\Phi}{x_j} - f(x_j)\right|<\epsilon
\end{equation}
for both the GSLW method and our optimization method. The Haah method is terminated when the resulting factors are $\epsilon$-close to the target polynomial of degree $\qspdeg$ for values on the $\qspdeg$-th roots of unity. We set $\epsilon$ to be $10^{-12}$. We highlight the critical feature 
that all of the arithmetic in our optimization algorithm is performed using only double-precision floating-point numbers. This is a remarkable advantage in terms of computation cost and numerical stability compared to the direct algorithms, which have to make use of variable precision arithmetic operations. In fact, our numerical results indicate that even with variable precision arithmetic operations, both the GSLW and the Haah method still struggle to find the phase factors accurately when the degree of polynomial becomes large ($\gtrsim 500$).

\subsection{Hamiltonian Simulation}\label{subsec:HamiltonianSimulation}
A Hermitian matrix $H$ with bounded norm $\| H \|_2 \leq 1$ has the spectral decomposition $H = \sum_j \lambda_j | j \rangle\langle j |$. The Hamiltonian simulation with duration $\tau$ through $H$ is then given by $f(H) = e^{- \I \tau H} = \sum_j e^{- \I \tau \lambda_j} | j \rangle\langle j |$. Implementation of Hamiltonian simulation is thus determined by the phase factors that approximate the smooth complex-valued function $f(x) = e^{- \I \tau x}$. Since this is smooth on the interval $[-1, 1]$, its polynomial approximation can be generated from the Jacobi-Anger expansion\cite{BerryChildsKothari2015}:
\begin{equation}\label{eq:num-jacobi}
\begin{aligned}
    e^{-\I\tau x} &= J_0(\tau)+2\sum_{k\,\,\mathrm{even}} (-1)^{k/2}J_{k}(\tau)T_k(x)\\&+2\I \sum_{k\,\,\mathrm{odd}} (-1)^{(k-1)/2}J_{k}(\tau) T_k(x).
\end{aligned}
\end{equation}
Here  $J_k$'s are the Bessel functions of the first kind. The $L^\infty$ error to truncate the series up to order $\qspdeg$ is bounded by
\begin{equation}
\begin{aligned}
    &2 \sum_{k = \qspdeg+1}^\infty \left|J_{k}(\tau)\right| \leq 2 \sum_{k = \qspdeg+1}^\infty \left( \frac{e |\tau|}{2} \right)^k k^{-k}\\
    &\lesssim e^{-\qspdeg} \sum_{k = \qspdeg+1}^\infty \frac{1}{k!} \left( \frac{e |\tau|}{2} \right)^k < e^{e |\tau| / 2 - \qspdeg}.
\end{aligned}
\end{equation}
Thus, the truncated series up to $\qspdeg \approx e |\tau| / 2 + \log( 1/\epsilon_0)$ leads to an approximation whose truncation error is bounded by $\epsilon_0$. In our simulation, we simply choose $\qspdeg=1.4|\tau|+\log(1/\epsilon_0)$, where $\epsilon_0=10^{-14}$, to make the truncation error negligible compared to the error caused by other factors. We denote such an approximation for Hamiltonian simulation with duration $\tau$ by $f_\tau$.

We compare our method with the GSLW and Haah methods on the polynomial given by \cref{eq:num-jacobi}. For each $\tau$, we divide $f_\tau$ into real and imaginary parts, and perform algorithms separately according to case 3 in \cref{sec:matrixpolynomial}. Then, we sum up the CPU time and the error together of each part as final results. We divide the coefficients of $f_\tau$ by a constant factor $2$ to ensure $|f_\tau|\le 1$ for $x\in[-1,1]$. The CPU time and the number of bits utilized to perform arithmetic are displayed in \cref{fig:HStime} and \cref{fig:HSbits}, respectively,
together with polynomial fits to the data for large $\tau$ values in \cref{fig:HStime} (the points for small $\tau$ values are in the pre-asymptotic regime and are excluded in the fits).

We display results for $\tau$ up to $500$ since the direct methods become very inefficient for larger values of $\tau$. In particular, the GSLW method fails to yield phase factors with required accuracy $\epsilon=10^{-12}$ when the degree $\qspdeg$ of $f_\tau$ is larger than $369$. We contribute the failure to the instability of \textsf{Julia}'s internal root-finding procedure. We observe that the CPU time of our proposed method scales as $\tau^2$, while it scales as $\tau^3$ for the Haah method. Moreover, for both the GSLW and the Haah method the number of bits required is linear in $\tau$, while our optimization method is seen to be numerically stable in all calculations with use of only standard double precision arithmetic operations, \ie, the number of bits is independent of $\tau$.  

\begin{figure}[htbp]
    \centering
    \vspace{-10pt}
    \subfigure[\label{fig:HStime}]{
        \includegraphics[width=8cm]{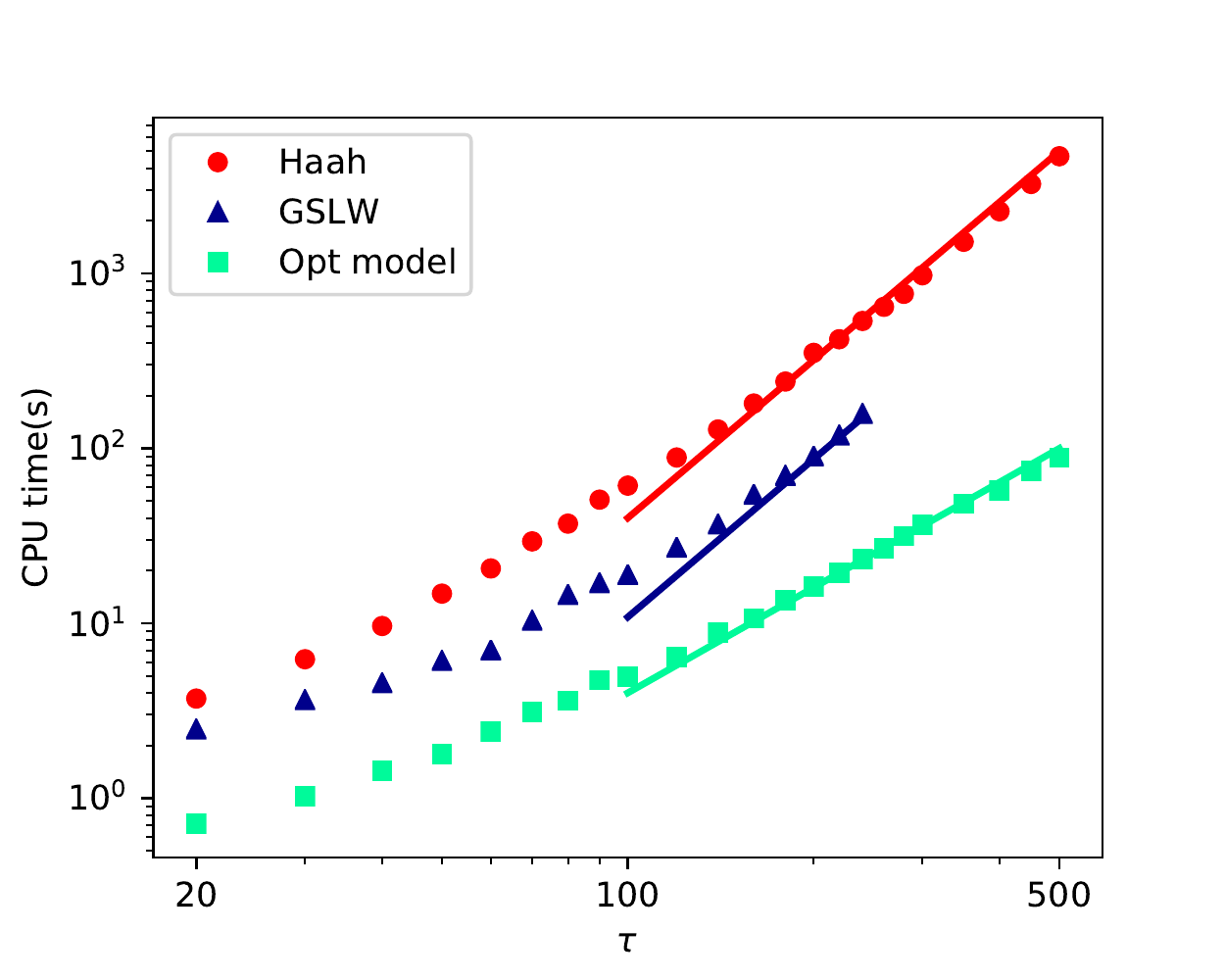}
    }
    
    \subfigure[\label{fig:HSbits}]{
        \includegraphics[width=8cm]{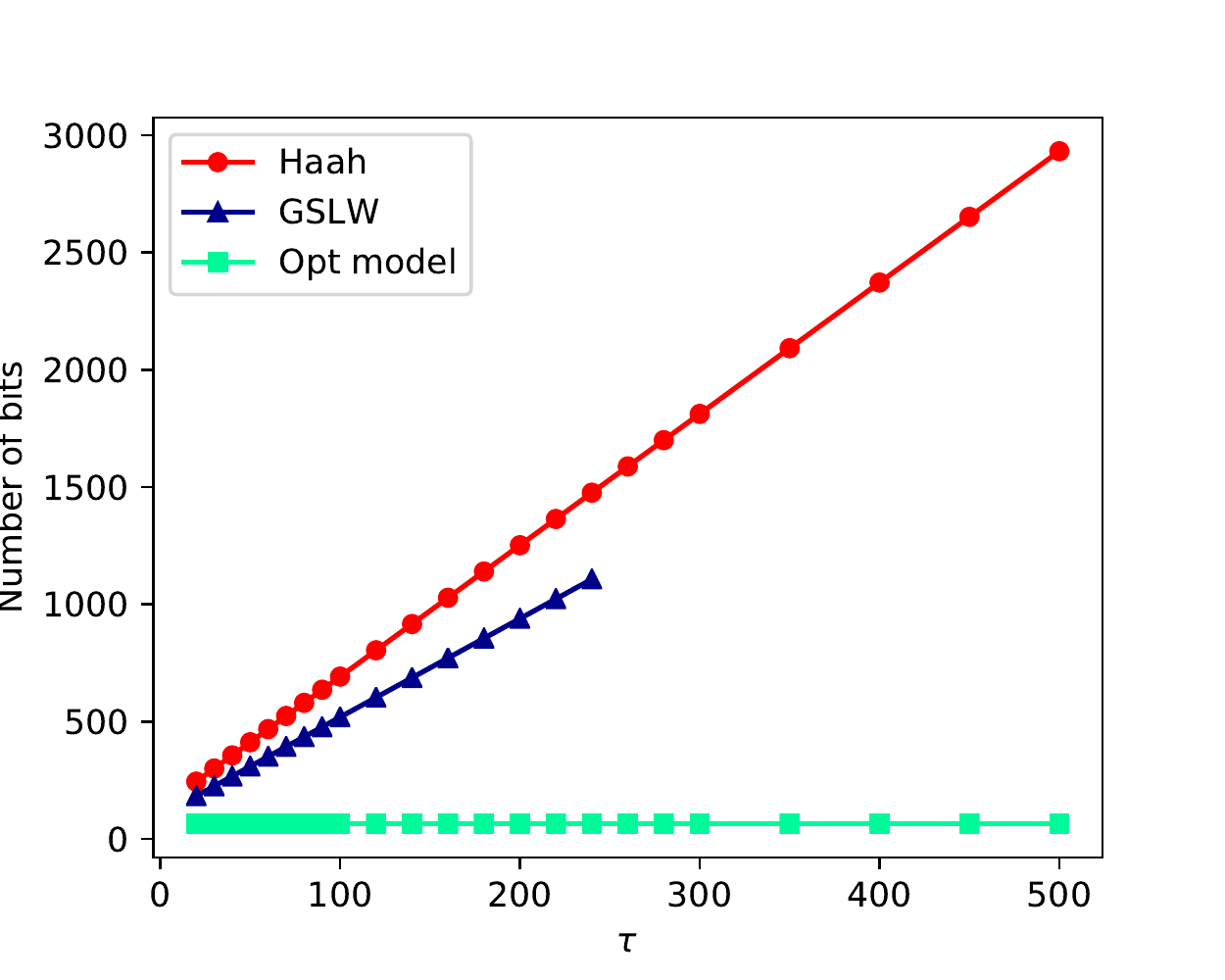}
        }
        \vspace{-10pt}
    \caption{Resource costs in determining QSP phase factors for the Hamiltonian simulation problem. Red dots, blue triangles and green squares correspond to the results by using Haah, GSLW and our optimization method, respectively. (a) CPU time(s) spent by each algorithm as a function of duration $\tau$, together with polynomial fits in the large $\tau$ region. The degree of polynomial is $\qspdeg=1.4|\tau|+\log(1/\epsilon_0)$, with $\epsilon_0=10^{-14}$. The slope of the red (gray) and the blue (dark gray) lines is 3, representing $ \text{CPU time}= \text{const} \cdot \tau^3$. The slope of the green (light gray) line is 2,  representing $ \text{CPU time}= \text{const} \cdot \tau^2$. (b) Number of bits used to store floating-point numbers and perform arithmetic. We show results for the GSLW method only up to $\tau=240$ since it fails to generate accurate  phase factors for larger $\tau$.}
     \vspace{-10pt}
\end{figure}

To further demonstrate the capability of our method, we test our algorithm with $\tau$ up to $5000$. When $\tau=5000$, the polynomial degree $d$ is $7033$. The computational cost for evaluating the real and imaginary parts of $f_\tau$ is given in \cref{fig:hsfull}. We also display in \cref{tab:hs} the $L^\infty$ error (\ie \ the maximum error) between the polynomial given by QSP phase factors and $e^{- \I \tau x}$, to verify the robustness of our method and the effectiveness of our choice of the stopping criterion. The CPU time still scales asymptotically as $\tau^2$, in agreement with our expectations since the  per-iteration cost of the optimization procedure is $\mathcal{O}(\qspdeg^2)$.

\begin{figure}[htbp]
    \centering
    \vspace{-10pt}
    \includegraphics[width=8cm]{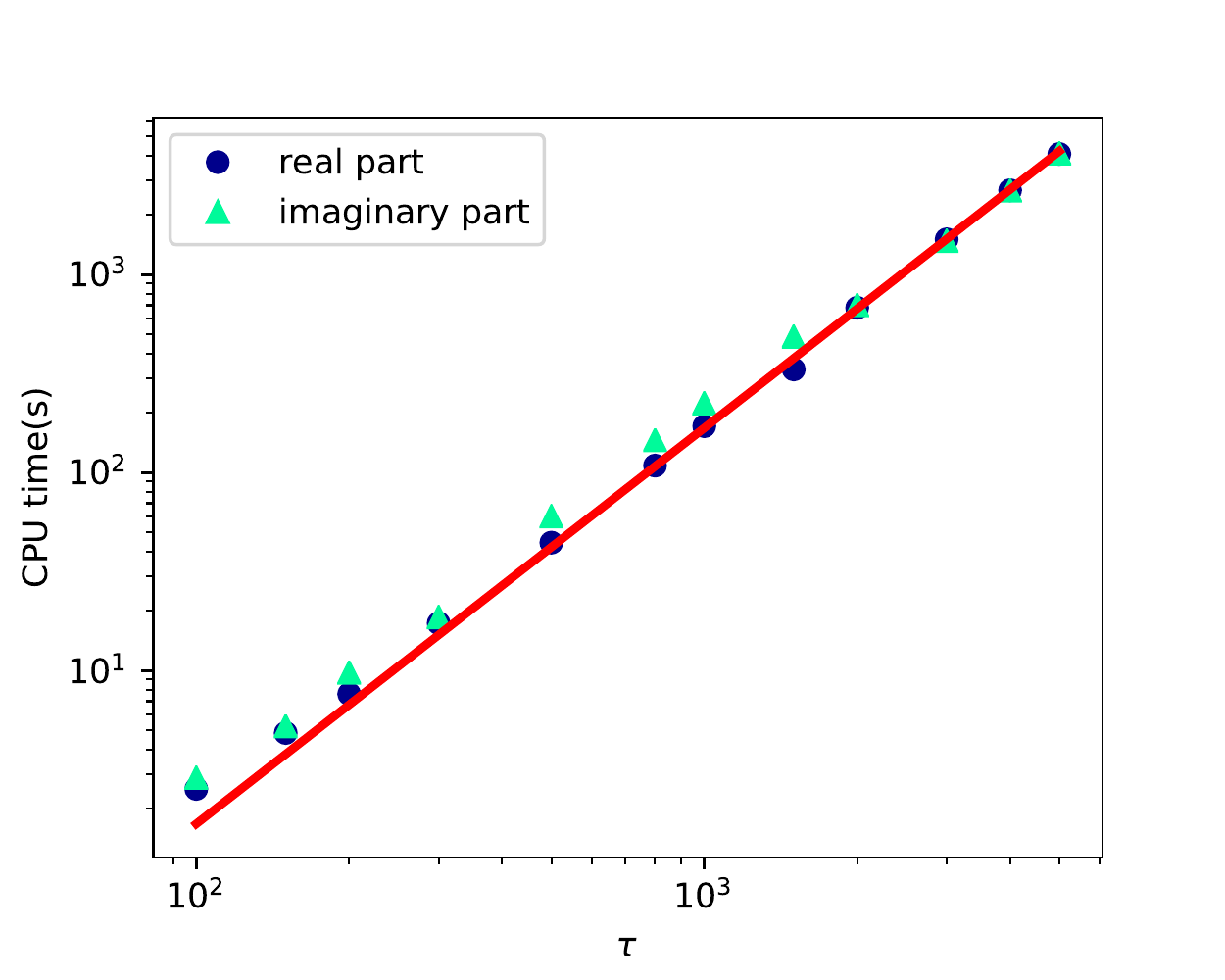}
    \caption{CPU time(s) required using the optimization algorithm for determining QSP phase factors for Hamiltonian simulation, shown as a function of $\tau$. Blue dots (green triangles) correspond to the real (imaginary) part of $f_\tau$ of degree $\qspdeg=1.4|\tau|+\log(1/\epsilon_0)$ with $\epsilon_0=10^{-14}$. The slope of the red line is $2$, representing $ \text{CPU time}= \text{const} \cdot \tau^2$.}
    \label{fig:hsfull}
\end{figure}

\begin{table}[htbp]
    \centering
    \begin{tabular}{|c|cccccc|}\hline
      $\tau$ & 100 & 150 &200 & 300& 500 & 800\\ \hline
      real &6.1e-13 &7.9e-13& 1.1e-12&  2.4e-13& 4.7e-13 & 3.6e-13 \\
      imaginary & 1.1e-12 & 2.3e-13& 3.3e-13& 3.2e-13& 2.8e-13 & 5.9e-13 \\ \hline
    $\tau$ & 1000& 1500& 2000& 3000& 4000& 5000 \\ \hline
    real& 5.6e-13 &5.5e-13 &5.5e-13& 7.2e-13 &1.2e-12& 9.4e-13 \\ 
    imaginary & 4.2e-13&5.9e-13& 9.0e-13& 7.3e-13& 9.0e-13& 1.5e-12  \\ \hline
    \end{tabular}
    \caption{$L^\infty$ error of the optimization algorithm for determining QSP phase factors for Hamiltonian simulation as a function of $\tau$. The degree of truncated polynomial is $\qspdeg=1.4|\tau|+\log(1/\epsilon_0)$ with $\epsilon_0=10^{-14}$.}
    \label{tab:hs}
\end{table}

\subsection{Eigenstate filtering function}\label{subsec:EigenstateFilter}
In order to prepare an eigenstate corresponding to a known eigenvalue,  we consider the following $2k$-degree polynomial
\begin{equation}\label{eq:num-eigen}
    f_k(x,\Delta)=\frac{T_k(-1+2\frac{x^2-\Delta^2}{1-\Delta^2})}{T_k(-1+2\frac{-\Delta^2}{1-\Delta^2})}.
\end{equation}
 Suppose $H$ is a Hermitian matrix with an eigenvalue $\lambda$ that is separated from other eigenvalues by a gap $\Delta>0$. Let $\tilde H=(H-\lambda I)/(\alpha+|\lambda|)$ and $\tilde \Delta=\frac{\Delta}{2\alpha}$. It was proven in \cite{LinTong2019} that
\begin{equation}\label{eq:num-eig}
    \|f_k(\tilde H,\tilde \Delta)-\hat{P}_\lambda\|_2 \le 2e^{-\sqrt 2k\tilde \Delta},
\end{equation}
where $\hat{P}_\lambda$ is the projection operator onto the eigenspace corresponding to $\lambda$. Furthermore, $f_k$, which is referred to as the eigenstate filtering function,  is the optimal polynomial for filtering out the unwanted information from all other eigenstates.

For this demonstration we assume $\lambda=0$, and $\alpha=1$. We choose $\Delta=0.1,0.05,0.01, 0.005$ and test our algorithm with different target filter values $k$. \cref{eq:num-eig} indicates that $k\Delta$ controls the accuracy of the approximation. For each $\Delta$ we choose $k$ such that $k\Delta=3,5,10,15,20,25$, respectively. The largest polynomial in this example is $d=10,000$. The coefficients of  polynomials are divided by $\sqrt 2$ to avoid instabilities during optimization (see \cref{sec:sensitivity} for reasons to scale the function). The results are summarized in \cref{fig:Eigentime} and \cref{tab:Eigenerr}. From the figure we observe that the optimization method performs stably in all cases, with CPU time scaling as $k^2$. These results are compared with the corresponding results for the direct methods of GSLW and Haah in \cref{fig:Eigencom}, for $\Delta$ ranging from $0.005$ to $0.1$. This comparison is made only for $k\Delta=3$, since we observe that direct methods struggle to treat larger values of $k\Delta$.  It is evident that the optimization algorithm also shows superior performance to the direct methods in this example. 

In particular, the Haah method fails to solve the QSP phase factors with required accuracy $\epsilon=10^{-12}$ when $\Delta$ is less than $0.01$. The weaker performance of the Haah method compared to (our modified) GSLW method observed in \cref{fig:Eigencom} can be attributed to the following reasons. We note that \textsf{Julia}'s internal root-finding routine has difficulty finding all the roots of a polynomial when its degree is high, even when variable precision arithmetic operations are used. The performance of the GSLW and Haah methods can thus depend on the dataset, since they apply the root-finding procedure to different polynomials. We observe that sometimes the GSLW method can reach a  polynomial of higher degree than the Haah method, and sometimes it is the other way around. We remark that the degree of polynomial fed into the Haah method is twice as large as that fed into the GSLW method, since  the variable $x$ is replaced by  $(z+1/z)/2$ in the Haah method. This increases the difficulty for the Haah method to solve phase factors successfully. By contrast, our modified implementation of the GSLW method (Appendix \ref{app:implement}) expands the polynomial in the Chebyshev basis, which significantly increases its stability, making its performance comparable to Haah's.

\begin{figure}[htbp]
    \centering
    \includegraphics[width=8cm]{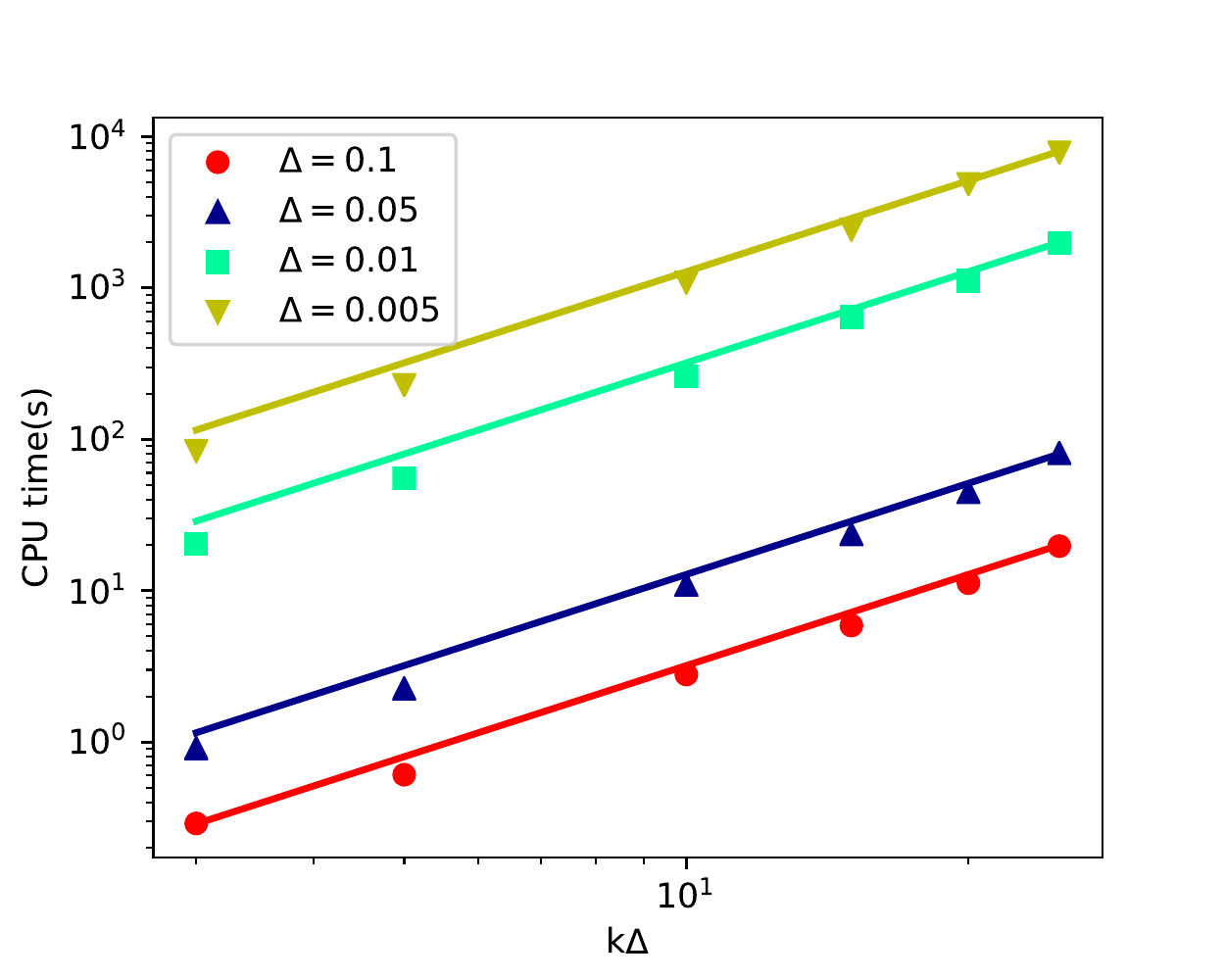}
    \caption{CPU time(s) using the optimization algorithm for determining the set of phase factors for the eigenstate filter, shown as a function of $k\Delta$. The degree of each polynomial is $2k$. The slope of each line is $2$, reflecting the quadratic cost $ \text{CPU time}= \text{const} \cdot k^2$. }
    \label{fig:Eigentime}
\end{figure}

\begin{table}[htbp]
    \centering
    \begin{tabular}{|c|cccccc|}\hline
    \diagbox{$\Delta$}{$k\Delta$} & 3 & 5 & 10 & 15 & 20 & 25 \\ \hline 
    0.1   &  3.4e-14 &5.2e-13 &1.1e-13 &1.1e-12 &8.9e-14 &8.5e-13 \\
    0.05  & 3.2e-14 & 4.9e-13 & 1.1e-13 & 1.1e-12 &1.0e-13 & 8.4e-13 \\
    0.01 & 4.7e-14 & 4.9e-13 & 1.7e-13 & 1.1e-12 & 2.2e-13& 8.1e-13 \\
    0.005 & 2.1e-13 & 5.6e-13 & 2.1e-13 & 1.2e-12 & 4.7e-13 & 8.8e-13 \\ \hline
    \end{tabular}
    \caption{$L^\infty$ error of the optimization algorithm for determining the QSP phases for the eigenstate filter defined in \cref{eq:num-eigen}.}
    \label{tab:Eigenerr}
\end{table}

\begin{figure}[htbp]
    \centering
    \includegraphics[width=8cm]{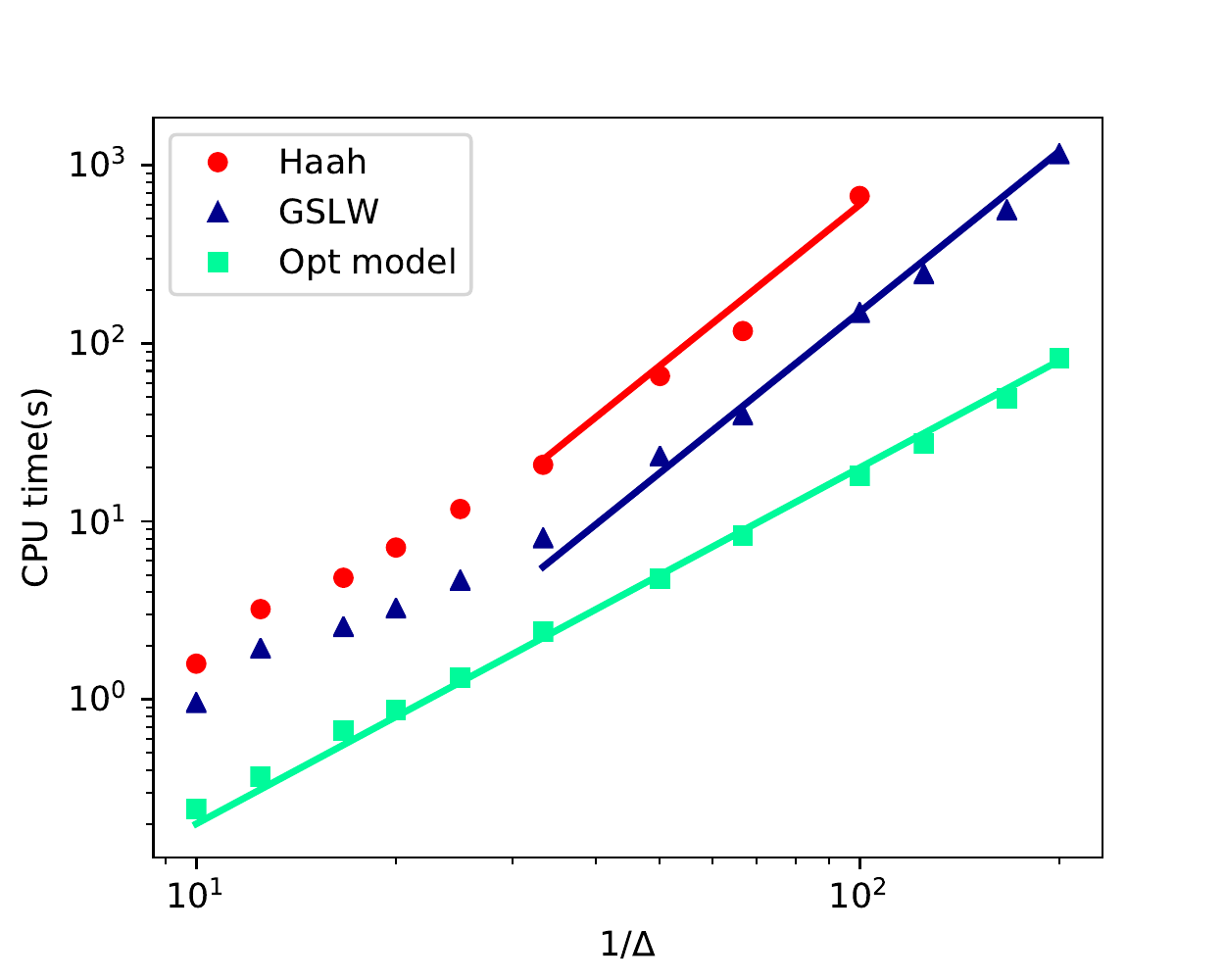}
    \caption{CPU time(s) of the optimization algorithm (green squares) compared with direct (GSLW and Haah) algorithms (blue triangles and red dots, respectively) for determining QSP phases for the eigenstate filter, shown as a function of $1/\Delta$. The degree of each polynomial is $2k=6/\Delta$. The slopes of the red (gray) and blue (dark gray) lines are 3, corresponding to a cubic cost in $\tau$.
    The slope of the green (light gray) line is 2,  corresponding to a quadratic cost in $\tau$.}
    \label{fig:Eigencom}
\end{figure}

\subsection{Matrix inversion}\label{subsec:MatrixInverse}
Consider the quantum linear problem $A|x\rangle=|b\rangle$ where $A$ is a Hermitian  matrix whose condition number is $\kappa$. Then the eigenvalues of $A$ are distributed within the interval  $D_\kappa := [-1,-1/\kappa]\cup[1/\kappa, 1]$. The solution $|x\rangle$ can be constructed via matrix inversion, using QSP to generate the action of $A^{-1}$. For this we need a polynomial approximation of $1/x$ on the interval $D_\kappa$. We consider two options here.  The first is to generate a polynomial approximation of $1/x$ on $D_\kappa$ by extending the function to the interval $[-1, 1]$ via an approximate function, as outlined in \cref{sec:opt-cho} above. The second is to apply the Remez algorithm \cite{Remez1934,cheney1966introduction} directly to the interval $D_\kappa$.  The first approach was pursued in \cite{ChildsKothariSomma2017}, where the following odd extension was proposed
\begin{equation}
    g(x):=\frac{1-(1-x^2)^b}{x}.
\end{equation}
Then, the truncated sum of Chebyshev polynomials
\begin{equation}\label{eq:num-x}
    f(x)=4\sum_{j=0}^\qspdeg (-1)^j\frac{\sum_{i=j+1}^{b}{\binom{2b}{b+i}}}{2^{2b}}T_{2j+1}(x)
\end{equation}
is $\epsilon_0$-close to $1/x$ on $D_\kappa$ by choosing $b=\left\lceil \kappa^2\log(\frac{\kappa}{\epsilon_0}) \right\rceil$ and $\qspdeg=\left\lceil \sqrt{b\log(\frac{4b}{\epsilon_0})}\right\rceil$. In the test made here $\epsilon_0$ is set to be $10^{-14}$.

In the second approach using the Remez algorithm, our goal for the matrix inversion problem is to directly construct an odd polynomial that approximates $f(x)=1/x$ on $D_{\kappa}$.  More generally, we note that 
if $A$ is positive definite and  $D_\kappa = [1/\kappa, 1]$, then we may approximate $f$ by extending it to a function that is either even or odd.  Since this paper focuses on the problem of finding the phase factors for approximating a smooth function in general, we will consider both the even and odd extensions below. 
For the current instance $f(x) = 1/x$, we gradually increase the degree $\qspdeg$ until the value of $f(x)$ obtained by the Remez algorithm approximates $1/x$ over $D_\kappa$ with $L^\infty$ error below $\epsilon_0$.  Fig.\ref{fig:Remez} compares the polynomial given by the Fourier-Chebyshev method, \cref{eq:num-x}, with that generated by the Remez method, for $\kappa=20$ and $\epsilon_0=10^{-3}$.  

\begin{figure}[htbp]
    \centering
    \includegraphics[width=8cm]{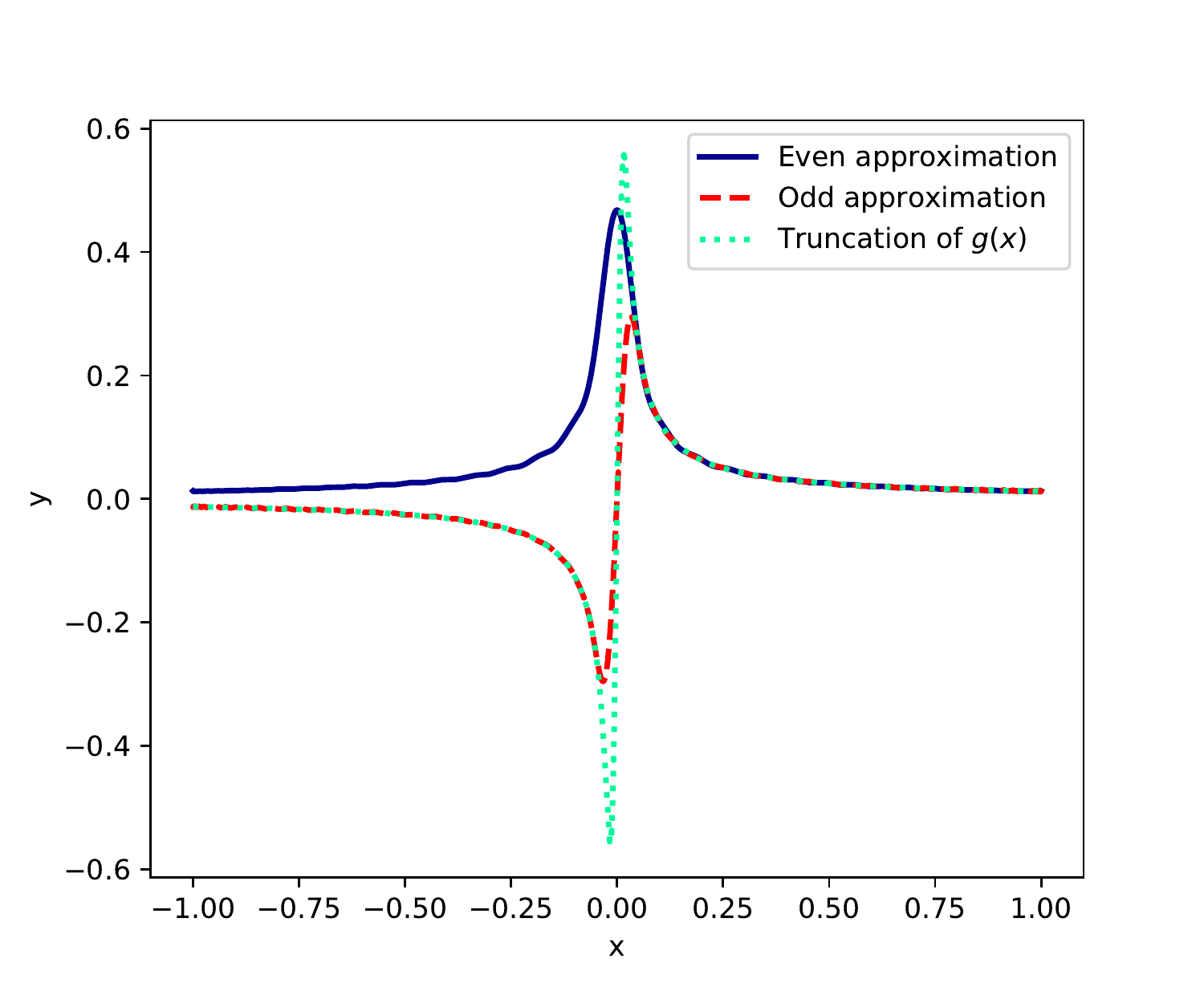}
    \caption{Comparison of the form of polynomials given by the Fourier-Chebyshev method, \cref{eq:num-x}, with those generated by the Remez method, for odd and even parities. The degree of the truncated polynomial here is $611$ and degrees of the even (odd) approximation polynomials generated by the Remez method are 76 (111). The approximation polynomials are divided by $80$ for this plot.}
    \label{fig:Remez}
\end{figure}

In this example we choose $\kappa=10,20,\dots,50$. We test our algorithm with $\epsilon_0=10^{-14}$ on polynomials given by \cref{eq:num-x} and generated by the Remez algorithm with odd and even parity, respectively. The CPU time associated with each polynomial approximation is presented in \cref{fig:xinv}. We also compare the optimization method with the GSLW and the Haah method on the polynomials with lower degrees. We choose $\epsilon_0=10^{-6}$ and generate polynomials by the Remez algorithm with odd and even parity. The results of the comparison are demonstrated in \cref{fig:xcom}, while the degrees of the polynomials given by each method are shown in \cref{tab:xinv}. Similar to the case of eigenstate filtering polynomials, we find that the GSLW and Haah methods cannot reach the target accuracy when the degree of the polynomial becomes large. Hence we reduce the accuracy in order to decrease the polynomial degrees here.

\cref{tab:xinv} indicates that use of the Remez method can significantly reduce the degree of polynomials needed to approximate $1/x$, with a reduction of to a factor of $2\sim 3$. We find that the even polynomial approximation is slightly less expensive than the odd expansion. This is due to the fact that an even extension has smaller gradient near the origin, compared with that of the odd extension, as shown in \cref{fig:Remez}.  Our proposed optimization method performs well on these examples, yielding phase factors robustly, with computational cost scaling quadratically with respect to $\kappa$. The largest polynomial degree $d=4035$. 
\begin{table}[htbp]
    \centering
    \begin{tabular}{|c|ccccc|}\hline
$\kappa$ & 10 & 20 & 30 & 40 &50\\ \hline
Truncation of $g(x)$ ($\epsilon_0=10^{-14}$) & 759 & 1559 & 2375 & 3201 & 4035\\ \hline
Odd Remez ($\epsilon_0=10^{-14}$)& 303 & 607 & 911 & 1215 & 1519\\ \hline
Even Remez ($\epsilon_0=10^{-14}$)& 280 & 560 & 840 & 1020 &1400\\ \hline
Odd Remez ($\epsilon_0=10^{-6}$)& 125 & 249 & 373 & 499 & 623 \\ \hline
Even Remez ($\epsilon_0=10^{-6}$)& 104 & 206 & 310 & 412 & 516\\ \hline
    \end{tabular}
    \caption{Degrees of approximation polynomials with accuracy $\epsilon_0$ given by an odd smearing function in \cref{eq:num-x} and the Remez method with odd and even parity, respectively.}
    \label{tab:xinv}
\end{table}

\begin{figure}[htbp]
    \centering
    \includegraphics[width=8cm]{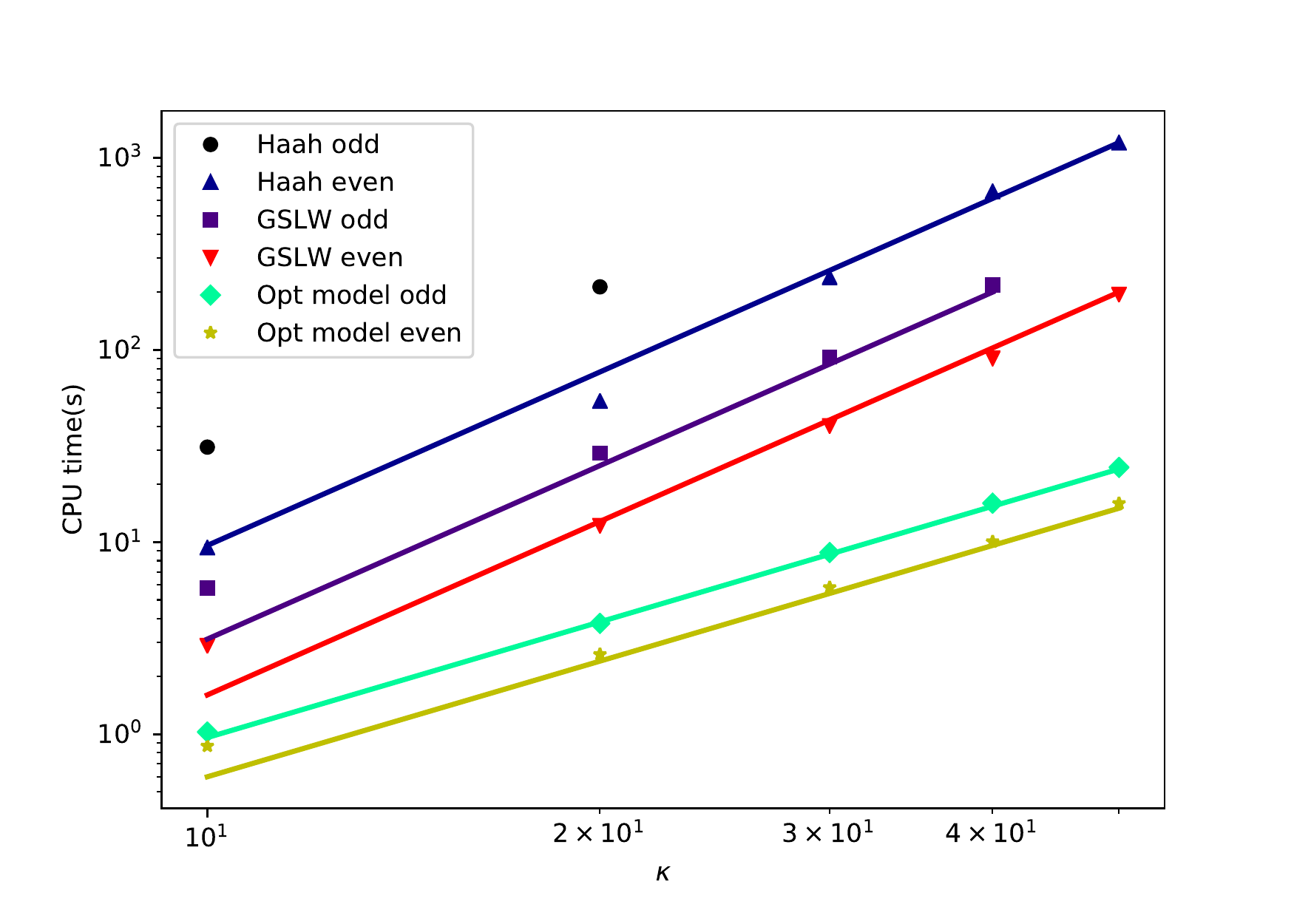}
    \caption{CPU times for approximating $1/x$ over $D_\kappa$ via QSP as a function of $\kappa$ for the optimization method, compared with the corresponding times for the GSLW and Haah methods. Lines labelled  ``even'' (``odd'') represent the results of approximating polynomials given by the Remez method with even (odd) parity. The slopes of the two lowest lines are 2, corresponding to quadratic cost in $\kappa$, while the slopes of all other lines are 3, 
    corresponding to cubic cost in $\kappa$. The line corresponding to the result by using Haah method to solve odd polynomials is not shown in the figure because only two data points are generated due to the numerical instability.}
    \label{fig:xcom}
\end{figure}

\begin{figure}[htbp]
    \centering
    \includegraphics[width=8cm]{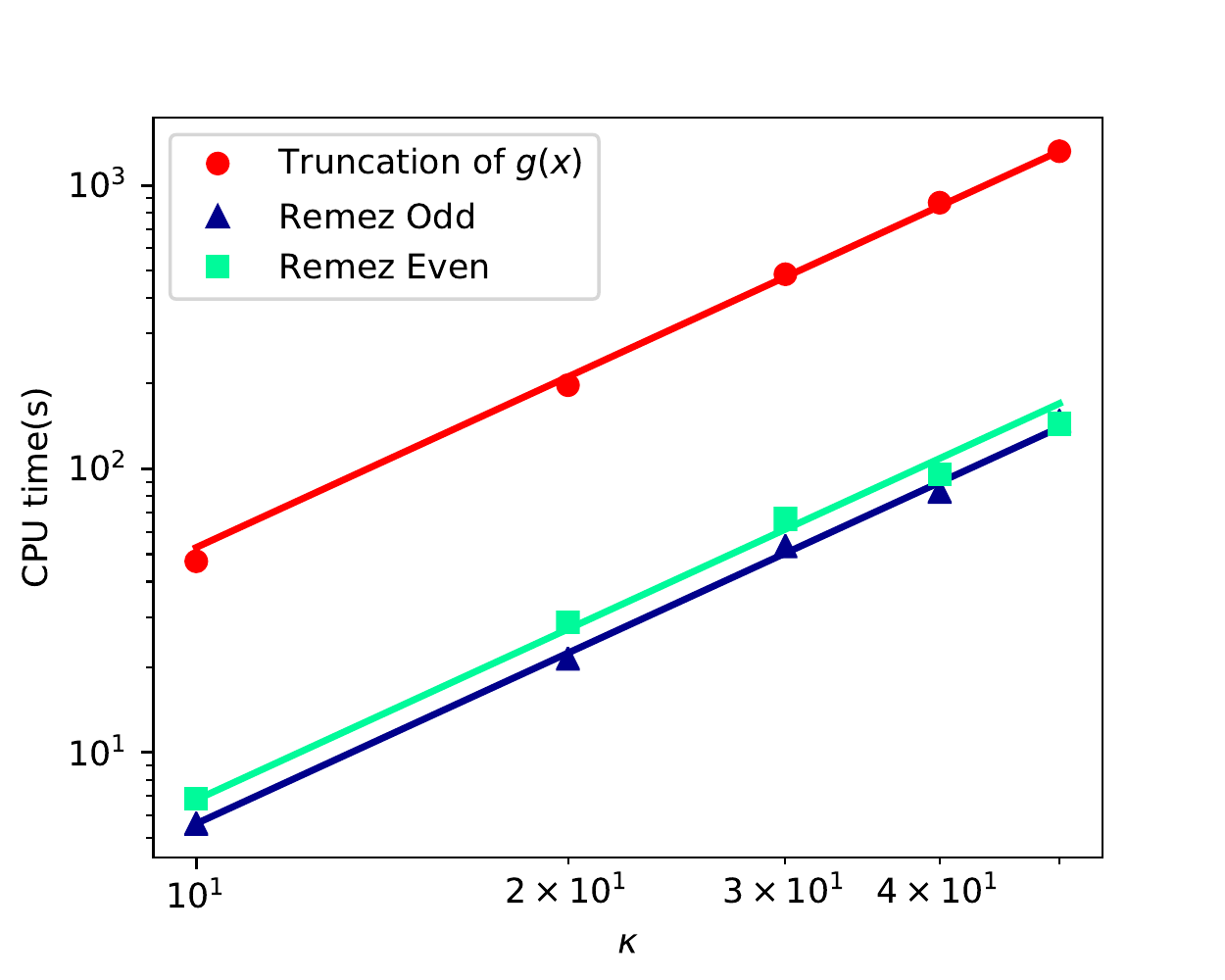}
    \caption{Comparison of CPU time(s) of optimization algorithm for approximating $1/x$ over $D_\kappa$ via QSP as a function of $\kappa$ for the two different methods of finding the optimal polynomial. The red (gray) line represents the result of approximating the polynomial with the Fourier method, \cref{eq:num-x}, the blue (dark gray) and the green (light gray) lines represent the CPU time of results of approximating the polynomial using the Remez method with odd and even parity, respectively. All lines have slope $2$, corresponding to quadratic cost in $\kappa$.}
    \label{fig:xinv}
\end{figure}

\subsection{Impact of the initial point}\label{app:initial}
To demonstrate the complexity of the optimization landscape, we report the final value of the objective function starting from randomly generated points for the Hamiltonian simulation problem. For $\tau\in\{100,200,300,400,500\}$, we choose the target polynomial
\begin{equation}
f(x)=J_0(\tau)/2+\sum_{\text{k even}}^\qspdeg (-1)^{k/2}J_k(\tau)T_k(x)
\end{equation}
as an approximation to $\cos(\tau x)/2$. The initial points are uniformly distributed in $[-\pi,\pi)^{\qspdeg+1}$. We run the L-BFGS algorithm until it converges or the number of iteration reaches 200. \cref{fig:initial} summarizes the performance of the algorithm under random initialization. We see that most of the calculations get stuck in local minima with a relatively large objective value, confirming the complexity of the landscape. Furthermore, the difficulty of finding a good solution increases with the degree of the polynomial. By comparison, if we start from 
$\Phi=(\frac{\pi}{4},0,\dots,0,\frac{\pi}{4})$, the algorithm will converge within dozens of iterations to the global minimum with the objective function very close to $0$. 

\begin{figure}[htbp]
    \centering
    \includegraphics[width=6.5cm]{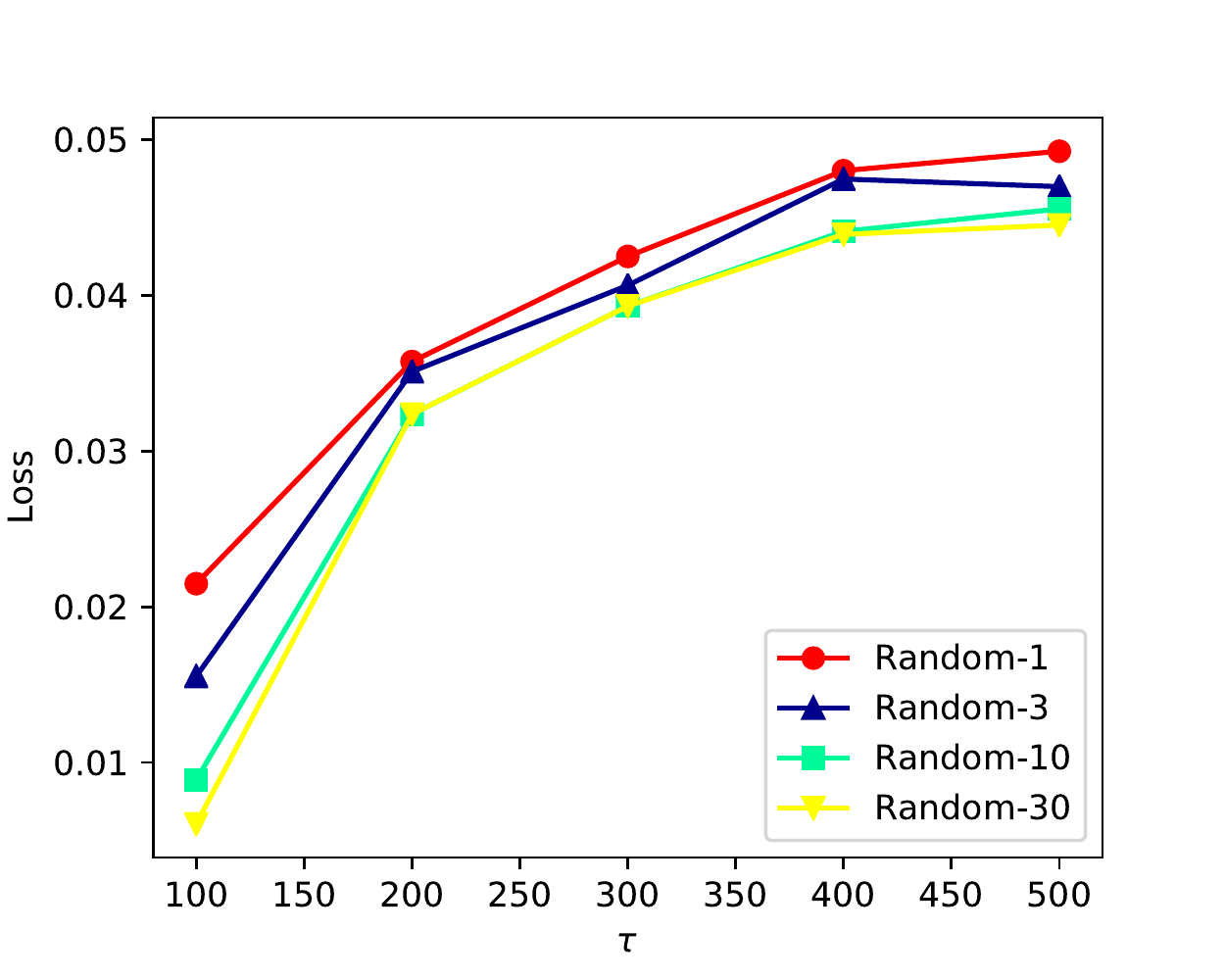}
    \caption{Loss of optimization method initiated with randomly generated points. The target polynomial is defined as the truncated polynomial of Jacobi-Anger expansion of degree $\qspdeg=1.4|\tau|+ \log( 1/\epsilon_0)$ with $\epsilon_0=10^{-14}$. ``Random-$k$'' represents that we start from $k$ different initial points and select best result.}
    \label{fig:initial}
\end{figure}

\subsection{Sensitivity analysis}\label{sec:sensitivity}

We further analyze the robustness of the method by reporting the condition number of the Hessian matrix $\text{Hess}\, L(\hat{\Phi}^*)$ at the optimal point. The condition number of the Hessian matrix is an indicator reflecting the sensitivity of the optimizer with respect to small perturbations of the target function.

We compute here the Hessian condition number for the three optimization problems presented above in Sections \ref{subsec:HamiltonianSimulation} - \ref{subsec:MatrixInverse}. Interestingly, we observe that the condition number is mostly affected by $L^\infty$ norm of the target polynomial, rather than by its degree or by its parameters. Thus, each problem can be exemplified by one polynomial with a given degree and parameters. To investigate how the norm affects Hessian condition number, we scale the $L^\infty$ norm of the given polynomial to $1-\eta$.  Fig. \ref{fig:condition} shows the scaled Hessian condition numbers as a function of $\eta$. As $\eta \rightarrow 0^+$, we find that the condition number increases as $\eta^{-\gamma}$ with $\gamma>1$ in all three cases.  
This indicates that when $\norm{f}_\infty$ is close to $1$, the optimizer  can be very sensitive to perturbations in $f$. When $\norm{f}_\infty$ is below $1$, the enhanced stability implies that these  phase factors can be used as an initial guess for a slightly perturbed target polynomial, which will be discussed in detail in \cref{sec:decay}. Furthermore, scaling the target polynomial $f$ to ensure that $\norm{f}_\infty \leq 1-\eta$ for some given threshold $\eta$ is also preferable. Such scaling of the target polynomial was also suggested in the root-finding procedures of the direct algorithms in order to ensure numerical stability \cite{Haah2019}.

\begin{figure}[htbp]
    \centering
    \includegraphics[width=6.5cm]{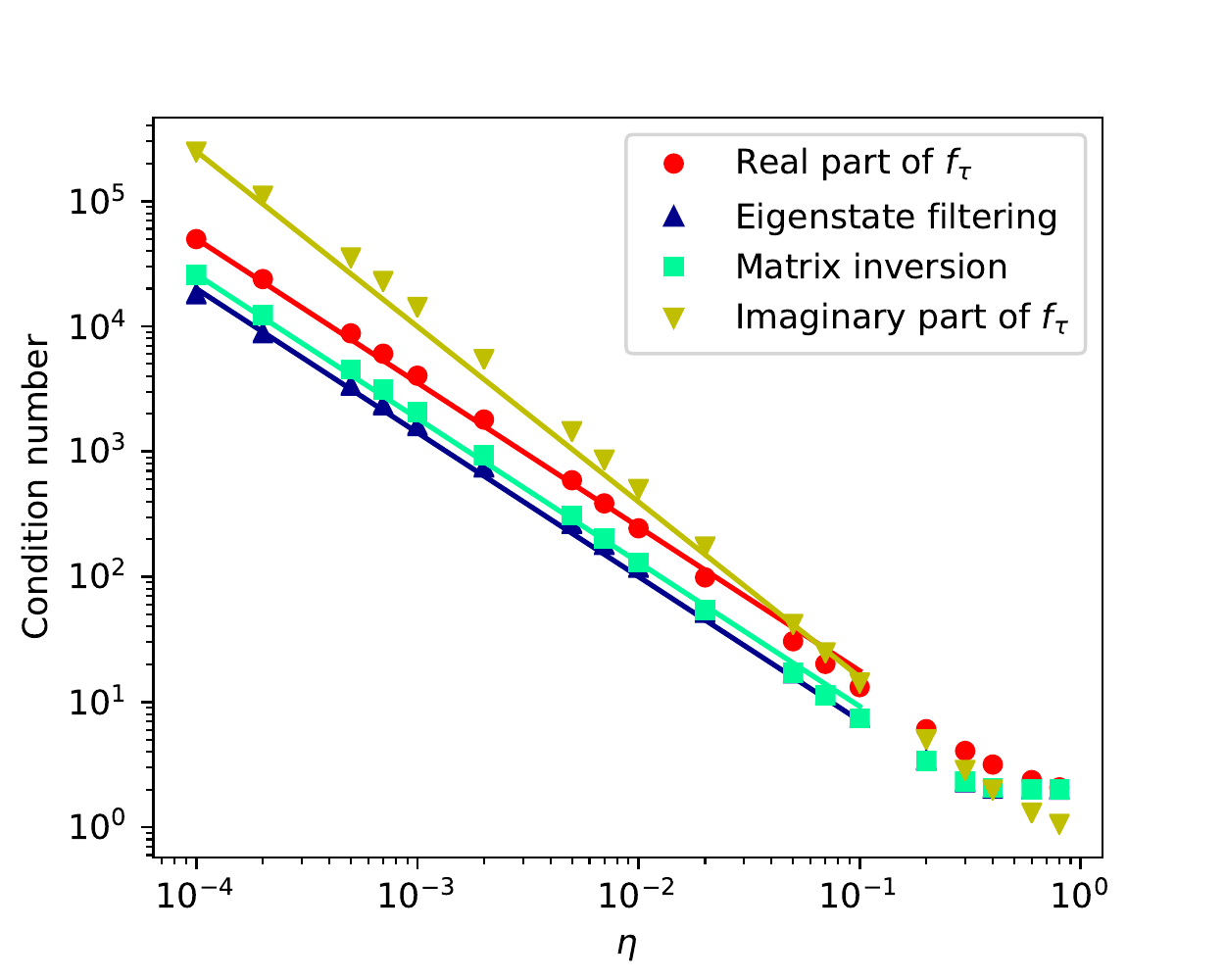}
    \caption{Condition number of the Hessian matrix at the optimum of the objective function $L(\tilde{\Phi})$ defined in \cref{eq:alg-prac}, shown for three different target polynomials studied in this work. 
    (a) Real (Imaginary) part of truncated Jacobi-Anger expansion in \cref{eq:num-jacobi}, where $\tau=200$ and $\qspdeg=312$ (represented by red dots and yellow downward triangles, respectively). (b) Eigenstate filter defined in \cref{eq:num-eigen} with $k=300$, $\Delta=0.05$ (represented by blue triangles). (c) Even polynomial approximation of $1/x$ on $D_{\kappa=20}$ generated by the Remez method (represented by green squares). Polynomials are scaled by a constant factor such that $\norm{f}_\infty=1-\eta$, where $\eta$ is the x-axis.  The slope is $1.4$ for the yellow (top) line and $1.15$ for all others.}
    \label{fig:condition}
\end{figure}

\section{Decay of phase factors from the center and phase factor padding}\label{sec:decay}

In addition to the symmetry structure discussed in \cref{sec:symmetry}, for smooth target functions, we observe that the QSP phase factors decay rapidly away from the center. To illustrate the decay and also the symmetry, we plot several examples in \cref{fig:qspdecay}. After subtracting the $\pi/4$ factor on both ends of the phase factors, we observe that the decay of the phase factors closely follows the decay of the Chebyshev coefficients (defined only on the positive axis in \cref{fig:qspdecay}). 

\begin{figure}[htbp]
    \centering
    \includegraphics[width=8.5cm]{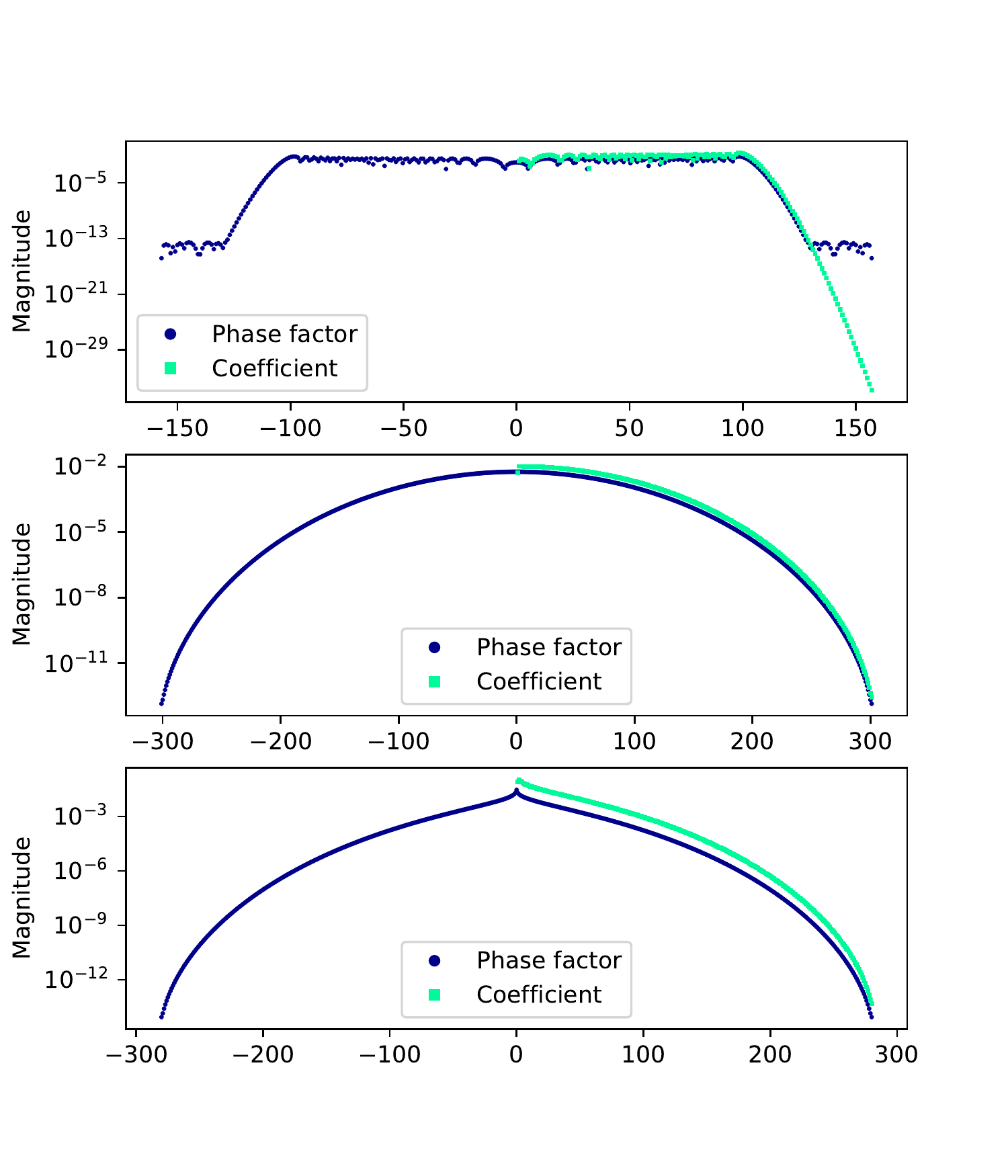}
    \caption{Magnitude of coefficients of the polynomial $f$ in the Chebyshev basis (light gray) and the corresponding phase factors (after subtracting $\pi/4$ on both ends, dark gray), for the three different problems studied in this work. We shift the x-axis to more clearly illustrate the symmetry property of the phase factors. Coefficients that are zero due to parity are omitted. (a) Real part of truncation of the Jacobi-Anger expansion in \cref{eq:num-jacobi}, with $\tau=200$ and $\qspdeg=312$. (b) Eigenstate filter defined in \cref{eq:num-eigen}, with $k=300$, $\Delta=0.05$. (c) Even polynomial approximation of $1/x$ on $D_{\kappa=20}$ as generated by the Remez method. }
    \label{fig:qspdecay}
\end{figure}

\cref{thm:qspcheby} states that for phase factors with relatively small magnitudes, the optimal phase factors can be expressed approximately analytically in terms of the coefficients of the Chebyshev polynomial expansion. The proof is given in \cref{sec:proofdecay}.

\begin{theorem}
    \label{thm:qspcheby}
    Let $\Phi \in \qspspace{\qspdeg^\prime}$  be a set of symmetric QSP phase factors. Define $\wt{\phi}_j := \phi_j - \frac{\pi}{4} \left(\delta_{j,0} + \delta_{j,\qspdeg^\prime-1}\right)$ and $\wt{\Phi} := (\wt{\phi}_0, \cdots, \wt{\phi}_{\qspdeg^\prime-1})$. Define a polynomial
    \begin{equation}
    \begin{split}
        &g_\Phi(x) := - \left(\prod_{j = 0}^{\qspdeg^\prime-1} \cos\wt{\phi}_j\right) \times \\
        & \left\{
        \begin{array}{ll}
            \displaystyle \sum_{j = 0}^\qspdeg 2\tan\left(\wt{\phi}_j\right) T_{2\qspdeg+1-2j}(x) &, \qspdeg^\prime = 2 \qspdeg + 2\\
            \displaystyle \tan\left( \wt{\phi}_{\qspdeg} \right) + \sum_{j = 0}^{\qspdeg-1} 2\tan\left(\wt{\phi}_j\right) T_{2\qspdeg-2j}(x) &, \qspdeg^\prime = 2 \qspdeg + 1.
        \end{array}
        \right.
    \end{split}
    \end{equation}
    Then for sufficiently small $\norm{\wt{\Phi}}_1$, there exists a constant $C>0$ such that the desired QSP component $f_\Phi(x) := \qspobj{\Phi}{x}$ satisfies
        \begin{equation}\label{eqn:decay_bound}
    \norm{f_\Phi(x) - g_\Phi(x)}_\infty \leq  C \norm{\wt{\Phi}}_1^3.
      \end{equation}
\end{theorem}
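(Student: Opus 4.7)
The plan is to linearize $U_{\wt{\Phi}}$ in the small parameters $\wt{\phi}_j$ and exploit a parity-type identity that forces every even-order contribution to the imaginary part of the $(0,0)$ matrix element to vanish, so that $g_\Phi$ emerges as precisely the linear-in-$\wt{\Phi}$ piece of $f_\Phi$.

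First I would reduce to a sum of small phases via Lemma~\ref{lma:rot}. Since $\phi_0 = \wt{\phi}_0 + \pi/4$, $\phi_{\qspdeg^\prime - 1} = \wt{\phi}_{\qspdeg^\prime-1} + \pi/4$, and $\phi_j = \wt{\phi}_j$ otherwise, conjugation by $e^{-\I \pi/4 \sigma_z}$ on both sides of $U_\Phi$ absorbs the two boundary $\pi/4$ shifts exactly, giving $e^{-\I \pi/4 \sigma_z}\,U_\Phi\,e^{-\I \pi/4 \sigma_z} = U_{\wt{\Phi}}$ and hence $f_\Phi(x) = -\Im[\langle 0|U_{\wt{\Phi}}(x)|0\rangle]$ by Lemma~\ref{lma:rot}. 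Next write $e^{\I \wt{\phi}_j \sigma_z} = \cos\wt{\phi}_j\,(I + \I t_j \sigma_z)$ with $t_j := \tan\wt{\phi}_j$, pull out the real scalar $C := \prod_j \cos\wt{\phi}_j$, and set $\theta = \arccos(x)$ so that $W(x)^k = \cos(k\theta) I + \I \sin(k\theta) \sigma_x$. Two elementary identities drive the rest of the calculation: $\sigma_z W^a \sigma_z = W^{-a}$ and $\langle 0|W^a \sigma_z W^b|0\rangle = \cos((a-b)\theta)$.

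Expand the alternating product $V_0 W V_1 W \cdots V_{\qspdeg^\prime - 2} W V_{\qspdeg^\prime - 1}$, with $V_j = I + \I t_j \sigma_z$, in the number $m$ of $\sigma_z$'s actually selected from the $V_j$'s. Every consecutive pair of selected $\sigma_z$'s telescopes to a $W$-power via $\sigma_z W^a \sigma_z = W^{-a}$, so for even $m$ the reduced operator is simply $W^{(\cdot)}$, whose $(0,0)$ entry is real; combined with the real prefactor $(\I)^m\prod t$, this contributes only to $\Re[\langle 0|U_{\wt{\Phi}}|0\rangle]$. For odd $m$ the reduced operator has the form $W^a \sigma_z W^b$, again with real $(0,0)$ entry but multiplied by an imaginary prefactor, so it contributes only to $\Im[\langle 0|U_{\wt{\Phi}}|0\rangle]$. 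Consequently every even-order contribution to the imaginary part vanishes identically, in particular at order $\wt{\Phi}^2$. At order one, selecting $\sigma_z$ at position $k$ yields $\I t_k W^k \sigma_z W^{\qspdeg^\prime - 1 - k}$ whose $(0,0)$ entry is $\I t_k \cos((2k - \qspdeg^\prime + 1)\theta)$, so
\[
\Im\bigl[\langle 0|U_{\wt{\Phi}}|0\rangle\bigr]^{(1)} = C \sum_{k=0}^{\qspdeg^\prime - 1} t_k \cos\bigl((2k - \qspdeg^\prime + 1)\theta\bigr).
\]
Replacing $\cos((2k-\qspdeg^\prime+1)\theta) = T_{|2k-\qspdeg^\prime+1|}(x)$ and using the inversion symmetry $t_k = t_{\qspdeg^\prime-1-k}$, the paired indices $(k, \qspdeg^\prime-1-k)$ combine with coefficient $2 t_k$; when $\qspdeg^\prime = 2\qspdeg + 1$ the central index $k = \qspdeg$ is unpaired and contributes $t_\qspdeg T_0$ without the factor of $2$. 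Multiplying by $-C$ reproduces $g_\Phi(x)$ exactly in each of the two cases of the theorem.

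For the remainder, each order-$m$ term has modulus bounded by $\prod_{k \in S}|t_k|$ because every $\mathrm{SU}(2)$ matrix entry has modulus at most $1$. Summing over subsets $S$ of size $m$ gives the elementary symmetric polynomial $e_m(|t_0|,\ldots,|t_{\qspdeg^\prime-1}|) \le \|\wt{\Phi}\|_1^m / m!$, using $|t_j| \le 2|\wt{\phi}_j|$ for $\wt{\phi}_j$ small and absorbing the constant. The tail for $m \ge 3$ is then bounded by $\tfrac{1}{6}\|\wt{\Phi}\|_1^3 e^{\|\wt{\Phi}\|_1}$, which is $\le C\|\wt{\Phi}\|_1^3$ for $\|\wt{\Phi}\|_1$ sufficiently small. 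The bound is uniform in $x \in [-1,1]$, yielding~\eqref{eqn:decay_bound}. The main obstacle is the uniform parity cancellation at all even orders: one must treat the boundary factors $V_0, V_{\qspdeg^\prime-1}$ on exactly the same footing as the interior ones, so that the identity $\sigma_z W^a \sigma_z = W^{-a}$ handles the cancellation globally rather than requiring separate bookkeeping for the cross terms between $\I t_0, \I t_{\qspdeg^\prime-1}$ and the interior expansion.
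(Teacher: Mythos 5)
Your proof is correct and follows essentially the same route as the paper's: reduce to $\Im\left[\langle 0|U_{\wt{\Phi}}(x)|0\rangle\right]$ via \cref{lma:rot}, expand in powers of $\tan\wt{\phi}_j$ using $\sigma_z W^a \sigma_z = W^{-a}$, observe that even-order terms contribute only to the real part, identify the linear term with $g_\Phi$ via the inversion symmetry $t_k = t_{\qspdeg'-1-k}$, and bound the remainder cubically. The only substantive difference is organizational: you expand the full product and control all orders $m\ge 3$ at once through elementary symmetric polynomials and the closed form $\langle 0|W^a\sigma_z W^b|0\rangle=\cos((a-b)\theta)$, whereas the paper splits the product symmetrically into two halves, treats the odd-length case separately, and truncates explicitly at third order with an $\Or\left(\norm{\wt{\Phi}}_1^5\right)$ tail -- your bookkeeping is arguably tidier, but the underlying ideas coincide.
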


According to \cref{thm:qspcheby}, one can directly deduce approximate values of the phase factors from the coefficients of the Chebyshev expansion. For example, when $\qspdeg^\prime$ is even, $\wt{\phi}_j \approx - \arctan(c_{2\qspdeg+1-2j}/2) \approx  - c_{2\qspdeg+1-2j}/2$ holds up to $\Or(\Vert \wt{\Phi} \Vert_1^3)$. For smooth functions, the Chebyshev coefficients decay at least super-algebraically (\ie, faster than any polynomial decay) \cite{Boyd2001}. So the phase factors also decay super-algebraically away from the center.
The uniformly small phase factors can be realized by rescaling the function $f$ to $f/\beta$, with $\beta$ being a large number. We remark that our numerical results in \cref{fig:qspdecay} do not rely on such a scaling factor. A more precise characterization of the decay of the phase factors will be a focus of future work.

One possible usage of the decay property of the phase factors is as follows, which we refer to as a ``phase padding'' procedure.  Suppose we have solved the QSP phase factors corresponding to a polynomial approximation $f_1$ of relatively low degree to a real-valued function $f$ with definite parity. In order to improve the accuracy of the approximation, another small term $f_2$ of higher polynomial degree is needed to be added to approximate $f$ together with $f_1$. Therefore, a natural question is whether we can reuse the phase factors associated with $f_1$ to generate that of $f_1 + f_2 \approx f$. 

To solve this problem, one needs to increase the dimension of $\Phi$, since the degree of the polynomial has been increased and hence also the number of phase factors.  Due to the symmetry structure, we may consider the following symmetrically padded phase factors and further show that the symmetrical padding operation preserves the desired part of the QSP.

\begin{definition}[\textbf{$l$-padded phase factors}]
    Let $\Phi = (\phi_0, \cdots, \phi_\qspdeg) \in \qspspace{\qspdeg+1}$ be symmetric QSP phase factors. Then, the corresponding $l$-padded phase factors in $\qspspace{\qspdeg+2l+1}$ are given by $\Phi_l := (\frac{\pi}{4}, \underbrace{0, \cdots, 0}_{l-1}, \phi_0 - \frac{\pi}{4}, \phi_1, \cdots, \phi_{\qspdeg-1}, \phi_\qspdeg - \frac{\pi}{4}, \underbrace{0, \cdots, 0}_{l-1}, \frac{\pi}{4})$.
\end{definition}
\begin{theorem}
    \label{thm:lift}
    Given a set of symmetric phase factors $\Phi$ and a nonnegative integer $l$, its $l$-padded phase factors preserve the real part of the upper-left component of the QSP unitary matrix, \ie, $\qspobj{\Phi}{x} = \qspobj{\Phi_l}{x}, \forall x \in [-1, 1]$.
\end{theorem}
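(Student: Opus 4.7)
The plan is to rewrite $U_{\Phi_l}(x)$ as a similarity transform of $U_\Phi(x)$ by a simple real rotation, and then combine this with the explicit form of $U_\Phi$ from \cref{thm:qsp} together with the reality of $Q$ from \cref{thm:inv}. Writing $Z(\psi):=e^{\I\psi\sigma_z}$ and $W:=W(x)=e^{\I\arccos(x)\sigma_x}$, I would first expand the definition of $U_{\Phi_l}$ block by block. At the left boundary, the leading $\pi/4$, the $l-1$ zeros, and the shifted $\phi_0-\pi/4$ combine using $Z(\phi_0-\pi/4)=Z(-\pi/4)Z(\phi_0)$ to give $Z(\pi/4)W^{l-1}\cdot WZ(\phi_0-\pi/4)=[Z(\pi/4)W^l Z(-\pi/4)]\,Z(\phi_0)$. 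An analogous manipulation at the right boundary produces a trailing factor $[Z(-\pi/4)W^l Z(\pi/4)]$, while the remaining middle piece reassembles into $U_\Phi(x)=Z(\phi_0)\prod_{k=1}^{\qspdeg}[WZ(\phi_k)]$. Thus
\begin{equation*}
U_{\Phi_l}(x)=\bigl[Z(\tfrac{\pi}{4})W^l Z(-\tfrac{\pi}{4})\bigr]\,U_\Phi(x)\,\bigl[Z(-\tfrac{\pi}{4})W^l Z(\tfrac{\pi}{4})\bigr].
\end{equation*}

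Next I would simplify $R:=Z(\pi/4)W^l Z(-\pi/4)$. With $\theta:=\arccos x$, the expansion $W^l=\cos(l\theta)I+\I\sin(l\theta)\sigma_x$ combined with the Pauli identity $Z(\pi/4)\sigma_x Z(-\pi/4)=-\sigma_y$ gives
\begin{equation*}
R=\cos(l\theta) I-\I\sin(l\theta)\sigma_y=\begin{pmatrix}T_l(x) & -\sqrt{1-x^2}\,\seccheby_{l-1}(x)\\ \sqrt{1-x^2}\,\seccheby_{l-1}(x) & T_l(x)\end{pmatrix},
\end{equation*}
using $\cos(l\theta)=T_l(x)$ and $\sin(l\theta)=\sqrt{1-x^2}\,\seccheby_{l-1}(x)$. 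This is a real orthogonal rotation, and a direct check shows $Z(-\pi/4)W^l Z(\pi/4)=\cos(l\theta)I+\I\sin(l\theta)\sigma_y=R^\top=R^{-1}$. Hence $U_{\Phi_l}(x)=R\,U_\Phi(x)\,R^{-1}$.

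For the last step I would invoke \cref{thm:inv}: the assumed inversion symmetry of $\Phi$ forces $Q\in\RR[x]$, so \cref{thm:qsp} gives
\begin{equation*}
U_\Phi(x)=\begin{pmatrix}P(x) & \I Q(x)\sqrt{1-x^2}\\ \I Q(x)\sqrt{1-x^2} & P^*(x)\end{pmatrix}.
\end{equation*}
Setting $\alpha:=T_l(x)$ and $\beta:=\sqrt{1-x^2}\,\seccheby_{l-1}(x)$, a direct $2\times2$ multiplication yields
\begin{equation*}
\bigl(R\,U_\Phi(x)\,R^{-1}\bigr)_{11}=\alpha^2 P(x)+\beta^2 P^*(x)-2\I\alpha\beta Q(x)\sqrt{1-x^2}.
\end{equation*}
Since $\alpha$, $\beta$, $Q(x)$ and $\sqrt{1-x^2}$ are all real and $\alpha^2+\beta^2=1$, taking real parts gives $\Re[(R U_\Phi R^{-1})_{11}]=\Re[P(x)]=\qspobj{\Phi}{x}$, which is the desired identity. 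The only delicate step is the bookkeeping of the boundary $\pm\pi/4$ offsets so that the middle block reassembles exactly into $U_\Phi(x)$; once that is done the rest reduces to elementary SU(2) algebra.
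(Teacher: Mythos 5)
Your proof is correct and follows essentially the same route as the paper's: both reduce $U_{\Phi_l}$ to a two-sided $W^l$-padding of $U_\Phi$ and then use the reality of $Q$ (from \cref{thm:inv}) together with $T_l(x)^2+(1-x^2)\seccheby_{l-1}(x)^2=1$. The only cosmetic difference is that you absorb the boundary $\pm\pi/4$ phases into $W^l$ to form a real rotation and track the real part of the $(1,1)$ entry directly, whereas the paper invokes \cref{lma:rot} to pass to the imaginary part of $\langle 0|W^l U_\Phi(x) W^l|0\rangle$; the underlying algebra is identical.
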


\begin{proof}
    Using \cref{lma:rot}, it is equivalent to prove the equality 
    $$\Im\left[ \langle 0 | U_\Phi(x) | 0 \rangle \right] = \Im\left[ \langle 0 | W(x)^l U_\Phi(x) W(x)^l | 0 \rangle \right]$$ 
for symmetric phase factors $\Phi$.  Insert the resolution of identity,
    \begin{equation*}
        \begin{split}
            \text{r.h.s.} =&  \Im \left[ T_l(x)^2 P(x) - 2(1-x^2)\seccheby_{l-1}(x)T_l(x)Q(x) \right.\\
            &\left. - (1-x^2) \seccheby_{l-1}(x)^2 P^*(x) \right]\\
             =& (T_l^2(x)+(1-x^2)\seccheby_{l-1}^2(x)) \Im[P(x)] \\
             =&  \Im[P(x)]= \text{l.h.s.}
        \end{split}
    \end{equation*}
    Here we have used  $Q \in \RR[x]$ according to \cref{thm:inv}.
\end{proof}

To demonstrate the usage of this phase  padding procedure, we consider the approximation of  $\cos(\tau x)/2$, namely, the real part of \cref{eq:num-jacobi} scaled by a constant factor $2$. First, an integer $\qspdeg_0$ is chosen such that the truncated series up to $\qspdeg_0$ is a rough approximation of $\cos(\tau x)/2$. Meanwhile, the corresponding phase factors are solved by optimization. Then we gradually increase the size of the problem by an even number $l$, \ie,  adding $l/2$ more terms of higher order polynomials. In order to reuse the phase factors, the initial guess in step $k$ is lifted from the phase factors solved in the previous step, \ie, the polynomial approximation of degree $\qspdeg_0+(k-1)l$. The procedure is repeated until the degree meets a maximal criterion $\qspdeg_1$, which generates an accurate polynomial approximation of $\cos(\tau x)/2$. 

The parameters in numerical implementations are set to be $\tau=500,\,\qspdeg_0=500,\,l=10,\,\qspdeg_1=600$. The $L^\infty$ error before the optimization (\ie, only using phase factor padding) and after the optimization in each step is shown in \cref{fig:padding}, while \cref{tab:padding} compares the computational cost between optimizations initiated with and without padding. We observe that the polynomial given by the lifted phase factors is already close to the target polynomial. This means that the lifted phase factors provide a good initial guess close to the global minimum. 

\begin{figure}[htbp]
    \centering
    \includegraphics[width=6.5cm]{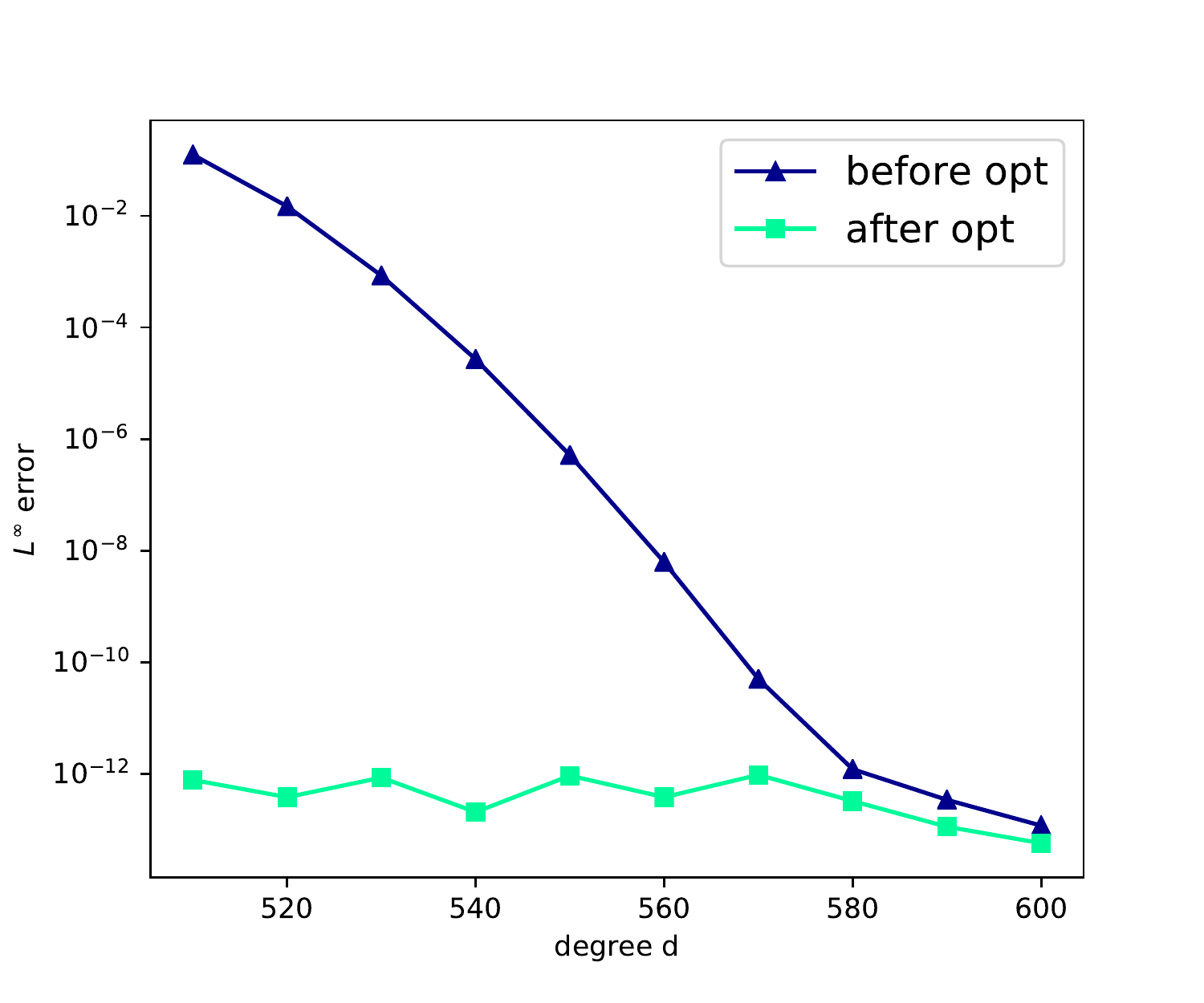}
    \caption{$L^\infty$ error between the polynomial obtained from the lifted phase factors, and the target polynomial, as a function of the degree $\qspdeg$ of the latter. Blue triangles represent the error before optimization, and green squares represent the error after optimization. The target polynomial here is the truncated polynomial of Jacobi-Anger expansion of degree $\qspdeg$. }
    \label{fig:padding}
\end{figure}

\begin{table}[htbp]
    \centering
    \begin{tabular}{|c|ccccc|}\hline
      $\qspdeg$ & 510 & 520 &530& 540& 550 \\ \hline
      with padding &19.9 & 19.7& 16.9&  16.4& 12.2  \\
      without padding & 21.8 & 21.2& 22.5& 22.6& 23.5  \\ \hline
     $\qspdeg$ & 560 & 570 &580& 590& 600 \\ \hline
    with padding& 9.37&4.69 & 3.18& 3.17 &3.19 \\ 
    without padding &24.2&26.1& 28.5& 28.3& 27.7  \\ \hline
    \end{tabular}
    \caption{CPU times for optimizations initiated with and without phase  padding (see text). The target polynomial here is the truncated polynomial of Jacobi-Anger expansion of degree $\qspdeg$. }
    \label{tab:padding}
\end{table}

\FloatBarrier

\section{Discussion}

We have demonstrated that using an optimization based approach, we can efficiently and accurately evaluate the phase factors needed to build QSP circuits for generation of unitary representations of non-unitary operations. Taken together with the QSP formalism of Refs. \cite{LowChuang2017,GilyenSuLowEtAl2018}, this approach now provides efficient and accurate constructive procedures to implement QSP and thereby removes a crucial bottleneck for the application of QSP in quantum algorithms.  We expect that our method 
will be useful for a wide range of matrix functions of interest to quantum algorithms, including the broad classes of Hamiltonian simulation, generation of thermal states, and linear algebra problems.  The optimization approach was found to be superior to previous direct methods that rely on a reduction procedure in which numerical errors are accumulated and amplified.  Instead of employing a reduction procedure, our approach is based on optimization of a distance function that quantifies the difference between the target polynomial and the QSP representation of this, with the QSP phases as variable parameters.  We identified two key features for success of the optimization based method: first, the choice of the initial guess, and second, preservation of the symmetry structure of the phase factors. We found that a simple choice of the initial guess can be surprisingly effective, despite the complexity of the global landscape of the objective function. This indicates that a better understanding of the local energy landscape connecting the initial guess to the optimal phase factors is needed.  Our study also reveals the connection between two seemingly unrelated objects in the QSP construction, namely, the decay of phase factors from the center, and the decay of the Chebyshev coefficients of the target function. More precise characterization of this connection will be a useful future research direction, together with further work to understand the energy landscape of the objective function.

\vspace{1em}
\textbf{Acknowledgment}
This work was partially supported by a Google Quantum Research Award (Y.D.,L.L.,B.W.), by the Quantum Algorithm Teams Program under Grant No. DE-AC02-05CH11231 (L.L. and B.W.), and by Department of Energy under Grant No. DE-SC0017867 (L.L.). X.M. thanks the Office of international relations, Peking University, Beijing, China for partial funding of an exchange studentship at the University of California, Berkeley. We thank Robert Kosut, Nathan Wiebe, and Yu Tong for discussion. Y.D. and X.M. contributed equally to this work.

\bibliographystyle{abbrvnat}

\widetext
\clearpage
\appendix

\section{Uniqueness of phase factors}\label{app:qsp_unique}
We refer to the representation in \cref{eq:qsp-gslw} as appeared in \cite[Theorem 3]{GilyenSuLowEtAl2019} as GSLW's representation. There is another equivalent form proposed in \cite{Haah2019}, which we call it Haah's representation. Under Haah's representation, the QSP unitary is 
\begin{equation}
\begin{split}
    U_{\hat{\Phi}}(x) &= e^{\I \sigma_z \hat{\phi}_0} \prod_{j = 1}^\qspdeg \left[ e^{\I \sigma_z \hat{\phi}_j / 2} W(x) e^{- \I \sigma_z \hat{\phi}_j / 2} \right] = e^{\I \sigma_z (\hat{\phi}_0 + \hat{\phi}_1/2)} \left(\prod_{j = 1}^{\qspdeg-1} W(x) e^{\I \sigma_z (\hat{\phi}_{j+1} - \hat{\phi}_j)/2}\right) W(x) e^{- \I \sigma_z \hat{\phi}_\qspdeg / 2}
\end{split}
\end{equation}
where $\hat{\phi}_{\qspdeg+1} := 0$. Compared to \cref{eq:qsp-gslw}, the transformation between two representations is evident, \ie, $\mathcal{T} : [-\pi, \pi)^{\qspdeg+1} \rightarrow \qspspace{\qspdeg+1},\ \hat{\Phi} \mapsto \Phi$ such that $\phi_0 = \hat{\phi}_0 + \frac{\hat{\phi}_1}{2},\ \phi_j = \frac{\hat{\phi}_{j+1} - \hat{\phi}_j}{2},\ \forall j = 1, \cdots, \qspdeg-1$ and $\phi_\qspdeg=-\hat{\phi}_\qspdeg / 2$.   The irreducible set $\qspspace{\qspdeg+1}$ is defined as the image of this linear transformation. The uniqueness of Haah's phase factors in $[-\pi, \pi)^{\qspdeg+1}$ was proved in \cite[Theorem 2]{Haah2019}, which considers a formally more general class of polynomial functions $\text{U}(1) \rightarrow \text{SU}(2)$. The bijection $\mathcal{T}$ implies the uniqueness of GSLW's phase factors in $\qspspace{\qspdeg+1}$. It is evident that the $2\pi$-periodicity of Haah phase factors lead to a pair of $\pm \pi$ shifts in the corresponding GSLW phase factors. Then, if we define the equivalence relation $\Phi \sim \Psi$ when $\phi_k = \psi_k, \forall k \neq i, j$ and $\phi_i = \psi_i + \pi, \phi_j = \psi_j - \pi$, the irreducible set is the quotient space $\qspspace{\qspdeg+1} \equiv [-\pi, \pi)^{\qspdeg+1} / \sim$.

\section{Reducing one ancilla qubit for representation of real polynomials}\label{app:qsp_real_save}

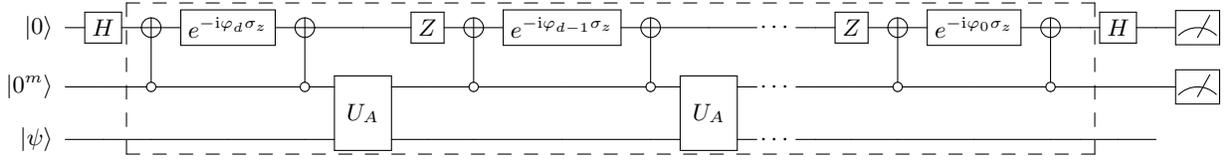
\begin{figure*}[htb]
\begin{center}
\[
\Qcircuit @C=0.8em @R=1.em {
 \lstick{\ket{0}}& \gate{H} & \targ & \gate{e^{-\I \varphi_\qspdeg \sigma_z}} & \targ & \qw &\gate{Z}& \targ & \gate{e^{-\I \varphi_{\qspdeg-1} \sigma_z}} & \targ & \qw & \qw &\raisebox{0em}{$\cdots$}&&\qw   &\gate{Z}&\targ & \gate{e^{-\I \varphi_0 \sigma_z}} & \targ & \qw& \gate{H}&\qw &\meter   \\
\lstick{\ket{0^m}}& \qw  &\ctrlo{-1} & \qw  & \ctrlo{-1} & \multigate{1}{U_A} & \qw& \ctrlo{-1} & \qw & \ctrlo{-1} & \multigate{1}{U_A} &\qw &\raisebox{0em}{$\cdots$} &&\qw & \qw   &\ctrlo{-1} & \qw & \ctrlo{-1} & \qw& \qw &\qw &\meter\\
\lstick{\ket{\psi}}& \qw &\qw &\qw&\qw &\ghost{U_A} &\qw&\qw&\qw&\qw& \ghost{U_A}&\qw &\raisebox{0em}{$\cdots$} &&\qw&\qw&\qw&\qw   & \qw&\qw& \qw &\qw 
\gategroup{1}{3}{3}{20}{1.2em}{--}
}
\]
\end{center}
\caption{Quantum circuit for quantum signal processing of real matrix polynomials with a Hermitian block-encoding $U_A$. }
\label{fig:qsp_circuit_mod}
\end{figure*}

We explain here why the additional ancilla qubit needed for representing real polynomials  in \cref{sec:matrixpolynomial}, case 1 as a result of the linear combination of two QSP circuits, is in fact not needed and can be avoided. Specifically, this ancilla qubit can  be  combined with the first ancilla qubit in \cref{fig:qsp_circuit}. To see why this is the case, note that the phase factors for $U_{-\Phi}$ in \cref{eqn:neg_phi_varphi_relation} can be obtained by taking the phase factors for $U_{\Phi}$ in \cref{eqn:phi_varphi_relation}, and perform the mapping $\varphi_i\mapsto -\varphi_i+\pi(1-\delta_{i\qspdeg}),i=0,\ldots,d$. In other words, we negate $\varphi_i$ and add $\pi$ to all but the $\qspdeg$-th entry. Negating the phase can be implemented by feeding $\ket{1}$ instead of $\ket{0}$ to the signal state, and adding $\pi$ to the phase can be implemented via a $\sigma_z$ gate associated with $\phi_0,\ldots,\phi_{\qspdeg-1}$. 

We may verify that by slightly modifying \cref{fig:qsp_circuit}, the circuit in the box with dashed line in \cref{fig:qsp_circuit_mod} in fact implements
\[
\ket{0}\bra{0} \otimes U_{\Phi} + \ket{1}\bra{1} \otimes U_{-\Phi}, 
\]
which is the select oracle, \cref{eqn:select_oracle}. Therefore, using the Hadamard gate as the prepare oracle (\cref{eqn:prepare_oracle}) as before, the circuit \cref{fig:qsp_circuit_mod} provides a $(1,m+1,0)$-block-encoding of $f(A/\alpha)$, which saves one ancilla qubit.

\section{Quantum signal processing with a non-Hermitian block-encoding matrix}\label{app:qsp_general_block_encode}

Let $A$ be an $n$-qubit Hermitian matrix, but its $(\alpha,m,0)$-block-encoding $U_A$ is not Hermitian. We can still perform QSP by introducing an additional ancilla qubit. To this end, we first generate an $(\alpha,m+1,0)$-block-encoding  of $A$ that is Hermitian. Define an $(m+n+1)$-qubit controlled block-encoding as
\begin{equation}
V^\prime_A := |0\rangle\langle 0|\otimes U_A+| 1\rangle\langle 1| \otimes U_A^{\dagger},
\label{eqn:general_control_UA}
\end{equation}
which uses both $U_A$ and $U_A^{\dag}$. 
We also introduce  the swap operation $S := \sigma_x \otimes I_m$. Then
\begin{equation}
U_A^{\prime}:=(S\otimes I_n)V_A^{\prime}=|1\rangle\langle 0|\otimes U_A+| 0\rangle\langle 1| \otimes U_A^{\dagger}
\label{eqn:}
\end{equation}
is Hermitian. Define an $(m+1)$-qubit signal state for block-encoding
 \begin{equation}
|G\rangle := \ket{+}\ket{0^m}=\frac{1}{\sqrt{2}} (|0\rangle + |1\rangle)|0^m\rangle,
\label{eqn:general_signal}
\end{equation}
then
\begin{displaymath}
\begin{split}
(\bra{G}\otimes I_n)U_A^{\prime}(\ket{G}\otimes I_n)=&(\bra{G}\otimes I_n)V_A^{\prime}(\ket{G}\otimes I_n)\\
=&\frac12 (\bra{0^m}\otimes I_n)U_A(\ket{0^m}\otimes I_n)+\frac12 (\bra{0^m}\otimes I_n)U_A^{\dag}(\ket{0^m}\otimes I_n)\\
=& \frac12 A+ \frac12 A^{\dag}=A.
\end{split}
\end{displaymath}
In the last equality, we used that $A$ is a Hermitian matrix. This proves that 
$U_A^{\prime}$ is indeed an $(\alpha,m+1,0)$-block-encoding  of $A$.  Define 
\begin{equation}
U^{\prime}_{\Pi}=(2|G\rangle\langle G|-I_{m+1}) \otimes I_{n},
\label{eqn:general_control_UPi}
\end{equation}
we may use Jordan's lemma to simultaneously block-diagonalize $U^{\prime}_{\Pi},U^{\prime}_{A}$. In particular, the matrix representation in \cref{eqn:twoblock_representation} still holds, which provides the qubitization of  $A$. 

Then QSP representation in \cref{eqn:QSP_representation_1,eqn:QSP_representation_2} can be directly obtained by substituting $U_{\Pi}\to U^{\prime}_{\Pi}, U_{A}\to U^{\prime}_{A}$. The circuit is given in \cref{fig:general_qsp_circuit}.
In the second line, the Hadamard gate converts the $\ket{+}$ state in the signal state into $\ket{0}$ and back in order to apply the $(m+2)$-qubit Toffoli gate. The swap operation can be implemented via a single $\sigma_x$ gate. The last Hadamard gate in the second line is not present, in order to measure in the $\ket{\pm}$ basis set according to the signal state $\ket{G}$.

\begin{figure*}[htb]
\begin{center}
\[
\Qcircuit @C=0.7em @R=1.0em {
 \lstick{\ket{0}}& \qw & \targ & \gate{e^{-\I \varphi_\qspdeg \sigma_z}} & \targ & \qw & \qw & \qw & \qw & \qw & \raisebox{0em}{$\cdots$}&&\qw &\qw & \targ & \gate{e^{-\I \varphi_0 \sigma_z}} & \targ & \qw & \qw &\qw \\
\lstick{\ket{+}}& \gate{H} & \ctrlo{-1} & \qw & \ctrlo{-1} & \gate{H} & \ctrlo{1} & \ctrl{1} & \gate{\sigma_x} & \qw & \raisebox{0em}{$\cdots$}&& \qw &\gate{H} & \ctrlo{-1} & \qw & \ctrlo{-1} & \qw & \qw & \meter\\
\lstick{\ket{0^m}}& \qw & \ctrlo{-1} & \qw & \ctrlo{-1} & \qw & \multigate{1}{U_A} & \multigate{1}{U_A^{\dag}}  & \qw &\qw & \raisebox{0em}{$\cdots$}&&\qw & \qw & \ctrlo{-1} & \qw & \ctrlo{-1} & \qw & \qw &\meter\\
\lstick{\ket{\psi}}&\qw&\qw&\qw&\qw&\qw&\ghost{U_A} &\ghost{U_A^{\dag}}& \qw&\qw&\raisebox{0em}{$\cdots$} &&\qw&\qw&\qw& \qw&\qw&\qw&\qw&\qw
\gategroup{1}{2}{4}{9}{2.2em}{--}
}
\]
\end{center}
\caption{Quantum circuit for quantum signal processing with a non-Hermitian block-encoding matrix. The circuit in the box enclosed by the dashed line should be repeated $\qspdeg$ times, each time with a different phase factor. The last Hadamard gate in the second line is removed if measurements are to be made in the $\ket{\pm}$ basis set. }
\label{fig:general_qsp_circuit}
\end{figure*}
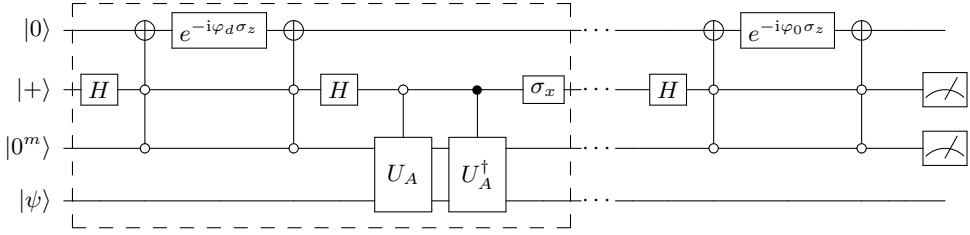

\section{Proof of \cref{thm:optmodel}}
\label{app:proof_cheby}

We first review some basic facts of the Chebyshev polynomial. The Chebyshev polynomials are two sequences of polynomials which can be defined by trigonometric functions. For each $\qspdeg \in \NN$ and $x \in [-1, 1]$, the Chebyshev polynomial of the first kind is defined as $T_\qspdeg(x) = \cos(\qspdeg\arccos(x))$ and that of the second kind is $\seccheby_\qspdeg(x) = \sin((\qspdeg+1)\arccos(x)) / \sin(\arccos(x))$. Both $T_\qspdeg$ and $\seccheby_\qspdeg$ are polynomials of degree $\qspdeg$. We will focus on the properties of Chebyshev polynomials of the first kind in the following context and call $T_\qspdeg$'s Chebyshev polynomials for simplicity. Define the weighted inner product as $(f, g)_w := \int_{-1}^1 f(x) g(x) \frac{\ud x}{\sqrt{1-x^2}}$ on the space $L^2_w([-1, 1])$. Then Chebyshev polynomials are orthogonal polynomials on $[-1, 1]$ with respect to the inner product $(\cdot, \cdot)_w$, and  form a complete basis on the space $L^2_w([-1, 1])$.
\begin{lemma}
    Any function $g \in L^2_w([-1, 1])$ can be uniquely expressed as a series of Chebyshev polynomials,
    $$g(x) = \sum_{j \in \NN} c_j T_j(x), \quad \text{where} \ c_j = \frac{2 - \delta_{j0}}{\pi} (g, T_j)_w.$$
\end{lemma}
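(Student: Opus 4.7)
The plan is to reduce the claim to a classical fact about Fourier cosine series via the standard substitution $x = \cos\theta$. First I would introduce the map $U : L^2_w([-1,1]) \to L^2([0,\pi])$ defined by $(U g)(\theta) = g(\cos\theta)$. Computing $\int_{-1}^{1} |g(x)|^2 \frac{\ud x}{\sqrt{1-x^2}} = \int_0^\pi |g(\cos\theta)|^2 \ud\theta$ shows that $U$ is an isometric isomorphism of Hilbert spaces. Under $U$ the Chebyshev polynomials become pure cosines: $(U T_j)(\theta) = \cos(j\theta)$, which is just the defining identity $T_j(\cos\theta) = \cos(j\theta)$.

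Next I would transfer the inner products. For $j,k \in \NN$,
\begin{equation*}
(T_j, T_k)_w = \int_0^\pi \cos(j\theta)\cos(k\theta)\,\ud\theta = \frac{\pi}{2-\delta_{j0}}\,\delta_{jk},
\end{equation*}
by the standard product-to-sum identity. This establishes orthogonality and fixes the normalization: if $g = \sum_j c_j T_j$ in $L^2_w$, then taking the inner product with $T_j$ yields $c_j = \tfrac{2-\delta_{j0}}{\pi}(g,T_j)_w$. Uniqueness of the coefficients follows immediately, since any two expansions of $g$ have coefficients given by the same formula.

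The remaining and only nontrivial step is existence, i.e.\ completeness of $\{T_j\}_{j \in \NN}$ in $L^2_w([-1,1])$. Via $U$ this is equivalent to completeness of $\{\cos(j\theta)\}_{j \in \NN}$ in $L^2([0,\pi])$, which I would quote as a classical result: any $f \in L^2([0,\pi])$ admits an even extension to $L^2([-\pi,\pi])$, whose full Fourier series converges in $L^2$ and involves only cosines by evenness, yielding $f(\theta) = \sum_{j \in \NN} a_j \cos(j\theta)$ in $L^2([0,\pi])$. Pulling back through $U^{-1}$ produces the desired Chebyshev expansion of $g$ with convergence in $L^2_w$, and matching against the orthogonality computation above identifies $a_j$ with $c_j$.

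The main (and essentially only) obstacle is the completeness step, and it is standard once one notices that the change of variable $x = \cos\theta$ identifies Chebyshev expansion on $[-1,1]$ with Fourier cosine expansion on $[0,\pi]$; everything else is a direct bookkeeping computation using the defining identity $T_j(\cos\theta) = \cos(j\theta)$ and the orthogonality of cosines.
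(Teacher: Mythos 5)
Your proof is correct and follows essentially the same route as the paper, which simply remarks that the substitution $x \to \cos\theta$ turns the Chebyshev series into the Fourier (cosine) series of $g(\cos\theta)$ and treats the rest as standard. You have merely filled in the details (the isometry, the orthogonality normalization, and the appeal to completeness of the cosine system) that the paper leaves implicit.
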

By substituting $x \rightarrow \cos \theta$, the series in terms of Chebyshev polynomial becomes the Fourier series of periodic function $g(\cos\theta)$.
The roots of Chebyshev polynomials are called Chebyshev nodes, \eg, $\left\{ \cos\left( \frac{2 j - 1}{2 \qspdeg} \pi \right) : j = 1, \cdots, \qspdeg \right\}$ are Chebyshev nodes of $T_\qspdeg$. Chebyshev polynomials satisfy the discrete orthogonality
\begin{equation}
\sum_{j = 1}^\qspdeg T_m(x_j) T_n(x_j) = \qspdeg \frac{1 + \delta_{m,0}}{2} \delta_{m,n}
\label{eqn:discrete_chebyshev}
\end{equation}
where $\qspdeg>\lfloor (m+n) / 2 \rfloor$ is an integer and $x_j$'s are Chebyshev nodes of $T_\qspdeg$.

\begin{proof}(\cref{thm:optmodel})
Let $\tilde \qspdeg=\lceil\frac{\qspdeg+1}{2}\rceil$, then $2\tilde \qspdeg > \qspdeg$. Apply the Cauchy--Schwarz inequality, we have
\begin{equation}
\sum_{j=1}^{\tilde \qspdeg} |f(x_j)-f_\Phi(x_j)| \le \sqrt{\tilde \qspdeg\sum_{j=1}^{\tilde \qspdeg}|f(x_j)-f_\Phi(x_j)|^2 } = \sqrt{\tilde \qspdeg^2 L(\phi)} \le \tilde \qspdeg \sqrt \epsilon,   
\end{equation}
where $x_j=\cos\left(\frac{(2j-1)\pi}{4\tilde \qspdeg}\right),\,j=1,\dots,{\tilde \qspdeg}$ are positive roots of Chebyshev polynomial $T_{2\tilde \qspdeg}(x)$. For a fixed integer $t\leq \qspdeg$,
\begin{equation}
\sum_{j=1}^{\tilde \qspdeg} (f(x_j)-f_\Phi(x_j))T_t(x_j)=\sum_{j=1}^{\tilde \qspdeg} \sum_{m=0}^\qspdeg (\alpha_m-\beta_m)T_m(x_j)T_t(x_j)=\sum_{m=0}^{\qspdeg}(\alpha_m-\beta_m)\sum_{j=1}^{\tilde \qspdeg} T_m(x_j)T_t(x_j)=\sum_{m=0}^\qspdeg (\alpha_m-\beta_m)\eta_{mt},
\end{equation}
where by discrete orthogonality in \cref{eqn:discrete_chebyshev} and symmetry, $\eta_{mt} = \tilde \qspdeg \frac{1 + \delta_{m,0}}{2} \delta_{m, t}$. Thus we have 
\begin{equation}
|\alpha_m-\beta_m|\le \frac{2}{\tilde \qspdeg} \sum_{j=1}^{\tilde \qspdeg} |(f(x_j)-f_\Phi(x_j))T_m(x_j)|\le 2\sqrt \epsilon
\end{equation}
for any $m=0,\dots,\qspdeg$.
\end{proof}

\section{Remez Method}\label{sec:remez}
We would like to solve for the best approximation polynomial in terms of the $L^\infty$ norm
\begin{equation}
    f^*=\argmin{f\in \mathbb{R}[x], \deg(f)\le \qspdeg}{\max_{x\in[a,b]}|F(x)-f(x)|}.
\end{equation}
In addition, the approximation problem encountered in this work requires that the approximation polynomial has a definite parity. Hence, we need to focus on the best approximation problem, over the linear combination of a general basis of functions $\{g_1(x),\dots,g_N(x)\}$ other than $\{1,x,\dots,x^d\}$. In this paper we choose $N=\lceil\frac{d+1}{2}\rceil$, where $d$ is the degree of the approximation polynomial we would like to generate.  A series of functions $\{g_1(x),\dots,g_N(x)\}$ is said to satisfy the Haar condition on a set $X$, if each $g_j(x)$ is continuous and for every $N$ points $x_1,\dots,x_N\in X$, the $N$ vectors $v_j:=(g_1(x_j),\dots,g_N(x_j)),\,1\le j\le N$ are linearly independent \cite{haar1917minkowskische}.  As an example, the Haar condition holds if we choose $g_j(x)=T_{2j-1}(x)$ (or $T_{2j-2}(x)$) and $X\subset (0,1]$. Solution of the best approximation problem over such a basis will yield the best odd (even) approximation polynomial. Imposing the Haar condition simplifies the solution of the generalized approximation problem.

The optimal approximate polynomial $f^*$ over the linear combination of functions $\{g_1(x),\dots,g_N(x)\}$ can be found via the Remez exchange method summarized in \cref{alg:remez}, which computes a series of approximation polynomials on discrete sets. The polynomials $f_t$ generated by the Remez algorithm converge uniformly to the optimal polynomial $f^*$ with linear convergence rate. For a large range of functions $F$, the convergence rate can be improved to be quadratic. We refer the reader to \cite[Chapter~3]{cheney1966introduction} for more details related to the Remez method.

\begin{algorithm}[htbp] 
\caption{Remez method for solving the best approximation polynomial}
\label{alg:remez}
\begin{algorithmic} 
\STATE \textbf{Input:} An interval $[a,b]\subset \mathbb{R}$, target function $F$, a basis $\{g_1,\dots,g_N\}$ satisfying the Haar condition, $N+1$ initial points $a\le x_0\le\dots\le x_{N}\le b$.
\vspace{1em}
\STATE Set $t=0$.
\WHILE{stopping criterion is not satisfied}
\STATE Set $t=t+1$.
\STATE Solve the linear equation for $a_1,\dots,a_N$ and $\Delta$
$$
\sum_{j=1}^N a_jg_j(x_k) -F(x_k) =(-1)^k \Delta, \,\,k=0,\dots,N.
$$
\STATE Denote $f_t(x)=\sum_{j=1}^N a_jg_j(x)$ and residual $r(x)=F(x)-f_t(x)$. 
\STATE $r(x)$ has a root $z_j\in(x_{j-1},x_j)$ for $j=1,\dots,N$. Set $z_0=a$ and $z_{N+1}=b$.

\STATE Let $\sigma_j=\mathrm{sgn}(r(x_j))$. Find $y_j=\argmax{y\in[z_j,z_{j+1}]}\sigma_j r(y)$ for each $j=0,\dots,N$. 
\IF{$\|r(x)\|_\infty>\max_j|r(y_j)|$}
\STATE Choose $$y=\argmax{y\in[a,b]} |r(y)|.$$
\STATE Replace a $y_k\in\{y_0,\dots,y_N\}$ by $y$ in such a way that the values of $r(y)$ on the resulting ordered set still satisfies $$r(y_j)r(y_{j+1})<0,\,j=0,\dots,N-1.$$
\ENDIF
\STATE Replace $\{x_0,\dots,x_N\}$ by $\{y_0,\dots,y_N\}$. 

\ENDWHILE
\STATE \textbf{Output:} an approximation to the best approximation polynomial $f_t(x)$
\end{algorithmic}
\end{algorithm}

\section{L-BFGS Algorithm}\label{sec:l-bfgs}
In numerical optimization, the Broyden--Fletcher--Goldfarb--Shanno (BFGS) algorithm is a quasi-Newton method for solving unconstrained optimization problems \cite[Chapter~5]{sun2006optimization}.  The BFGS method stores a dense $n\times n$ matrix to approximate the inverse of Hessian matrix. It updates this approximation by performing a rank two update using gradient information along its trajectory. Limited-memory BFGS (L-BFGS) approximates the BFGS method by using a limited amount of computer memory \cite[Chapter~5]{sun2006optimization}. In particular, it represents the inverse of Hessian matrix implicitly by only a few vectors. For completeness, we summarize the procedure for the L-BFGS method in \cref{alg:lmbfgs}.

\begin{algorithm}[htbp]
\caption{\textbf{Function:} $\phi=\text{L-BFGS}(\phi^0,L,m,B^0)$} 
\label{alg:lmbfgs}
\begin{algorithmic} 
\STATE {\textbf{Input:} Initial point $\phi^0$, objective function $L(\phi)$, a nonnegative integer $m$ and initial approximation of  inverse Hessian $B^0$.\setlength{\belowdisplayskip}{0pt} \setlength{\belowdisplayshortskip}{0pt}\setlength{\abovedisplayskip}{0pt} \setlength{\abovedisplayshortskip}{0pt}}
\vspace{1em}
\STATE Set $t=0$
\WHILE{stopping criteria does not meet}
    \STATE  Compute $g_t=\nabla L(\phi^t)$, set $q=g_t$ 
    \FOR{$i=t-1,\dots,t-m$}
        \STATE Set $\alpha_i=\rho_i s_i^{\top}q$ 
        \STATE $q=q-\alpha_iy_i$
    \ENDFOR
    \STATE  $r=B_0q$ 
    \FOR{$i=t-m,\dots,t-1$}
        \STATE $\beta=\rho_i y_i^{\top}r$ 
        \STATE $r=r+s_i(\alpha_i-\beta)$ 
    \ENDFOR
    \STATE Set search direction $d_t=-r$. 
    \STATE Find a step size $\gamma_t$ using backtracking line search. 
    \STATE {Set \setlength{\belowdisplayskip}{0pt} \setlength{\belowdisplayshortskip}{0pt}
    \setlength{\abovedisplayskip}{0pt}
    \begin{align*}
        \phi^{t+1}=\phi^t+\gamma_td_t,&s_k=\phi^{t+1}-\phi^t,\\ 
        y_t=g_{t+1}-g_t,&\rho_t=\frac{1}{s_t^{\top}y_t}.
    \end{align*}}
    \STATE Set $t=t+1$.
\ENDWHILE
\STATE \textbf{Return:} $\phi^t$
\end{algorithmic}
\end{algorithm}

\section{Implementation details of the direct methods for finding phase factors}\label{app:implement}
For completeness we provide here our implementation of the direct methods for computing phase factors, \ie, the GSLW method and the Haah method. The codes are written in \textsf{Julia} v1.2.0. Although advanced root-finding algorithm with guaranteed performance \cite{pan1996optimal} is suggested in the Haah method \cite{Haah2019}, this is a theoretical result and hard to implement. We utilize instead the function \textsf{roots} in the \textsf{PolynomialRoots} package in \textsf{Julia} to find the roots of polynomials. For both GSLW and Haah methods, we perform calculations with variable precision arithmetic (VPA) using the \textsf{BigFloat} data type. The numbers of bits $R$ used in our numerical tests are empirical  parameters whose values are chosen to minimize CPU time while maintaining accuracy. We first take $R$ to be a large number and then gradually decrease it, until the algorithm fails to yield phase factors with sufficient accuracy. The algorithm is considered as a failure on an example if it cannot generate accurate enough phase factors, \ie, within the specified tolerance, despite the arithmetic being performed under increasingly high precision. Specifically, we choose $R=3\qspdeg$ for the GSLW method and $R=4\qspdeg$ for the Haah method in the Hamiltonian simulation, $R=2\qspdeg$ for both methods for the eigenstate filtering function, and $R=50\kappa$ for both methods in the matrix inversion problem. Here $\qspdeg$ is the degree of the polynomial.  Note that the polynomials encountered in the matrix inversion subsection approximate $1/x$ on $D_\kappa=[1/\kappa,1]$.

Our implementation of the  GSLW algorithm proposed in \cite{GilyenSuLowEtAl2019} is summarized in \cref{alg:GSLW}. To avoid stability issues caused by inaccurate roots, a root $s$ is regarded as a real (pure imaginary) number if the magnitude of its imaginary (real) part is smaller than machine precision ($\epsilon=10^{-16}$ in our implementation). Similarly, $s$ is rounded to $1$ if $|1-s|<\epsilon$. We evaluate the coefficients of $B(x)$ and $C(x)$ with respect to the Chebyshev basis by  discrete fast Fourier transform (FFT) to enhance numerical stability. The reduction procedure in the loop  is also performed based in the Chebyshev basis. We observe that compared to the original implementation of the GSLW method in \cite{GilyenSuLowEtAl2018}, the use of the Chebyshev basis significantly improves the numerical stability of the algorithm. Since in the examples in this work we primarily consider situations where only $P$ is required, we employ a zero polynomial as the input for the second polynomial $\tilde Q$.
  
\begin{algorithm}[htbp] 
\caption{GSLW method}
\label{alg:GSLW}
\begin{algorithmic} 
\STATE \textbf{Input:} A nonnegative integer $\qspdeg$, real polynomials $\tilde P$ and $\tilde Q$ satisfying condition (1) -- (2) of \cref{thm:qsp} and $\tilde P^2(x)+(1-x^2)\tilde Q^2(x)\le 1,\,\forall x\in[-1,1]$. A nonnegative integer $R$ indicates the number of bits on which high-precision arithmetic is performed.
\vspace{1em}
\STATE \textbf{Step 1:} Find the complementary polynomials
\vspace{1em}
\STATE Solve all roots of $1-P^2(x)-(1-x^2)Q^2(x)$. Denote $S$ as the multiset that contains roots of $1-P^2(x)-(1-x^2)Q^2(x)$ with their  algebraic multiplicity. Find the following subsets of S
\begin{align*}
&S_0=\{s\in S|s=0\}, & S_{(0,1)}=\{s\in S|s\in (0,1)\}, \\& S_{[1,\infty)}=\{s\in S|s\in [1,\infty)\}, &
S_I=\{s\in S|\Re(s)=0,\Im(s)>0\},\\ &S_C=\{s\in S|\Re(s)>0,\Im(s)>0\}.&
\end{align*}
\STATE Define
\begin{equation}
\begin{aligned}
Z(x)&=Kx^{|S_0|/2}\prod_{s\in S_{(0,1)}} \sqrt{x^2-s^2} \prod_{s\in S_{[1,\infty)}}(\sqrt{s^2-1}x+\I s\sqrt{1-x^2})\\
&\prod_{s\in S_I}(\sqrt{|s|^2+1}x+\I |s|\sqrt{1-x^2})\prod_{(a+b\I)\in S_C}(cx^2-(a^2+b^2)+\I \sqrt{c^2-1}x\sqrt{1-x^2}),
\end{aligned}
\end{equation}
where $K$ is the absolute value of the coefficient of the highest order of polynomial $1-P^2(x)-(1-x^2)Q^2(x)$, $c=a^2+b^2+\sqrt{2(a^2+1)b^2+(a^2-1)^2+b^4}$.

\STATE $Z(x)$ can be written in the form $Z(x)=B(x)+\I \sqrt{1-x^2}C(x)$ for $B,\,C\in \mathbb{R}[x]$. $B$ and $C$ are required complementing polynomials if $B$ has same parity as $\tilde P$ while $C$ has opposite parity, otherwise we replace $Z(x)$ by $Z(x)(x+\I \sqrt{1-x^2})$.

\STATE Calculate coefficients of $B$ and $C$ and define $P(x):=\tilde P(x)+\I B(x)$, $Q(x):=\tilde Q(x)+\I C(x)$. Then $|P(x)|^2+(1-x^2)|Q(x)|^2=1,\forall x\in[-1,1]$.

\vspace{1em}
\STATE \textbf{Step 2:} Matrix reduction
\vspace{1em}

\STATE Set $t=\qspdeg$.
\WHILE{$\deg(P)>0$}
\STATE Denote coefficients of highest order of $P$ and $Q$ as $p_t$ and $q_{t-1}$, respectively. We have $|p_t|=|q_{t-1}|$. Choose $\phi_t\in \mathbb{R}$ such that $e^{2\I \phi_t}=p_t/q_{t-1}$.

\STATE Replace $P$ and $Q$ by 
\begin{equation}
P_{\text{new}}(x) = e^{-\I \phi_t} \left(xP(x)+\frac{p_t}{q_{t-1}}(1-x^2)Q(x) \right)
\end{equation}
and
\begin{equation}
Q_{\text{new}}(x) = e^{-\I \phi_t} \left(\frac{p_t}{q_{t-1}}xQ(x)-P(x)\right).
\end{equation}
\STATE Set $t=t-1$.
\ENDWHILE
\STATE Choose $\phi_0\in \mathbb{R}$ such that $e^{\I \phi_0}=P(1)$. Set $\phi_j=\frac{\pi}{2}$ for $j=1,3,\dots,t-1$, $\phi_{j'}=-\frac{\pi}{2}$ for $j'=2,4,\dots,t$.
\STATE \textbf{Output:} QSP phase factors $\Phi=(\phi_0,\dots,\phi_\qspdeg)$ satisfying 
\begin{equation}
        U_\Phi(x) = e^{\I \phi_0 \sigma_z} \prod_{j=1}^{\qspdeg} \left[ W(x) e^{\I \phi_j \sigma_z} \right]
        = \left( \begin{array}{cc}
        \tilde P(x)+\I B(x) & (\I\tilde Q(x)- C(x)) \sqrt{1 - x^2}\\
        (\I \tilde Q(x)+ C(x)) \sqrt{1 - x^2} &  \tilde P(x)-\I B(x)
        \end{array} \right)
\end{equation}

\end{algorithmic}
\end{algorithm}

\FloatBarrier

The Haah method proposed in \cite{Haah2019} is summarized in \cref{alg:Haah}. Here a Laurent polynomial of degree $\qspdeg$ represents polynomials having the form $P(z)=\sum_{j=-\qspdeg}^\qspdeg p_jz^j,\,p_j\in\mathbb{C}\,, |p_\qspdeg|+|p_{-\qspdeg}|\neq 0$.  A complex-valued function $P$ is said to be real-on-circle if $P(z)\in\mathbb{R},\,\forall |z|=1$.

Suppose two real polynomials $\tilde P(x)$ and $\tilde Q(x)$ satisfy the requirements of \cref{alg:GSLW}, they can be converted to desired input of \cref{alg:Haah} through the formula
\begin{equation}
A(z)=\tilde P\left(\frac{z+z^{-1}}{2}\right), \quad B(z)=\frac{z-z^{-1}}{2\I}\tilde Q\left(\frac{z+z^{-1}}{2}\right).
\end{equation}
If $A(z)$ and $B(z)$ are generated by this formula, we may only compute $\qspdeg+1$ terms $E_0,E_1(t),\dots,E_\qspdeg(t)$ from coefficients $C_{2k}^{2\qspdeg},k=-\qspdeg,-\qspdeg+2,\dots,\qspdeg$ such that
\begin{equation}
A(z)+\I B(z)\approx \bra{+}E_0E_{1}(z)\cdots E_{\qspdeg}(z)\ket{+}, \quad \forall  z\in U(1).
\end{equation}
\cite{Haah2019} proved that in this case matrix $P_j$ computed in the algorithm are of form 
\begin{equation}
P_j=e^{\I \sigma_z\hat \phi_j/2}\ket{+}\bra{+}e^{-\I \sigma_z\hat \phi_j/2},\,j=1,\dots,2\qspdeg,
\end{equation}
and there exists $\hat \phi_0$ such that $E_0=e^{\I\sigma_z \hat \phi_0}$. The transformation formula between $\hat \Phi=(\hat \phi_0,\dots,\hat \phi_{\qspdeg})$ and QSP phase factors $\Phi$ are given in \cref{app:qsp_unique}. In practice we take $B(z)=0$ since we are not interested in the second polynomial $\tilde Q$. As the rational approximation procedure in Step 1 is designed to bound the error theoretically and hard to implement,  in practice we round the coefficients of $(1-\epsilon/3)A(z)$ and $(1-\epsilon/3)B(z)$ with small magnitude to zero instead of taking rational approximation.

\begin{algorithm}[htbp] 
\caption{Haah method}
\label{alg:Haah}
\begin{algorithmic} 
\STATE \textbf{Input:} 
A real parameter $\epsilon\in(0,0.1)$, a nonnegative integer $R$ indicates the number of bits on which high-precision arithmetic is performed and a complex-valued Laurent polynomial $A(e^{\I\theta})+\I B(e^{\I\theta})=\sum_{k=-\qspdeg}^\qspdeg\zeta_k e^{\I k\theta}$ such that 

(1) $A$ and $B$ are real-on-circle polynomials,

(2) $|A(e^{\I\theta})|^2+|B(e^{\I\theta})|^2\le 1,\,\forall \theta\in \mathbb{R}$, 

(3)$A(e^{\I \theta})$ and $B(e^{\I \theta})$ have definite parity as a function of $\theta$.  
\vspace{1em}
\STATE \textbf{Step 1:}  Denote $\qspdeg=\deg(A)$. Taking rational approximations of each coefficient of $(1-\epsilon/3)A(z)$ and $(1-\epsilon/3)B(z)$ up to error $\frac{\epsilon}{30\qspdeg}$. Coefficients with magnitude smaller than $\frac{\epsilon}{30\qspdeg}$ should be rounded to zero. Parity properties of $A$ and $B$ should be kept during rounding procedure. Denote resulting rational real-on-circle polynomials as $a(z)$ and $b(z)$, respectively. Coefficients of $a$ and $b$ should be store as rational numbers. Denote $n=\deg(a)$ and $n'=\deg(1-a(z)^2-b(z)^2)$.

\STATE \textbf{Step 2:} Find all roots of $1-a(z)^2-b(z)^2$. Denote $S$ as the multiset that contains roots of $1-a(z)^2-b(z)^2$ with their algebraic multiplicity.

\STATE \textbf{Step 3:} Define $e(z)=z^{-\lfloor \frac{n'}{2}\rfloor}\prod_{\substack{s\in S\\|s|<1 }}(z-s)$ and constant $\alpha=\frac{1-a(z)^2-b(z)^2}{e(z)e(1/z)}$. Define complementary polynomials $c(z)$ and $d(z)$ as 
\begin{equation}
c(z)=\sqrt \alpha\frac{e(z)-e(1/z)}{2\I},\,\,\,\,d(z)=\sqrt \alpha\frac{e(z)+e(1/z)}{2}.
\end{equation}
Evaluate $c(z)$ and $d(z)$ on $D=2^{\lceil \log_2(2n+1) \rceil}$ points
\begin{equation}
\{e^{2\pi \I k/D}|k=0,\dots,D-1\}
\end{equation}
by computing $e(z)$ and $e(1/z)$ via factorized form rather than direct expansion.

\STATE \textbf{Step 4:} Compute $2$-by-$2$ complex matrices $C_{2k}^{2n},-n\le k\le n$ such that $$\sum_{k=-n}^n C^{2n}_{2k}z^k=a(z)I+b(z)\I\sigma_x+c(z)\I\sigma_y+d(z)\I\sigma_z$$ via discrete fast Fourier transform.

\STATE \textbf{Step 5:}
\FOR{$m=2n,2n-1,\dots,1$}

Compute 
\begin{equation}
P_m=\frac{(C_m^m)^\dagger C_m^m}{\text{Tr}((C_m^m)^\dagger C_m^m)},\,Q_m=\frac{(C_{-m}^m)^\dagger C_{-m}^m}
{\text{Tr}((C_{-m}^m)^\dagger C_{-m}^m)}.
\end{equation}

Define $E_m(z)=zP_m+z^{-1}(I-P_m)$. Compute coefficients 
\begin{equation}
C_k^{m-1}= C_{k-1}^mQ_m+C_{k+1}^mP_m,\,k=-m+1,-m+3,\dots,m-3,m-1.
\end{equation}
\ENDFOR
\STATE Define $E_0=C_0^0$.

\STATE \textbf{Output:} $E_0,E_1(z),\dots,E_{2n}(z)$ satisfying
\begin{equation}
|A(z^2)+\I B(z^2)-\bra{+}E_0E_{1}(z)\cdots E_{2n}(z)\ket{+}|\le \epsilon, \quad \forall  z\in U(1).
\end{equation}
\end{algorithmic}
\end{algorithm}

\FloatBarrier

\section{Proof of \cref{thm:qspcheby}}\label{sec:proofdecay}
    First consider $\qspdeg^\prime = 2\qspdeg+2$. According to \cref{lma:rot}, it is equivalent to prove
    \begin{equation}
        \norm{\Im\left[\langle 0 | U_{\wt{\Phi}}(x) | 0 \rangle \right] + \prod_{j = 0}^{2\qspdeg+1} \cos\left(\wt{\phi}_j\right) \cdot \sum_{j = 0}^\qspdeg \left(-2\tan\left(\wt{\phi}_j\right)\right) T_{2\qspdeg+1-2j}(x)}_\infty \leq \frac{1}{6} \norm{\wt{\Phi}}_1^3 + \Or\left(\norm{\wt{\Phi}}_1^5\right)
    \end{equation}
    For simplicity, we drop the tilde in phase factors. Divide the QSP phase factors into two groups symmetrically, $\Phi_l = (\phi_0, \cdots, \phi_\qspdeg),\ \Phi_r = \Phi_l^-$, Then, $U_\Phi(x)$ can be expressed in terms of the product of two QSP matrices,
        \begin{equation}
            \label{eq:app:qspcheby:qspexp}
            U_\Phi(x) = U_{\Phi_l}(x) W(x) U_{\Phi_r}(x)=e^{\I \phi_0\sigma_z} \prod_{j = 1}^\qspdeg \left[W(x) e^{\I \phi_j \sigma_z}\right] W(x) \prod_{j = 0}^{\qspdeg-1} \left[ e^{\I \phi_{\qspdeg-j} \sigma_z} W(x) \right] e^{\I \phi_0 \sigma_z}.
        \end{equation}
        Each QSP unitary can be equivalently written as
        \begin{equation}
                U_{\Phi_l}(x) = \left[\prod_{j = 0}^\qspdeg \cos(\phi_j)\right] \left( 1 + \I t_0\sigma_z \right) \prod_{j=1}^\qspdeg \left[ W(x) \left( 1 + \I t_j \sigma_z \right) \right],
        \end{equation}
        where $t_j := \tan(\phi_j) \sim \mathcal{O}(\phi_j)$. Then, the contributions up to $\mathcal{O}(\norm{\Phi}_1^4)$ come from selecting up to three $\sigma_z$'s in the expansion,           
\begin{equation}
        \label{eq:app:qspcheby:qspexptruncate}
        \begin{split}
            \frac{U_{\Phi_l}(x)}{\prod_{j = 0}^\qspdeg \cos(\phi_j)} =& W(x)^\qspdeg + \sum_{j = 0}^\qspdeg \I t_j \sigma_z W(x)^{\qspdeg-2j} - \sum_{j_1 < j_2} t_{j_1} t_{j_2} W(x)^{\qspdeg - 2(j_2 - j_1)}\\
            &- \sum_{j_1 < j_2 < j_3} \I t_{j_1} t_{j_2} t_{j_3} \sigma_z W(x)^{\qspdeg - 2(j_1 + j_3 - j_2)} + \mathcal{O}(\norm{\Phi}_1^4).
        \end{split}
        \end{equation}
Here we have used the following relation repeatedly
\[
W(x)\sigma_z=\sigma_z W(x)^{-1}.
\]

After taking imaginary part of the upper-left component in \cref{eq:app:qspcheby:qspexp}, it is evident that only odd orders in $\phi_j$'s have nonvanishing contributions according to \cref{eq:app:qspcheby:qspexptruncate}.
Furthermore, using that $U_{\Phi_r}=U_{\Phi_l}^{\top}$, we have
\begin{equation}
        \label{eq:expand}
        \begin{split}
            \frac{\Im\left[\left\langle 0 \left|U_\Phi(x) \right| 0 \right\rangle \right]}{\prod_{j = 0}^{2\qspdeg+1} \cos(\phi_j)} =& \sum_{j = 0}^\qspdeg 2 t_j T_{2\qspdeg+1-2j}(x) - \sum_{j = 0}^\qspdeg \sum_{j_1 < j_2} 2 t_j t_{j_1} t_{j_2} T_{2\qspdeg+1 - 2(j+j_2-j_1)}(x)\\
            &- \sum_{j_1 < j_2 < j_3} 2 t_{j_1} t_{j_2} t_{j_3} T_{2\qspdeg + 1 - 2(j_1 + j_3 - j_2)}(x) + \mathcal{O}(\norm{\Phi}_1^5).
        \end{split}
        \end{equation}
        Let $s_j := \sin(\phi_j)$. It implies the expected bound, 
        \begin{equation}
            \begin{split}
                \norm{\Im\left[\langle 0 | U_\Phi(x) | 0 \rangle \right] - \prod_{j = 0}^{2\qspdeg+1} \cos\left(\phi_j\right) \cdot \sum_{j = 0}^\qspdeg 2\tan\left(\phi_j\right) T_{2\qspdeg+1-2j}(x)}_\infty &\leq 2 \left( \frac{1}{2} + \frac{1}{6} \right) \sum_{j_1,j_2,j_3 = 0}^\qspdeg \abs{s_{j_1} s_{j_2} s_{j_3}} + \mathcal{O}(\norm{\Phi}_1^5)\\
                & \le \frac{4}{3} \left( \norm{\Phi}_1 / 2\right)^3 + \Or\left(\norm{\Phi}_1^5\right) = \frac{1}{6} \norm{\Phi}_1^3 + \Or\left(\norm{\Phi}_1^5\right).
            \end{split}
        \end{equation}
    This proves \cref{eqn:decay_bound} for even $\qspdeg^\prime$.
       
    Then prove the case $\qspdeg^\prime = 2\qspdeg+1$. The QSP unitary is again divided symmetrically and we drop the tilde in phase factors for simplicity. Define $\Phi_l = (\phi_0, \cdots, \phi_{\qspdeg-1}),\ \Phi_r = \Phi_l^-$
    \begin{equation}
        \label{eq:app:qspcheby:qspexp:odd}
        U_\Phi(x) = U_{\Phi_l}(x) W(x) e^{\I \phi_\qspdeg \sigma_z} W(x) U_{\Phi_r}(x) = \underbrace{\cos(\phi_\qspdeg) U_{\Phi_l}(x) W(x)^2 U_{\Phi_r}(x)}_{\textcircled{1}} + \underbrace{\I \sin(\phi_\qspdeg) U_{\Phi_l}(x) W(x) \sigma_z W(x) U_{\Phi_r}(x)}_{\textcircled{2}}
    \end{equation}
    Similar to expansion in \cref{eq:app:qspcheby:qspexptruncate}, we conclude the following bounds,
\begin{equation}
        \begin{split}
            & \norm{\Im\left[\langle0|\textcircled{1}|0\rangle\right] - \prod_{j=0}^{2\qspdeg}\cos(\phi_j) \cdot \sum_{j=0}^{\qspdeg-1} 2t_j T_{2\qspdeg-2j}(x)}_\infty \leq 2 \left(\frac{1}{2} + \frac{1}{6}\right)\left(\sum_{j=0}^{\qspdeg-1} \abs{s_j} \right)^3 + \Or\left(\norm{\Phi}_1^5\right) \le \frac{1}{6} \norm{\Phi}_1^3 + \Or\left(\norm{\Phi}_1^5\right),\\
            & \norm{\Im\left[\langle0|\textcircled{2}|0\rangle\right] - \prod_{j=0}^{2\qspdeg}\cos(\phi_j) \cdot \tan(\phi_\qspdeg)}_\infty \leq \frac{s_\qspdeg}{4} \norm{\Phi}_1 ^2 + \frac{s_\qspdeg}{48}\norm{\Phi}_1^4 + \Or(\norm{\Phi}_1^6)  \le \frac{1}{4} \norm{\Phi}_1^3 + \Or\left(\norm{\Phi}_1^5\right).
        \end{split}
    \end{equation}
    Using the triangle inequality, we prove \cref{eqn:decay_bound} when $\qspdeg^\prime$ is odd.

\end{document}